\newtheorem{theorem}{Theorem}[section]
\newtheorem{lemma}[theorem]{Lemma}
\newtheorem{proposition}[theorem]{Proposition}
\newtheorem{definition}[theorem]{Definition}
\newtheorem{conjecture}{Conjecture}
\newcommand{\diam}{\mathsf{diam}}
\newcommand{\depth}{\mathsf{depth}}
\newcommand{\funnels}{\mathsf{Funnels}}
\newcommand{\pyramids}{\mathsf{Pyramids}}
\newcommand{\len}{\mathsf{len}}
\newcommand{\ball}{\mathsf{ball}}
\newcommand{\dil}{\mathsf{dil}}
\newcommand{\distortion}{\mathsf{distortion}}
\newcommand{\Lip}{\mathsf{Lip}}
\newcommand{\nca}{\mathsf{nca}}
\newcommand{\dem}{\mathsf{dem}}
\newcommand{\capa}{\mathsf{cap}}
\newcommand{\1}{\mathbf{1}}
\newcommand{\Ind}{\mathbf{1}}
\newcommand{\etal}{\emph{et al.}}
\renewcommand{\phi}{\varphi}
\newcommand{\eps}{\varepsilon}
\title{Non-positive curvature, and the planar embedding conjecture}
\author{Anastasios Sidiropoulos\thanks{Department of Computer Science, University of Illinois at Urbana-Champaign; \url{http://www.sidiropoulos.org/}; \url{sidiropo@illinois.edu}.  Supported in part by David and Lucille Packard Fellowship and by NSF grants CCF-0915984 and CCF-0915519.}}
\begin{document}

\setcounter{page}{0}
\maketitle

\begin{abstract}
The planar embedding conjecture asserts that any planar metric admits an embedding into $L_1$ with constant distortion.
This is a well-known open problem with important algorithmic implications, and has received a lot of attention over the past two decades.
Despite significant efforts, it has been verified only for some very restricted cases, while the general problem remains elusive.

In this paper we make progress towards resolving this conjecture.
We show that every planar metric of non-positive curvature admits a constant-distortion embedding into $L_1$.
This confirms the planar embedding conjecture for the case of non-positively curved metrics.

\end{abstract}

\thispagestyle{empty}
\newpage

\section{Introduction}
\label{sec:intro}

If $(X,d_X),(Y,d_Y)$ are metric spaces,
and $f : X \to Y$ is injective, 
the \emph{distortion} of $f$ is defined to be $\distortion(f) = \|f\|_\Lip \cdot \|f^{-1}\|_\Lip$, where
$\|f\|_\Lip = \sup_{x \neq y \in X} \frac{d_Y(f(x),f(y))}{d_X(x,y)}$.
For any metric space $(X,d)$, we use $c_1(X,d)$ to denote
the {\em $L_1$ distortion of $(X,d)$}, i.e.
the infimum over all numbers $D$ such that
$X$ admits an embedding into $L_1$ with distortion $D$.
For a graph $G=(V,E)$ we write
$c_1(G) = \sup c_1(V, d)$ where $d$ ranges over all shortest-path metrics supported on $G$,
and for a family $\mathcal F$ of graphs, we write $c_1(\mathcal F) = \sup_{G \in \mathcal F} c_1(G)$.
Thus for a family $\mathcal F$ of finite graphs, $c_1(\mathcal F) \leq D$ if and only if
every geometry supported on a graph in $\mathcal F$ embeds into $L_1$
with distortion at most $D$.

In the seminal works of Linial-London-Rabinovich \cite{LLR95}, and later
Aumann-Rabani \cite{AR98} and Gupta-Newman-Rabinovich-Sinclair \cite{GNRS99},
the geometry of graphs is related to the classical
study of the relationship between flows and cuts.

A {\em multi-commodity flow instance in $G$}
 is specified
by a pair of non-negative mappings $\capa : E \to \mathbb R$ and $\dem : V \times V \to \mathbb R$.
We write $\mathsf{maxflow}(G; \capa, \dem)$ for the value of the {\em maximum concurrent flow} in this instance,
which is the maximal value $\varepsilon$
such that $\varepsilon \cdot \dem(u,v)$ can be simultaneously routed between every pair $u,v \in V$
while not violating the given edge capacities.

A natural upper bound on $\mathsf{maxflow}(G; \capa, \dem)$ is given by the {\em sparsity}
of any cut $S \subseteq V$:
\begin{equation}\label{eq:sparse}
\frac{\sum_{uv \in E} \capa(u,v) |\1_S(u)-\1_S(v)|}{\sum_{u,v \in V} \dem(u,v) |\1_S(u)-\1_S(v)|},
\end{equation}
where $\1_S : V \to \{0,1\}$ is the indicator function for membership in $S$.
We write $\mathsf{gap}(G)$ for the maximum gap between the value of the flow
and the upper bounds given by \eqref{eq:sparse}, over all multi-commodity flow instances on $G$.
This is the {\em multi-commodity max-flow/min-cut gap for $G$}.
The fundamental connection between embeddings into $L_1$ and multi-commodity flows is captured in the following result.

\begin{theorem}[\cite{LLR95, GNRS99}]\label{thm:gap}
For every graph $G$, $c_1(G) = \mathsf{gap}(G)$.
\end{theorem}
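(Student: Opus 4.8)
The plan is to prove the two inequalities $\mathsf{gap}(G)\le c_1(G)$ and $c_1(G)\le\mathsf{gap}(G)$ separately; throughout I take $G$ finite (as in the intended application and in the cited references), so that the relevant linear programs attain their optima. Both directions rest on the classical fact that a finite metric space $(V,d)$ embeds into $L_1$ with distortion at most $D$ if and only if there are coefficients $\lambda_S\ge 0$, one per cut $S\subseteq V$, with $d(u,v)\le\sum_S\lambda_S\,\delta_S(u,v)\le D\cdot d(u,v)$ for all $u,v$, where $\delta_S(u,v):=|\1_S(u)-\1_S(v)|$ is the cut pseudometric --- equivalently, the $L_1$ metrics form exactly the cone generated by the $\delta_S$. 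I also use the harmless normalization that, given a shortest-path metric $d=d_\len$, one may replace $\len(uv)$ by $d(u,v)$ without changing $d$, so that $\len$ is the restriction of $d$ to the edges; with this in force, any pseudometric dominated by $d$ on the edges is dominated by $d$ on all pairs (bound it along a shortest path), which lets me trade ``agreement on edges'' for ``agreement on all pairs''.

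For $\mathsf{gap}(G)\le c_1(G)$: fix a multi-commodity instance $(\capa,\dem)$ on $G$. By LP duality for the maximum concurrent flow, $\mathsf{maxflow}(G;\capa,\dem)=\min_{\ell}\frac{\sum_{e}\capa(e)\ell(e)}{\sum_{u,v}\dem(u,v)\,d_\ell(u,v)}$, the minimum over nonnegative edge weightings $\ell$; let $\ell^*$ attain it. Since $d_{\ell^*}$ is a shortest-path metric supported on $G$, it embeds into $L_1$ with distortion at most $c_1(G)$, so there is $f:V\to L_1$ with $d_{\ell^*}(u,v)\le\|f(u)-f(v)\|_1\le c_1(G)\,d_{\ell^*}(u,v)$. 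Writing $\|f(u)-f(v)\|_1=\int\delta_{S_t}(u,v)\,dt$ as a superposition of cut pseudometrics and using $\frac{\int g}{\int h}\ge\min_t\frac{g(t)}{h(t)}$ for nonnegative $h$, I extract a single cut $S$ whose sparsity \eqref{eq:sparse} is at most $\frac{\sum_{uv\in E}\capa(uv)\|f(u)-f(v)\|_1}{\sum_{u,v}\dem(u,v)\|f(u)-f(v)\|_1}$, and then, bounding $\|f(u)-f(v)\|_1\le c_1(G)\,d_{\ell^*}(u,v)\le c_1(G)\,\ell^*(uv)$ in the numerator and $\|f(u)-f(v)\|_1\ge d_{\ell^*}(u,v)$ in the denominator, this is at most $c_1(G)\,\frac{\sum_{e}\capa(e)\ell^*(e)}{\sum_{u,v}\dem(u,v)\,d_{\ell^*}(u,v)}=c_1(G)\,\mathsf{maxflow}(G;\capa,\dem)$. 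Taking the supremum over instances gives $\mathsf{gap}(G)\le c_1(G)$.

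For $c_1(G)\le\mathsf{gap}(G)$: fix a shortest-path metric $d$ on $G$, normalized so that $\len$ restricts $d$ to the edges. Consider the linear program that maximizes $\mu$ over $\lambda_S\ge 0$ subject to $\sum_S\lambda_S\,\delta_S(u,v)\ge\mu\,d(u,v)$ for all pairs $u,v$ and $\sum_S\lambda_S\,\delta_S(u,v)\le d(u,v)$ for all edges $uv$; by the cut-cone characterization and the reduction to edge constraints above, its optimum is $\mu^*=1/c_1(V,d)$. Dualizing, the dual variables split into $y_{uv}\ge 0$ over pairs and $z_{uv}\ge 0$ over edges, and the dual program is: minimize $\sum_{uv\in E}z_{uv}\,d(u,v)$ subject to $\sum_{u,v}y_{uv}\,d(u,v)=1$ and $\sum_{u,v}y_{uv}\,\delta_S(u,v)\le\sum_{uv\in E}z_{uv}\,\delta_S(u,v)$ for every cut $S$. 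Now read $\dem:=y$ and $\capa:=z$: the last family of constraints says exactly that every cut has sparsity \eqref{eq:sparse} at least $1$, whence $\mathsf{maxflow}(G;\capa,\dem)\ge(\text{min sparsity})/\mathsf{gap}(G)\ge 1/\mathsf{gap}(G)$ by definition of $\mathsf{gap}$; on the other hand, plugging $\ell=\len$ into the flow-duality formula (so $d_\len=d$ and $\len(uv)=d(u,v)$) gives $\mathsf{maxflow}(G;\capa,\dem)\le\frac{\sum_{uv\in E}\capa(uv)\,\len(uv)}{\sum_{u,v}\dem(u,v)\,d(u,v)}=\sum_{uv\in E}z_{uv}\,d(u,v)=\mu^*$ by strong duality. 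Hence $\mu^*\ge 1/\mathsf{gap}(G)$, i.e. $c_1(V,d)\le\mathsf{gap}(G)$; the supremum over $d$ then yields $c_1(G)\le\mathsf{gap}(G)$, and together with the first inequality this proves $c_1(G)=\mathsf{gap}(G)$.

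The main obstacle is the bookkeeping in the second direction: casting the optimal-$L_1$-embedding problem as a linear program whose dual is literally a flow/cut instance on $G$, getting the normalization $\sum_{u,v}y_{uv}\,d(u,v)=1$ right, and justifying that it suffices to impose non-expansion only across edges --- all of which rely on the cut-cone description of $L_1$ together with the triangle inequality applied along shortest paths. (For infinite $G$ one would additionally need a compactness argument to secure attainment of the relevant optima.)
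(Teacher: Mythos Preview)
The paper does not include a proof of this theorem; it is stated as a known result from \cite{LLR95,GNRS99} and used only as background motivation for the planar embedding conjecture. There is therefore no ``paper's own proof'' to compare your attempt against.

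That said, your argument is a correct reconstruction of the classical proof from those references. The direction $\mathsf{gap}(G)\le c_1(G)$ is the Linial--London--Rabinovich step: LP duality for the concurrent flow produces a worst-case shortest-path metric, a low-distortion $L_1$ embedding of that metric is written in the cut cone, and the averaging inequality $\min_t g(t)/h(t)\le (\int g)/(\int h)$ extracts a single cut whose sparsity is within $c_1(G)$ of the flow value. The direction $c_1(G)\le\mathsf{gap}(G)$ is the GNRS step: the optimal $L_1$ distortion of $(V,d)$ is cast as the LP you wrote (maximize $\mu$ with cut-cone coefficients $\lambda_S$, non-expansion imposed only on edges), its dual has variables that are literally demands $y$ and capacities $z$ with the cut constraints forcing every sparsity to be at least $1$, and plugging $\ell=\len$ into the flow LP together with strong duality gives $1/\mathsf{gap}(G)\le\mathsf{maxflow}\le\mu^*=1/c_1(V,d)$. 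Your bookkeeping --- the reduction of non-expansion from all pairs to edges via shortest paths, the normalization $\sum_{u,v}y_{uv}\,d(u,v)=1$ coming from the free variable $\mu$, and the two invocations of LP duality --- is all in order.
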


In particular, combined with the techniques of \cite{LR99, LLR95}, this implies that for any graph $G$, 
there exists a $c_1(G)$-approximation for the general Sparsest Cut problem.

\subsection{The planar embedding conjecture}
It has been shown by \cite{LLR95,LR99} that for general graphs, $c_1(G) = \Omega(\log n)$, and there has since been a lot of effort in trying to prove that $c_1(G)$ is bounded by some universal constant for interesting classes of graphs.
The most well-known open case is the so-called \emph{planar embedding conjecture}, summarized in the following:

\begin{conjecture}[Planar embedding conjecture]
For every planar graph $G$, $c_1(G) = O(1)$.
\end{conjecture}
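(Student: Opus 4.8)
The plan is to work through the flow--cut reformulation provided by Theorem~\ref{thm:gap}: since $c_1(G) = \mathsf{gap}(G)$, it suffices to show that for every planar $G$ and every multi-commodity instance $(\capa,\dem)$ on $G$ there is a single cut $S \subseteq V$ whose sparsity \eqref{eq:sparse} is at most a universal constant times $\mathsf{maxflow}(G; \capa, \dem)$; equivalently, in geometric language, it suffices to round an optimal fractional cut metric (the $L_1$-embedding one is trying to build) into one integral cut while losing only a constant factor, using planarity as leverage. I would first record the cases the argument must subsume so as to identify what is genuinely new: the Okamura--Seymour theorem (gap $1$ when all terminals lie on a single face), series-parallel / $K_4$-minor-free graphs (gap $2$), bounded-treewidth and $k$-outerplanar graphs (gap depending only on the parameter), and the general $O(\sqrt{\log n})$ bound obtained from Klein--Plotkin--Rao decompositions. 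The point of all these is that the only known constant bounds come with a parameter that degrades with the ``depth'' of the instance; the goal is an argument whose loss is genuinely parameter-free.

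First I would fix a plane embedding of $G$, peel off outer faces, and write $G$ as a $k$-outerplanar graph for the smallest possible $k$. This exhibits $G$ as a path-like (along each branch of the peeling, tree-like) composition of \emph{layers} $L_1,\dots,L_k$, with consecutive layers meeting along an \emph{interface} that is a disjoint union of cycles. Inside a single layer the terminals effectively sit near a bounded number of faces, so an Okamura--Seymour-type argument should give, for each layer in isolation with its interface cycles marked as boundary, a cut (equivalently an $L_1$-embedding) that is sparse for the flow restricted to that layer, with only a constant loss. The crux is then a \textbf{gluing lemma}: given $L_1$-embeddings of the individual layers that are ``rigid'' on the interface cycles --- say near-isometric on each interface, which is possible since a cycle with its shortest-path metric embeds isometrically into $L_1$ --- one assembles them into a single $L_1$-embedding of $G$ with distortion $O(D)$, independently of $k$. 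This is the $L_1$ analogue of the costless gluing of $L_1$-embeddings across a single shared edge that drives the series-parallel bound; the new difficulty is that here the shared object is a whole cycle, not one edge.

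The hard part --- and where every previous attempt has stalled --- is precisely this gluing across large interfaces. Cut metrics do not compose freely along a cycle: two cuts sparse for neighbouring layers need not agree on how they split the shared cycle, and taking a common refinement multiplies the number of active cuts, and hence the distortion, by a factor growing with $k$. To control this I would exploit planar duality: an interface cycle in $G$ is a cycle (or cut) in the planar dual $G^{*}$, and a global cut in $G$ that is sparse across all layers corresponds to a short cycle family in $G^{*}$, whose minimum cost is governed, by LP duality, by the very flow one started with. Concretely: (i) write the optimal fractional solution as a distribution over cuts; (ii) observe that each cut, viewed in $G^{*}$, is a union of dual cycles, and argue that these families can be made laminar (non-crossing) up to constant loss because the host is planar; (iii) run a ``nearest-laminar-cut'' rounding --- a region-growing argument organised around the layer decomposition --- to extract one integral cut of the required sparsity. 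Step (ii), turning an arbitrary family of dual cycles into a non-crossing one with only constant loss, is the quantitative heart of the matter: it is essentially a strengthening, for arbitrary terminal placements, of the uncrossing phenomenon that makes Okamura--Seymour work for a single face, and I expect proving it, or finding the right substitute, to be the main obstacle.

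Finally I would cross-check the top-down argument against a bottom-up \emph{self-improvement} recursion: writing $g(n)$ for the $L_1$-distortion of $n$-vertex planar metrics, a balanced \emph{geodesic} separator would yield $g(n) \le C\cdot g(2n/3)$ and hence $g(n)=O(1)$, so any obstruction to the gluing lemma must surface as an obstruction to finding or composing geodesic separators in $L_1$ --- a concrete thing to stress-test the construction against. If the gluing lemma resists a direct proof, the natural fallback is to prove it with loss $O(\log k)$ or $O(\log\log n)$, already improving on the current $O(\sqrt{\log n})$ bound, and then iterate the self-improvement step; but the clean statement $c_1(G)=O(1)$ hinges on driving the interface-gluing loss down to a true constant.
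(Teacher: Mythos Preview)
The statement you are attempting to prove is a \emph{conjecture}, not a theorem; the paper does not prove it and explicitly presents it as open. What the paper actually establishes is Theorem~\ref{thm:main}, the special case of planar metrics of non-positive curvature, via a completely different route: a reduction to combinatorial ``funnel'' and ``pyramid'' graphs followed by an explicit random monotone-cut construction tailored to the convexity of geodesics. Your write-up is therefore not comparable to any proof in the paper --- there is none for this statement.

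As for the content of your plan, it is a reasonable research outline but not a proof, and you say as much yourself. The load-bearing step is your ``gluing lemma'' / step~(ii): turning an arbitrary family of dual cycles into a non-crossing one with only constant loss, for arbitrary terminal placements. You call this ``the quantitative heart of the matter'' and ``the main obstacle,'' and then do not prove it; that is the gap. Everything downstream (the laminar rounding, the self-improvement recursion) is conditional on it. In fact this uncrossing step is essentially a reformulation of the conjecture rather than a reduction of it: the reason Okamura--Seymour works is precisely that terminals on a single face force the relevant dual cycles to be nested, and extending that uncrossing to arbitrary terminal placements with $O(1)$ loss is exactly what no one knows how to do. Your fallback of accepting an $O(\log k)$ or $O(\log\log n)$ loss would be interesting progress, but it too is left as an assertion. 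Until the interface-gluing loss is actually bounded, the argument does not close.
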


Despite several attempts on resolving this question, there has only been very little progress.
More specifically, the work of Okamura \& Seymour \cite{Okamura198175} implies that the metric induced on a \emph{single face} of a planar graph embeds with constant distortion into $L_1$.
In \cite{GNRS99} it is shown that $c_1(G)=O(1)$ for any series-parallel, or outerplanar graph $G$.
This result was extended to $O(1)$-outerplanar graphs in \cite{CGNRS06}.
Chakrabarti \etal~\cite{DBLP:conf/focs/ChakrabartiJLV08} obtained constant distortion embeddings of graphs that exclude a $(K_5 \setminus e)$-minor.
Note that even the case of planar graphs of treewidth $3$ remains open.
We remark that the best-known upper bound on $c_1(G)$ for planar graphs is $O(\sqrt{\log n})$, due to Rao \cite{DBLP:conf/compgeom/Rao99}, while the best-known lower bound is 2, due to Lee \& Raghavendra \cite{DBLP:journals/dcg/LeeR10}.

\subsection{Generalizations: The GNRS conjecture}

Gupta, Newman, Rabinovich, and Sinclair \cite{GNRS99} posed the following generalization of the planar embedding conjecture, which seeks to 
{\em characterize} the graph families $\mathcal F$ such that $c_1(\mathcal F) = O(1)$, which by Theorem \ref{thm:gap} also characterizes all graphs with multi-commodity gap bounded by some universal constant:

\begin{conjecture}[GNRS conjecture \cite{GNRS99}]
\label{conj:gnrs}
For every family of finite graphs $\mathcal F$, one has $c_1(\mathcal F) = O(1)$ if and only if $\mathcal F$ forbids some minor.
\end{conjecture}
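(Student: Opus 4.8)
The plan is to treat the two directions separately. The forward implication (``$c_1(\mathcal F)=O(1)$'' forces $\mathcal F$ to forbid some minor) is essentially folklore, so I begin there. Suppose $\mathcal F$ forbids no minor, i.e.\ for every finite graph $H$ there is some $G\in\mathcal F$ with $H$ a minor of $G$. The key fact is that $c_1$ --- equivalently, by Theorem~\ref{thm:gap}, $\mathsf{gap}$ --- is monotone under taking minors: if $H$ is a minor of $G$ then $c_1(H)\le c_1(G)$. This follows from a standard lifting argument: given a near-extremal shortest-path metric on $H$, realize it up to arbitrarily small error as a shortest-path metric on $G$ by assigning infinitesimal length to edges inside branch sets, the matching length to one representative edge between each adjacent pair of branch sets, and a very large length to everything else, and then observe that restricting any $L_1$ embedding of this metric to the branch-set representatives only decreases distortion. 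Applying this with $H=G_n$ a constant-degree expander on $n$ vertices and invoking $c_1(G_n)=\Omega(\log n)$ \cite{LLR95,LR99} gives $c_1(\mathcal F)=\Omega(\log n)$ for every $n$, hence $c_1(\mathcal F)=\infty$.

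For the reverse implication, fix a forbidden minor $H$; one must bound $c_1(G)$ by a constant $C(H)$ for every $H$-minor-free graph $G$. The route is to decompose $G$ via the Robertson--Seymour structure theorem: every $H$-minor-free graph has a tree decomposition whose torsos are $k$-almost-embeddable for $k=k(H)$, i.e.\ obtained from a graph drawn on a surface of Euler genus at most $k$ by attaching at most $k$ vortices of pathwidth at most $k$ along faces and adding at most $k$ apex vertices. Three ingredients then suffice. (i) \emph{Gluing}: bounded-distortion $L_1$ embeddability is preserved under clique-sums over cliques of bounded size --- the $2$-sum case is \cite{GNRS99}, and the bounded-size case should follow by the same cut-counting/flow-routing bookkeeping together with Theorem~\ref{thm:gap} --- so it is enough to embed each torso. (ii) \emph{Surfaces}: a graph of Euler genus $g$ reduces to the planar case, e.g.\ by cutting along a short non-contractible system of loops at a cost depending only on $g$, so this ingredient is exactly the planar embedding conjecture. (iii) \emph{Apices and vortices}: each bounded-pathwidth vortex has bounded treewidth and is $O(1)$-embeddable on its own (as in \cite{GNRS99,CGNRS06}), so the remaining work is to absorb it into the face along which it is attached, and then to add the at most $k$ apex vertices on top of the embedding of the apex-free part via a cone/Fr\'echet-type construction.

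The main obstacle is clear from this outline. Ingredient (ii) already contains the planar embedding conjecture as its base case (planar $=$ $\{K_5,K_{3,3}\}$-minor-free, genus $0$), so the reverse direction is open even in its simplest nontrivial form and any complete proof must resolve that conjecture. Beyond the base case, apex vertices are the notorious difficulty: a single apex can be adjacent to every vertex and reshape the metric globally, and $c_1=O(1)$ is not known even for planar-plus-one-apex graphs, so ingredient (iii) is where a genuinely new idea is needed --- as is the interaction of a vortex with the cyclic order of a face. The theorem of the present paper, that every non-positively curved planar metric has $c_1=O(1)$, is best read as a first foothold on ingredient (ii): it settles the genus-$0$, apex-free, vortex-free case under an additional curvature hypothesis, and suggests that isolating and controlling the non-positively curved portion of each almost-embeddable piece may be a workable path toward the full conjecture.
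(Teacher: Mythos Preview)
The statement you are attempting is a \emph{conjecture}, and the paper does not prove it; it is stated precisely as an open problem. There is therefore no proof in the paper to compare your proposal against, and you yourself recognize this: your reverse-direction outline explicitly reduces to the planar embedding conjecture and labels the apex and vortex steps as places where ``a genuinely new idea is needed.''

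Two substantive remarks on the content of your proposal. First, your forward direction is correct and standard: minor-monotonicity of $c_1$ together with the $\Omega(\log n)$ lower bound for expanders does give $c_1(\mathcal F)=\infty$ whenever $\mathcal F$ forbids no minor. Second, your ingredient (i) contains a genuine error. You assert that bounded-distortion $L_1$-embeddability is preserved under clique-sums of bounded size, citing \cite{GNRS99} for the $2$-sum case and claiming the general case ``should follow by the same \ldots\ bookkeeping.'' This is exactly the $k$-sum conjecture (Conjecture~\ref{conj:ksum}), and the paper states explicitly that it remains open even for $k=2$. The paper's own organization of the reverse direction is via \cite{DBLP:conf/stoc/LeeS09}, which shows that GNRS is \emph{equivalent} to the conjunction of the planar embedding conjecture and the $k$-sum conjecture, bypassing surfaces, vortices, and apices altogether. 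Your Robertson--Seymour-structure program is a plausible alternative outline, but it carries strictly more open subproblems than the \cite{DBLP:conf/stoc/LeeS09} reduction, and your claim that the gluing step is essentially settled is not correct.
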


We note that a strengthening of the GNRS conjecture for \emph{integral} multi-commodity flows has also been considered \cite{DBLP:journals/jct/ChekuriSW13}. This is a seemingly harder problem, and progress has been even more limited in this case.

At first glance, it might appear that the GNRS conjecture is a vast generalization of the planar embedding conjecture, since planar graphs exclude $K_5$ as a minor.
Despite this, Lee \& Sidiropoulos \cite{DBLP:conf/stoc/LeeS09} have shown that the GNRS conjecture is \emph{equivalent} to the conjunction of the planar embedding conjecture, with 
the manifestly simpler \emph{$k$-sum embedding conjecture} summarized bellow.
For a graph family ${\cal F}$, let $\oplus_k {\cal F}$ denote the closure of ${\cal F}$ under $k$-clique sums (see \cite{DBLP:conf/stoc/LeeS09} for a more detailed exposition).
We note that the case $k=1$ is folklore, while recently progress has been reported for the case $k=2$ by Lee and Poore \cite{2sums}; even for $k=2$ however, the problem remains open.

\begin{conjecture}[$k$-sum conjecture \cite{DBLP:conf/stoc/LeeS09}]
\label{conj:ksum}
For any family of graphs $\mathcal F$, we have $c_1(\mathcal F) = O(1)$ if and only if $c_1(\oplus_k \mathcal F) = O(1)$
for every $k \in \mathbb N$.
\end{conjecture}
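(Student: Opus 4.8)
We sketch an approach to Conjecture~\ref{conj:ksum}. One direction is immediate: since $\mathcal F \subseteq \oplus_k\mathcal F$ for every $k$, the hypothesis ``$c_1(\oplus_k\mathcal F) = O(1)$ for every $k$'' already yields $c_1(\mathcal F) = O(1)$, with the same constant. The substance is the converse, and the plan is to fix $k$ and show that $c_1(\mathcal F) \le D$ implies $c_1(\oplus_k\mathcal F) \le D'$ for a constant $D' = D'(D,k)$. So fix $k$, a graph $G$ obtained by finitely many $k$-clique-sums from $\mathcal F$, and arbitrary edge weights inducing a shortest-path metric $d$ on $G$. Take a clique-sum decomposition tree $T$: its nodes are the pieces $P_1,\dots,P_t$, each of them (with the weighting it inherits) a graph in $\mathcal F$, and adjacent pieces share a clique of size at most $k$ that separates them in $G$. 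By the standard rerouting argument across a clique separator -- any excursion of a shortest path out of a piece enters and leaves through the shared clique and may be replaced by the corresponding clique edge, after reweighting each clique edge $xy$ to $d(x,y)$ -- the metric $d$ restricted to $V(P_i)$ is itself a shortest-path metric on a graph in $\mathcal F$, hence embeds into $L_1$ with distortion at most $D$.

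The plan is then to build $L_1$-embeddings of the pieces that are compatible on the shared cliques and glue them. It is convenient to work with cut distributions, using the representation of $L_1$ metrics by nonnegative combinations of cut metrics that underlies Theorem~\ref{thm:gap}: a distortion-$c$ embedding of $(V(P_i),d)$ is, up to an $O(1)$ loss, a probability distribution $\mu_i$ over cuts $S\subseteq V(P_i)$ such that $\Pr_{S\sim\mu_i}[S\text{ separates }u,v]$ agrees with $d(u,v)$ up to a factor of $c$ after a global rescaling. Root $T$ and process the pieces in BFS order, maintaining a cut distribution on the union $U$ of the pieces processed so far that realizes $d$ on $U$ with distortion $O(c)$. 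When a new piece $P_j$ is attached, it shares with $U$ exactly one clique $K$, namely the clique joining $P_j$ to its parent, since $T$ is a tree. Choose $\mu_j$ so that its restriction to $K$ equals the restriction of the current $U$-distribution to $K$; since any cut of $U$ and any cut of $P_j$ that agree on $K$ glue to a cut of $U\cup V(P_j)$, a coupling of the two distributions through their common $K$-marginal produces a cut distribution on $U\cup V(P_j)$ restricting correctly to both sides. The key point -- the one that makes a \emph{uniform} constant conceivable even though $t$ is unbounded -- is that distortion does not accumulate across pieces: for $u\in U$ and $v\in V(P_j)\setminus K$, every $u$--$v$ path in $G$ crosses $K$, so $d(u,v)=\min_{w\in K}\bigl(d(u,w)+d(w,v)\bigr)$, and a short computation, telescoped along the tree path between the pieces containing $u$ and $v$, shows the glued distribution separates $u$ from $v$ with probability $\Theta(d(u,v)/c)$.

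The crux, and the step I expect to be the main obstacle, is the one just glossed over: \emph{choosing $\mu_j$ with a prescribed restriction to $K$.} Equivalently, we need an extension property of $L_1$: every distortion-$O(1)$ embedding into $L_1$ of a subset of size at most $k$ of a shortest-path metric on a graph in $\mathcal F$ must extend to a distortion-$O(1)$ embedding of the whole metric. This is exactly the content that is not known in general. For $k=1$ it is the folklore argument for summing at a cut vertex -- place the pieces on disjoint coordinate blocks sharing only the image of the cut vertex -- and for $k=2$ it is precisely the regime in which partial progress has been reported~\cite{2sums}. A natural line of attack is to first establish the extension property for the structured $k$-point subsets that actually arise (cliques appearing in clique-sum decompositions of graphs in $\mathcal F$), possibly exploiting extra structure of $\mathcal F$, and then to feed it into the gluing scheme above.

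The remaining ingredients should be routine once the extension property is available: the back-and-forth between $L_1$-embeddings and cut distributions loses only $O(1)$; the various $O(1)$ factors -- from that conversion, from the $\le k$-point clique metrics, and from each application of the extension property -- multiply to a single constant depending only on $D$ and $k$; and the telescoping estimate for cross-piece pairs is a direct consequence of the clique-sum distance formula. One could instead try to argue purely on the flow side via Theorem~\ref{thm:gap}, decomposing a multicommodity instance on $G$ into within-piece demands and demands routed through the gluing cliques; but reconciling the routings at each clique, and ruling out sparse cuts of $G$ that straddle the gluing cliques, runs into the same wall and does not appear to be easier.
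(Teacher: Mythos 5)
There is nothing in the paper to compare against: the statement you are addressing is Conjecture~\ref{conj:ksum}, which the paper explicitly presents as open (it cites the case $k=1$ as folklore and only partial progress for $k=2$ \cite{2sums}), and which it uses only as context for the GNRS conjecture. Your text is likewise not a proof. The forward direction you dispose of is indeed trivial, and your reduction of the metric on each piece to a shortest-path metric on a member of $\mathcal F$ (rerouting through the gluing clique after reweighting its edges) is fine under the standard convention that the clique edges are present in each piece. But the entire substance of the converse is delegated to the ``extension property'': that a cut distribution (equivalently, an $L_1$ embedding) prescribed on a $\le k$-point clique can be extended to the whole piece with $O(1)$ loss. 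That property is not known for any $k\ge 2$, and it is essentially the content of the conjecture itself --- this is exactly the wall that \cite{2sums} is pushing against for $k=2$. So the proposal reduces an open problem to an open problem and establishes nothing new.

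Two further points would need care even if the extension property were granted. First, gluing cuts across the clique $K$ requires the two distributions to have \emph{identical} marginals on $K$ as distributions over partitions of $K$, not merely to realize $d|_K$ with comparable distortion; so the extension statement you need is the stronger ``prescribed-marginal'' version, and you should state it that way. Second, your claim that ``distortion does not accumulate'' across the unboundedly many pieces is asserted, not argued: the telescoping identity $d(u,v)=\min_{w\in K}(d(u,w)+d(w,v))$ controls the metric, but the separation probability of a cross-piece pair under the glued distribution involves the coupled randomness of every piece along the tree path, and bounding it above and below by $\Theta(d(u,v))$ uniformly in the number of pieces is precisely where naive schemes lose a factor per gluing step. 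Until both of these are resolved, the proposal is a restatement of the difficulty rather than progress on it.
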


It is therefore apparent that the planar embedding conjecture is a major step towards determining the multi-commodity gap in \emph{arbitrary} graphs.

\subsection{Our results}

All previous attempts on the planar embedding conjecture have been \emph{topological} in nature, meaning that they seek to obtain constant-distortion embeddings by restricting the topology of the planar graph.
As a consequence, all known methods are insufficient even for planar graphs of treewidth 3.

We depart from this paradigm by instead restricting the \emph{geometry} of the planar metric.
For any metric $(X,d)$, we have that $(X,d)$ is the shortest-path metric of a planar graph if and only if it can be realized as a set of points in a simply-connected (i.e.~planar) surface.
We say that a planar metric is \emph{non-positively curved} if it can be realized as a set of points in a surface of non-positive curvature (see Section \ref{sec:prelim} for the definition of non-positively curved spaces).
This leads to a natural, and very rich class of planar metrics.
For instance, non-positively curved planar metrics include all trees, all regular grids (up to constant distortion), and arbitrary subsets of the hyperbolic plane $\mathbb{H}^2$.

\begin{center}
\scalebox{0.7}{\includegraphics{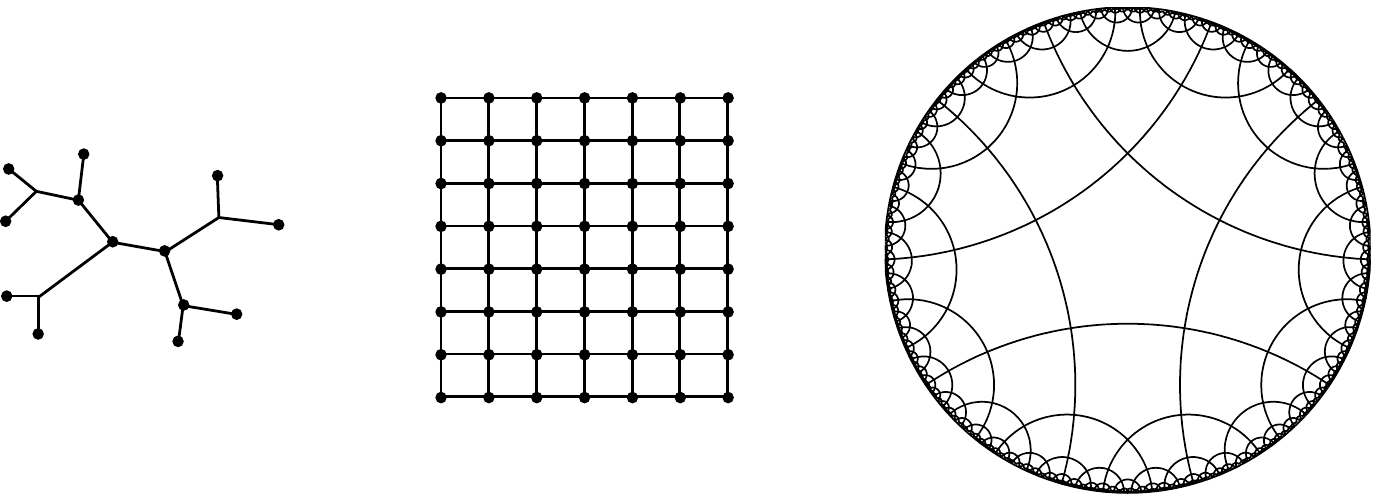}}
\end{center}

Our main result is as follows.

\begin{theorem}[Main]\label{thm:main}
There exists a universal constant $\gamma > 1$, such that 
every non-positively curved planar metric admits an embedding into $L_1$ with distortion at most $\gamma$.
\end{theorem}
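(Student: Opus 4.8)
The plan is to use the cut--flow duality of Theorem~\ref{thm:gap}: it suffices to show that every non-positively curved planar metric $(X,d)$ embeds into $L_1$ with distortion $O(1)$, equivalently that it admits a distribution over subsets of $X$ (a ``random cut'') under which the probability that a pair $x,y$ is separated is, up to universal constants, proportional to $d(x,y)$. By hypothesis $(X,d)$ is the restriction of the geodesic metric of some non-positively curved surface $\mathcal S$ to a finite set $X$. Non-positive curvature makes geodesics between points of $X$ unique and continuously varying, and two of them meet in a connected set; hence the union of all geodesics spanned by $X$ is a finite planar ``geodesic complex'' $\Gamma \subseteq \mathcal S$, and since $(X,d)$ embeds isometrically into $\Gamma$ with its intrinsic length metric, it is enough to $O(1)$-embed $\Gamma$. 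One may then, at the cost of a further $O(1)$ factor, replace $\Gamma$ by a finite planar graph triangulating it, retaining the combinatorial consequences of non-positive curvature. (A plain random-partition / measured-descent argument here would only give $O(\sqrt{\log n})$; the entire point is to exploit the geometry to do better.)

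The core of the argument is a hierarchical decomposition of $\Gamma$ into simple pieces, and this is where curvature is essential. Rooting at a basepoint or a base geodesic, one recursively slices $\mathcal S$ along geodesics; non-positive curvature prevents geodesics from ``wrapping around'' in a way that would create cyclic dependencies, so the pieces organize along a rooted tree $T$, each piece carrying a distinguished \emph{top} and \emph{bottom} boundary --- each a geodesic, or a small forest of geodesics --- along which it meets its parent and its children. I expect the atomic pieces to be of two kinds: \emph{funnels}, the region swept out by the geodesics from a single apex down to a base geodesic, which are essentially one-dimensional and tree-like; and \emph{pyramids}, bounded-complexity junctions at which a few funnels meet, which one reduces to the known constant-distortion embeddings of single faces (Okamura--Seymour~\cite{Okamura198175}) and of outerplanar and series--parallel metrics~\cite{GNRS99}. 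The substantive claim to establish is that these are the only atomic types and that each admits an $L_1$ embedding of distortion $O(1)$ \emph{whose restriction to each top/bottom boundary is, up to $O(1)$, the canonical monotone embedding of that geodesic (or forest) onto the line}.

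One then glues back along $T$ from the leaves to the root, using the composition lemma: if $Y = Y_1 \cup Y_2$ with $Y_1 \cap Y_2$ a geodesic (or small forest of geodesics) $\gamma$, and each $Y_i$ has an $O(1)$-distortion $L_1$ embedding whose restriction to $\gamma$ is the canonical monotone embedding, then so does $Y$ --- one samples a ``prefix'' cut of $\gamma$ from its canonical distribution and extends it independently into $Y_1$ and into $Y_2$ conditioned on that prefix. Because every gluing interface is one-dimensional and the pieces are arranged along the tree $T$ rather than nested, iterating the lemma does not compound the distortion, and one obtains an $O(1)$ embedding of all of $\Gamma$; together with Theorem~\ref{thm:gap} this yields the theorem.

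I expect the two hard steps to be, both in the middle paragraph: (i) deriving from non-positive curvature that the geodesic decomposition is genuinely laminar with only funnel and pyramid atoms --- this is the single point where the hypothesis does real work, and it requires combining planarity with curvature carefully; and (ii) the \emph{boundary-controlled} embedding statement for the atoms, i.e.\ embeddings that are prescribed up to $O(1)$ on their boundary geodesics. Point (ii) is precisely the obstacle that makes the $k$-sum embedding conjecture hard for general pieces; what rescues us here is that every interface is one-dimensional and a geodesic embeds into $L_1$ essentially rigidly, so the gluing freedom collapses to a single canonical cut distribution --- arranging for the atoms' embeddings to respect it while keeping the distortion universal is where the real effort goes.
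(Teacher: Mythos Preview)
Your proposal rests on a false premise at step (i). You describe the atomic ``funnels'' as ``the region swept out by the geodesics from a single apex down to a base geodesic'' and assert they are ``essentially one-dimensional and tree-like.'' But even in the flat plane such a region is a Euclidean triangle, and in $\mathbb{H}^2$ it is a hyperbolic triangle of exponential volume --- neither is tree-like in any sense that helps. More to the point, from any basepoint $v$ the \emph{entire} surface is already such a region, so your decomposition collapses to one piece and that piece is the whole problem; the paper's ``pinched square'' $Y=[0,1]^2/A$ is exactly a single funnel from its apex, and the introduction explains why its cut structure is necessarily multi-scale and cannot be assembled from prefix cuts of geodesics. The same discussion shows why your gluing step (ii) would fail even given a nontrivial decomposition: the cut distribution needed on a piece depends on the scale at which that piece sits inside the whole, so fixing each interface to a single ``canonical'' monotone distribution is precisely what cannot be done consistently across scales without compounding distortion.

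The paper's approach is not decomposition-and-glue at all, and its ``funnels'' and ``pyramids'' are unrelated to yours. The whole metric is first modeled (Lemma~\ref{lem:funnels}) as \emph{one} combinatorial funnel --- concentric metric spheres around a basepoint together with a radial spanning tree --- and then cut along a single ray, via the peeling lemma, into a single pyramid (Lemma~\ref{lem:pyramids}). The entire technical content is a direct construction of one random monotone cut of the whole pyramid by a multi-scale ``evolution'' (Sections~\ref{sec:monotone}--\ref{sec:distortion}): start from a uniformly random ball cut $\ball(v',j)$ and, at geometrically decreasing scales $\Delta/3^i$, randomly shift alternating boundary segments outward by that amount, the segments being delimited by edges whose nearest-common-ancestor depth matches the current scale. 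Non-positive curvature is used only in Lemma~\ref{lem:funnels}, to justify the discrete funnel model; everything else is the analysis of this one random process.
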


Since we are motivated by the applications of metric embeddings in computer science, we will restrict our discussion to finite metrics.
We remark however that our result can be extended to obtain constant-distortion embeddings of arbitrary simply-connected surfaces of non-positive curvature into $L_1$.\footnote{This connection was pointed out  by James R.~Lee.}

We note that embeddings of various hyperbolic spaces have been previously considered.
We refer to \cite{DBLP:conf/focs/KrauthgamerL06,Bonk_embeddingsof,hyperbolic_into_products_of_trees,Buyalo:2005to}.
However, none of the previous results captures $L_1$ embeddings of arbitrary non-positively curved planar metrics.
In fact, our approach is significantly different than all previous works.


\subsection{A high-level overview of our approach}
We now give an informal, and somewhat imprecise overview of some of the main challenges that we face when trying to embed non-positively curved planar metrics into $L_1$.

\paragraph{Distributions over monotone cuts.}
Let $(X,d)$ be a metric space.
We will use the standard representation of $L_1$ as the cone of cut pseudo-metrics (see Section \ref{sec:prelim} for the definition).
This means that in order to embed a space into $L_1$ with constant distortion, it suffices to find a probability distribution over cuts, such that the probability that any pair of points $x,y$ gets separated, is $\Theta(\alpha \cdot d(x,y))$, for some normalization factor $\alpha>0$.

It follows by the work of Lee and Raghavendra \cite{DBLP:journals/dcg/LeeR10} (see also \cite{DBLP:conf/focs/ChakrabartiJLV08}) that when seeking a constant-distortion embedding of certain spaces into $L_1$ it suffices to consider distributions over a specific type of cuts, called \emph{monotone}.
More precisely, let $x$ be a fixed point.
We say that a cut $S$ is monotone (w.r.t.~$x$) if every shortest path starting from $x$ crosses $S$ at most once.
Let us say that a metric space is a \emph{bundle} if there exist two points $s,t$, such that for every point $z$, there exists an $s$-$t$ geodesic containing $z$.
Then it is shown in \cite{DBLP:journals/dcg/LeeR10} that a bundle admits a constant-distortion embedding into $L_1$ if and only if it is a convex combination of monotone cuts (i.e.~a convex combination of cut pseudo-metrics, where every indicator set is a monotone cut). 
It is easy to show that every finite non-positively curved metric admits an isometric embedding into a bundle.
We can therefore focus our efforts into finding a good distribution over monotone cuts.

\paragraph{The structure of monotone cuts in non-positively curved spaces.}
It is convenient to demonstrate the main ideas using the following example of a ``pinched square''.
Let $X=[0,1]^2$, endowed with the Euclidean distance.
The space $X$ can be embedded isometrically into $L_1$ by taking an appropriate distribution over random half-plane cuts (e.g.~by choosing a uniformly random point $p\in X$, and taking the half-plane supported by a line passing through $p$ forming a uniformly random angle with the $x$-axis).
Let $A$ be one of the sides of $X$, and let $Y=X/A$ be the quotient space obtained by contracting $A$ into a single point, which we will refer to as the \emph{basepoint}.

\begin{center}
\scalebox{0.6}{\includegraphics{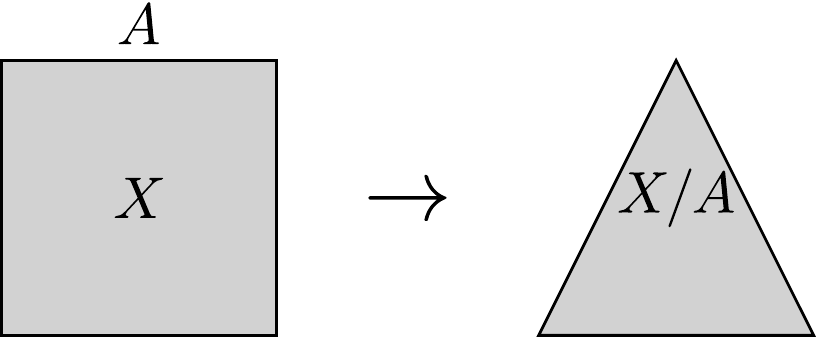}}
\end{center}

Strictly speaking, the resulting $Y$ is not a space of non-positive curvature (in particular, there exist pairs of points in $Y$ with two distinct geodesics joining them).
However, $Y$ admits a constant-distortion embedding into a planar surface of non-positive curvature, so in order to simplify the exposition, we may use $Y$ without loss of generality.

It is fairly easy to see that even though $Y$ might ``look'' like a triangle, its geometry is far from that of a flat Euclidean triangle.
In fact, one can show that $Y$ cannot be embedded into the Euclidean plane with bounded distortion.
As a consequence, embedding $Y$ into $L_1$ requires a significantly more involved distribution over cuts.
Such a distribution can be constructed using cuts of the following form:
For every $r\in [0,1]$, we have a family of cuts $S$ that are contained inside the ball of radius $r$ from the basepoint, and with boundary $\partial C$ given by a function of period $\Theta(r)$.
Roughly speaking, these cuts can be obtained by random shifts along the $x$-axis of cuts from the following infinite family:

\begin{center}
\scalebox{1.2}{\includegraphics{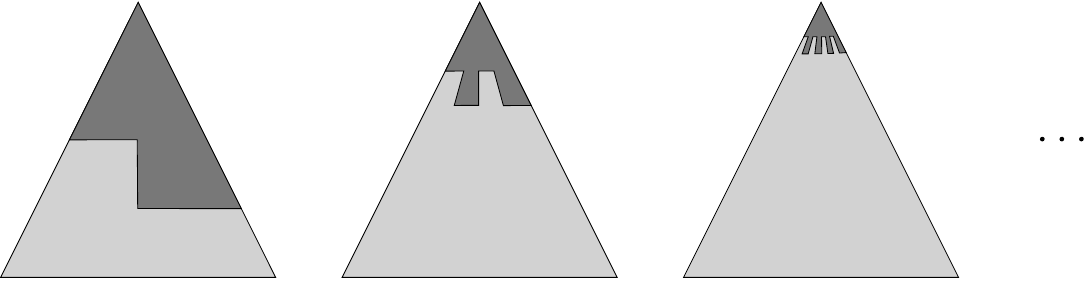}}
\end{center}

Here, the probability of a cut decreases when $r\to 0$.
It is important to note that the structure of a cut depends on the distance of its boundary to the basepoint.
It can be shown that this is the case for \emph{any} constant distortion embedding of $Y$ into $L_1$.
Moreover, in an constant-distortion embedding, this transition has to happen in a smooth way as $r\to 0$.

\paragraph{Handling multiple scales.}

Suppose now that we modify the space $Y$ as follows.
Let $R$ be a ray in $Y$, i.e.~an unbounded geodesic starting at the basepoint, and 
let $R'$ be a suffix of $R$.
Cutting $Y$ along $R'$ introduces two copies $R_1'$, $R_2'$ of $R'$ as segments of the boundary.
We glue a copy of $Y$ along $R_1'\cup R_2'$ as follows:
\begin{center}
\scalebox{0.6}{\includegraphics{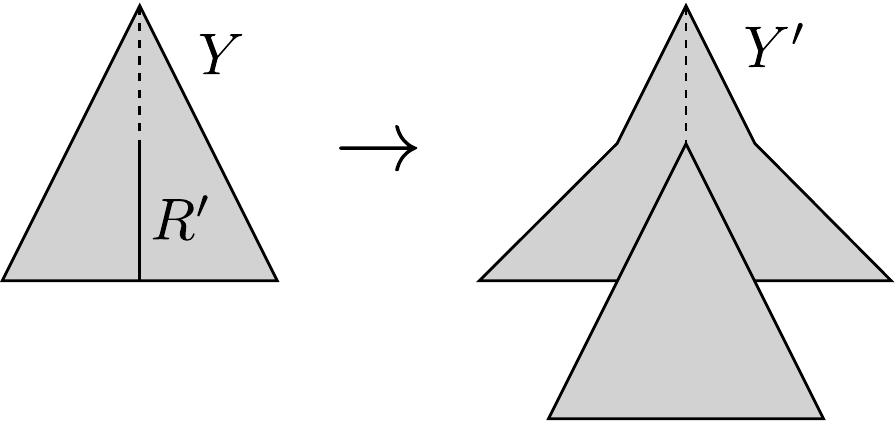}}
\end{center}
The resulting space $Y'$ again embeds with constant distortion into a planar metric of non-positive curvature.
Constructing a constant-distortion embedding for $Y'$ requires the use of even more intricate families of cuts.
Intuitively, a single cut now has to ``gracefully'' combine information form multiple different scales.
Let $p_1,p_2\in Y'$ be the basepoints of the two copies of $Y$ in $Y'$.
The structure of  a ``typical'' cut $S$ has to depend on the distances between $\partial S$, and \emph{both} $p_1$, and $p_2$:
\begin{center}
\scalebox{1}{\includegraphics{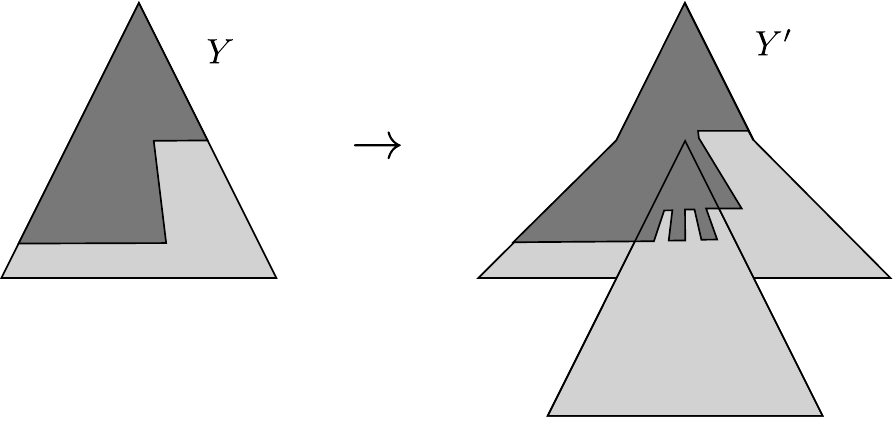}}
\end{center}
A naive way to address this problem would be to define a distribution over cuts for every single scale, and then try to combine them into a single distribution.
The problem with this approach is that cuts from different scales (in our example, cuts for the two different copies of $Y$) might not agree on their boundary.
This disagreement results in larger distortion every time we combine two different scales.
Since there can be many scales, this methods leads to unbounded distortion.

We overcome this obstacle by designing a distribution over cuts that is \emph{scale-independent}.
This is done by starting with a distribution over cuts that handles large distances, and gradually modifying it to handle smaller scales.
The main technical contribution of this paper is showing that  in a non-positively curved surface, this can be done without increasing the distortion.

\subsection{Preliminaries}\label{sec:prelim}

We now review some basic definitions and notions which
appear throughout the paper.

\paragraph{Graphs.}
Let $G$, and let $S\subseteq V(G)$.
We denote by $G[S]$ the subgraphs of $G$ induced by $S$,
i.e. $G[S] = \left(S, E(G) \cap {S \choose 2}\right)$.
We will consider graphs with every edge having a non-negative length.
We say that a graph is \emph{unweighted} if all of its edges have unit length.
Let $\diam(G)$ denote the diameter of $G$, i.e.~$\diam(G) = \max_{x,y\in V(G)} d_G(x,y)$.
We refer to a path between two vertices $x,y\in V(G)$ as a $x$-$y$ path.

\paragraph{Cuts and $L_1$ embeddings.}
A cut of a graph $G$ is a partition of $V(G)$ into $(S,\bar{S})$---we sometimes
refer to a subset $S \subseteq V$ as a cut as well. A cut
gives rise to a pseudometric; using indicator functions, we can write the cut
pseudometric as $\rho_S(x,y) = |\Ind_S(x) - \Ind_S(y)|$. A central fact
is that embeddings of finite metric spaces into $L_1$ are equivalent to sums
of positively weighted cut metrics over that set (for a simple proof
of this see \cite{GeoCuts}). 

A {\em cut measure on $G$} is a function $\mu : 2^V \to \mathbb R_+$
for which $\mu(S) = \mu(\bar S)$ for every $S \subseteq V$.
Every cut measure gives rise to an embedding $f : V\to L_1$ for which
\begin{equation}
\label{eq:cutmeasure}
\|f(u)-f(v)\|_1 = \int |\1_S(u)-\1_S(v)|\,d\mu(S),
\end{equation}
where the integral is over all cuts $(S, \bar S)$.
Conversely, to every embedding $f : V \to L_1$, we can
associate a cut measure $\mu$ such that \eqref{eq:cutmeasure} holds.

\paragraph{Non-positively curved spaces.}

We will describe our proof using the definition of non-positive curvature in the sense of Busemann.
We give here a brief overview of some of the relevant terminology, and we refer the reader to \cite{papadopoulos2005metric,thurston1997three} for a more detailed exposition.
A metric space $(X,d)$ is called \emph{geodesic} if for every pair of points there exists a geodesic joining them.
We say that $(X,d)$ is non-positively curved, if for any pair of affinely parameterized geodesics $\gamma:[a,b]\to X$, $\gamma':[a',b']\to X$, the map $D_{\gamma,\gamma'}:[a,b] \times [a',b'] \to \mathbb{R}$ defined by
\[
D_{\gamma,\gamma'}(t,t') = d(\gamma(t), \gamma'(t))
\]
is convex.
As we show, this property is sufficient to obtain constant-distortion embeddings of simply-connected surfaces into $L_1$.


\paragraph{Lipschitz partitions.}

Let $(X,d)$ be a metric space.
A distribution ${\cal F}$ over partitions of $X$ is called \emph{$(\beta,\Delta)$-Lipschitz} if every partition in the support of ${\cal F}$ has only clusters of diameter at most $\Delta$, and for every $x,y\in X$, 
\[
\Pr_{C\in {\cal F}}[C(x)\neq C(y)] \leq \beta \cdot \frac{d(x,y)}{\Delta}.
\]
We denote by $\beta_{(X,d)}$ the infimum $\beta$ such that for any $\Delta>0$, the metric $(X,d)$ admits a $(\Delta, \beta)$-Lipschitz random partition, and we refer to $\beta_{(X,d)}$ as the \emph{modulus of decomposability} of $(X,d)$.
The following theorem is due to Klein, Plotkin, and Rao \cite{DBLP:conf/stoc/KleinPR93}, and Rao \cite{DBLP:conf/compgeom/Rao99}.

\begin{theorem}[\cite{DBLP:conf/stoc/KleinPR93}, \cite{DBLP:conf/compgeom/Rao99}]\label{thm:KPR}
For any planar graph $G$, we have $\beta_{(V(G), d_G)} = O(1)$.
\end{theorem}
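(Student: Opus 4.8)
The plan is to prove the more general assertion that if $G$ excludes $K_r$ as a minor then $\beta_{(V(G),d_G)}$ is bounded by a function of $r$ alone, and then take $r=5$ (since a planar graph cannot contain a $K_5$ minor). Concretely, I would reconstruct the recursive ``sliced breadth-first search'' decomposition of Klein, Plotkin, and Rao. Fix the target diameter bound $\Delta$ and set $\delta=\Delta/4$. Define a random partition by the recursive procedure $\mathsf{Chop}(H)$ acting on a connected subgraph $H$ (initially $H=G$): if $\diam(H)\le\Delta$, output $V(H)$ as a single cluster; otherwise pick an arbitrary vertex $u_H\in V(H)$, draw an offset $\theta_H$ uniformly at random from $[0,\delta)$ (independently of all other random choices), form the ``shells''
\[
A^H_j=\bigl\{\,v\in V(H)\ :\ d_H(u_H,v)\in[\theta_H+(j-1)\delta,\ \theta_H+j\delta)\,\bigr\},\qquad j=1,2,\dots,
\]
and recursively call $\mathsf{Chop}(K)$ on every connected component $K$ of every induced subgraph $H[A^H_j]$. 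The output is the family of clusters returned by $\mathsf{Chop}(G)$. (Edges may carry arbitrary non-negative lengths; nothing below is affected.) Two things must be verified: (a) the probability that a fixed pair $x,y$ is separated is $O(d_G(x,y)/\Delta)$, with constant depending only on $r$; and (b) the recursion has depth at most $r-1$ --- which guarantees that it terminates and, via the stopping rule, that every output cluster has diameter at most $\Delta$.

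Part (a) is the easy half. For an edge $\{x,y\}\in E(G)$ of length $w_{xy}=d_G(x,y)$: at any invocation of $\mathsf{Chop}$ on a subgraph $H$ containing both endpoints, the edge itself lies in $H$, so $|d_H(u_H,x)-d_H(u_H,y)|\le w_{xy}$, and hence the endpoints fall into distinct shells --- the only way this invocation separates them --- with probability at most $w_{xy}/\delta$, since $\theta_H$ is uniform on an interval of length $\delta$. A union bound over the at most $r-1$ invocations along the branch of the recursion that processes $\{x,y\}$ (using part (b)) gives separation probability at most $(r-1)w_{xy}/\delta=O(r)\,d_G(x,y)/\Delta$. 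For a general pair, write a shortest $x$--$y$ path as a concatenation of edges and sum these estimates to obtain $\Pr[\text{$x,y$ separated}]\le O(r)\,d_G(x,y)/\Delta$; together with (b) this yields $\beta_{(V(G),d_G)}=O(r)$, hence $O(1)$ when $r=5$.

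Part (b) is the crux and carries the real content. A depth-$(r-1)$ branch of the recursion is precisely a chain $G=H_0\supsetneq H_1\supsetneq\cdots\supsetneq H_{r-1}$ in which each $H_{i+1}$ is a connected component of a width-$\delta$ shell $A^{H_i}_{j_i}$ of $H_i$, and $\diam(H_i)>\Delta$ for every $i\le r-2$; we must show no such chain exists. Assuming one does, fix $v\in V(H_{r-1})$, and for $0\le i\le r-2$ let $I_i\subseteq V(H_i)$ be the set of vertices strictly interior to the shell $A^{H_i}_{j_i}$ (those at $H_i$-distance from $u_i$ below the inner radius $\theta_i+(j_i-1)\delta$). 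Each $I_i$ is connected, being a breadth-first ball in the connected graph $H_i$; it is disjoint from $H_{i+1}$, hence from all later $H_k$ and all later $I_k$; and it is adjacent to $H_{i+1}$ since a breadth-first-search path of $H_i$ from $u_i$ dips into the component $H_{i+1}$. The plan is to enlarge each $I_i$ by a short breadth-first geodesic segment running from $I_i$ down through the nested subgraphs $H_{i+1},H_{i+2},\dots$, truncating these segments at carefully chosen depths so that the resulting sets $B_0,\dots,B_{r-2}$, together with $B_{r-1}=\{v\}$, become $r$ pairwise-disjoint, pairwise-adjacent connected sets, exhibiting a $K_r$ minor and contradicting the hypothesis on $G$. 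The delicate step, which I expect to be the main obstacle, is exactly this truncation bookkeeping: the cut-off depth of the segment leaving level $i$ must be chosen so that the segment still meets every deeper set $B_k$ while avoiding the territory those sets occupy, and this is where the width-$\delta$ slicing (which keeps the distance ranges used by different levels in disjoint windows) and the lower bounds $\diam(H_i)>\Delta\gg\delta$ (which ensure the segments are long enough to survive being cut into the required number of pieces) enter. Once this is carried out, Theorem~\ref{thm:KPR} follows by taking $r=5$.
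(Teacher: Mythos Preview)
The paper does not prove this theorem; it is quoted from the cited references and used only as a black box (in the proof of Lemma~\ref{lem:pyramids}, through the peeling lemma). Your proposal is a reconstruction of the Klein--Plotkin--Rao argument itself, and the outline --- randomized BFS-shell chopping, a union bound over the recursion levels for the Lipschitz constant, and a $K_r$-minor extraction to bound the recursion depth --- is correct and is exactly what the cited papers do. So there is no ``paper's own proof'' to compare against; you are simply reproving the cited result by its original method.

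Two remarks on details. First, for general $r$ the shell width is normally taken as $\delta=\Theta(\Delta/r)$ rather than a fixed fraction of $\Delta$: the minor construction needs room to thread geodesic segments through all of the nested shells while keeping everything inside a ball whose radius is comparable to the diameter hypothesis, and with $\delta=\Delta/4$ this breaks once $r$ is large. This is why the Lipschitz constant in KPR comes out polynomial in $r$ rather than linear; for $r=5$ it is irrelevant to the $O(1)$ conclusion you need. Second, you are right that the branch-set construction in part~(b) is the heart of the matter, and the picture you sketch --- interior balls $I_i$ extended by truncated BFS geodesics threaded through the nested shells, with cut-off depths chosen so the pieces are pairwise adjacent yet disjoint --- is the correct one; carrying out that truncation bookkeeping carefully is precisely the content of the original KPR paper, and you should expect it to absorb most of the effort.
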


\paragraph{Stochastic embeddings.}
A mapping $f : X \to Y$ between two metric spaces $(X,d)$ and $(Y,d')$
is {\em non-contracting} if $d'(f(x),f(y)) \geq d(x,y)$ for all $x,y \in X$.
If $(X,d)$ is any finite metric space, and $\mathcal Y$
is a family of finite metric spaces, we say that {\em $(X,d)$ admits a stochastic $D$-embedding into $\mathcal Y$} if there exists a random metric space $(Y,d') \in \mathcal Y$ and a random
non-contracting mapping $f : X \to Y$ such that for every $x,y \in X$,
\begin{equation}
\label{eq:expansion}
\mathbb E\left[\vphantom{\bigoplus} d'(f(x),f(y))\right] \leq D \cdot d(x,y).
\end{equation}
The infimal $D$ such that \eqref{eq:expansion} holds is the {\em distortion of
the stochastic embedding.}
For a graph $G$ and a graph family ${\cal F}$ we write $G\overset{D}{\leadsto} {\cal F}$ to denote the fact that $G$ stochastically embeds into a distribution over graphs in ${\cal F}$, with distortion $D$.
We also use the notation $G \leadsto {\cal F}$ to denote the fact that $G\overset{D}{\leadsto} {\cal F}$, for some universal constant $D\geq 1$.
We will use the following fact.
\begin{lemma}\label{lem:stochastic_L_1}
Let ${\cal F}$ be a family of graphs, such that every $H\in {\cal F}$ admits an embedding into $L_1$ with  distortion at most $\alpha\geq 1$.
Let $G$ be a graph, such that $G \overset{\beta}{\leadsto} {\cal F}$, for some $\beta\geq 1$.
Then, $G$ admits an embedding into $L_1$ with distortion at most $\alpha \beta$.
\end{lemma}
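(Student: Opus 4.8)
The plan is to use linearity of expectation together with the standard cut-measure representation of $L_1$ embeddings, exactly as in the analogous composition lemmas for stochastic embeddings. First I would unpack the stochastic embedding: by hypothesis $G \overset{\beta}{\leadsto} {\cal F}$ means there is a random graph $H \in {\cal F}$ and a random non-contracting map $f : V(G) \to V(H)$ with $\mathbb E[d_H(f(u),f(v))] \leq \beta \cdot d_G(u,v)$ for all $u,v \in V(G)$. For each realization of $H$, the hypothesis on ${\cal F}$ gives an embedding $g_H : V(H) \to L_1$ with distortion at most $\alpha$; after rescaling we may assume $g_H$ is non-contracting and $\|g_H(x)-g_H(y)\|_1 \leq \alpha \cdot d_H(x,y)$ for all $x,y \in V(H)$.

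Next I would form the composition $h = g_H \circ f : V(G) \to L_1$ as a random map. The key point is that a random embedding into $L_1$ can be averaged into a single (deterministic) embedding into $L_1$: using the cut-measure representation \eqref{eq:cutmeasure}, each realization $h$ corresponds to a cut measure $\mu_h$, and the expectation $\mu = \mathbb E[\mu_h]$ is again a cut measure, hence gives an embedding $F : V(G) \to L_1$ with
\[
\|F(u)-F(v)\|_1 = \mathbb E\left[\vphantom{\bigoplus}\|h(u)-h(v)\|_1\right].
\]
Alternatively, and perhaps more cleanly, one can take the (possibly infinite) $\ell_1$-direct sum of the spaces $L_1$ over all realizations, weighted by the probability measure, which is again an $L_1$ space, and let $F$ send $u$ to the vector whose $H$-component is $\Pr[H]\cdot h(u)$; this directly yields $\|F(u)-F(v)\|_1 = \mathbb E[\|h(u)-h(v)\|_1]$.

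Finally I would bound the two directions of the distortion. For the upper (expansion) bound, for every $u,v$,
\[
\|F(u)-F(v)\|_1 = \mathbb E\left[\vphantom{\bigoplus}\|g_H(f(u))-g_H(f(v))\|_1\right] \leq \alpha \cdot \mathbb E\left[\vphantom{\bigoplus}d_H(f(u),f(v))\right] \leq \alpha\beta \cdot d_G(u,v).
\]
For the lower (non-contraction) bound, since $g_H$ is non-contracting and $f$ is non-contracting, $\|g_H(f(u))-g_H(f(v))\|_1 \geq d_H(f(u),f(v)) \geq d_G(u,v)$ pointwise, so taking expectations $\|F(u)-F(v)\|_1 \geq d_G(u,v)$. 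Hence $F$ has distortion at most $\alpha\beta$, as claimed. I do not anticipate a serious obstacle here; the only point requiring mild care is the averaging step — justifying that an expectation of $L_1$-embeddings is realized by a single $L_1$-embedding — which is handled by the cut-cone representation, and the bookkeeping of the rescaling so that each $g_H$ is simultaneously non-contracting and $\alpha$-Lipschitz.
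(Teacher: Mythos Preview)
Your argument is correct and is the standard one. Note that the paper itself does not supply a proof of this lemma at all; it is introduced with ``We will use the following fact'' and then stated without proof, so there is nothing to compare against beyond observing that your write-up is exactly the routine justification one would expect.
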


Let $G$ be a graph, and let $A\subseteq V(G)$.
The \emph{dilation} of $A$ is defined to be
\[
\dil_G(A) = \max_{u,v\in V(G)} \frac{d_{G[A]}(u,v)}{d_G(u,v)}
\]
For two graphs $G,G'$, a \emph{1-sum} of $G$ with $G'$ is a graph obtained by taking two disjoint copies of $G$ and $G'$, and identifying a vertex $v\in V(G)$ with a vertex $v'\in V(G')$.
For a graph family ${\cal X}$, we denote by $\oplus_1{\cal X}$ the closure of ${\cal X}$ under 1-sums.
\begin{lemma}[Peeling lemma \cite{DBLP:conf/stoc/LeeS09}]\label{lem:peeling}
Let $G$ be a graph, and $A\subseteq V(G)$.
Let $G'=(V(G),E')$ be a graph with $E'=E(G)\setminus E(G[A])$,
and let $\beta = \beta_{(V, d_{G'})}$ be the corresponding modulus of decomposability.
Then, there exists a graph family ${\cal F}$ such that $G\overset{D}{\leadsto} {\cal F}$, where $D=O(\beta \cdot \dil_G(A))$, and every graph in ${\cal F}$ is a 1-sum of isometric copies of the graphs $G[A]$ and $\left\{G[V\setminus A\cup \{a\}]\right\}_{a\in A}$.
\end{lemma}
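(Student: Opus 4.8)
The plan is to exhibit directly, for a random choice, a graph $Y$ in the desired family together with an injective non-contracting map $f\colon V(G)\to V(Y)$ satisfying $\mathbb{E}[d_Y(f(u),f(v))]=O(\beta\cdot\dil_G(A))\,d_G(u,v)$ for all $u,v$; this is exactly a stochastic $O(\beta\dil_G(A))$-embedding into the claimed family. Each $Y$ will be the $1$-sum consisting of one central copy of $G[A]$ with, for every $a\in A$, one copy of $G[(V\setminus A)\cup\{a\}]$ glued at $a$ (an allowed $1$-sum). The map is the identity on $A$, and each $v\notin A$ is sent to the copy of $v$ lying inside the pendant attached at $\pi(v)$, where $\pi\colon V\setminus A\to A$ is a random ``anchor assignment'' constructed below. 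For \emph{any} $\pi$ the map $f$ is injective, and it is non-contracting because every piece of $Y$ is an isometric copy of a subgraph of $G$ and the $1$-sum adds only paths, so any $f(u)$--$f(v)$ path in $Y$ projects to a $u$--$v$ walk in $G$ of equal length; thus $d_Y(f(u),f(v))\ge d_G(u,v)$. All the work is in choosing $\pi$.

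Write $r(v)=d_G(v,A)$. A key preliminary fact is that a shortest $v$-to-$A$ path in $G$ meets $A$ only at its last vertex, hence uses no edge of $G[A]$ and survives in $G'$, so $d_{G'}(v,A)=r(v)$; moreover for a closest vertex $a^\ast(v)\in A$ to $v$ we get $d_{G[(V\setminus A)\cup\{a^\ast(v)\}]}(v,a^\ast(v))=r(v)$. Using $\pi=a^\ast$ deterministically would fail, since nearby vertices could receive far-apart anchors and end up far apart in $Y$; so we randomize at scale $\Theta(r(v))$. For each $i\in\mathbb{Z}$ fix an independent random $(\beta,2^i)$-Lipschitz partition $\mathcal{P}_i$ of $(V,d_{G'})$ (available, up to an additive loss absorbed in the $O(\cdot)$, from the definition of $\beta=\beta_{(V,d_{G'})}$), and fix one global uniform $\sigma\in[0,1]$. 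For $v\notin A$ set $i(v)=\lfloor\log_2(r(v)/16)+\sigma\rfloor$, let $C(v)$ be the cluster of $v$ in $\mathcal{P}_{i(v)}$, let $z(v)$ minimize $r(\cdot)$ over $C(v)$ (ties by a fixed order), and set $\pi(v)=a^\ast(z(v))$. The constant $16$ ensures $2^{i(v)}\le r(v)/8<r(v)=d_{G'}(v,A)$, so every vertex within $d_{G'}$-distance $2^{i(v)}$ of $v$ --- in particular every vertex of a shortest $v$--$z(v)$ path in $G'$ --- avoids $A$; hence $d_{G[V\setminus A]}(v,z(v))\le r(v)/8$, and concatenating with a shortest $z(v)$--$a^\ast(z(v))$ path (which stays in $G[(V\setminus A)\cup\{\pi(v)\}]$ and has length $r(z(v))\le r(v)$) yields the crucial \emph{locality property}: $d_{G[(V\setminus A)\cup\{\pi(v)\}]}(v,\pi(v))\le 2r(v)$ for every $v\notin A$.

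It remains to bound $\mathbb{E}[d_Y(f(u),f(v))]$. If $u,v\in A$ it equals $d_{G[A]}(u,v)\le\dil_G(A)\,d_G(u,v)$, deterministically. If $u\in A$, $v\notin A$ it equals $d_{G[A]}(u,\pi(v))+d_{G[(V\setminus A)\cup\{\pi(v)\}]}(v,\pi(v))$, and by $d_{G[A]}\le\dil_G(A)\,d_G$, the locality property, the triangle inequality and $r(v)\le d_G(u,v)$ this is $O(\dil_G(A))\,d_G(u,v)$, again deterministically. If $u,v\notin A$ and $d_G(u,v)\ge r(u)+r(v)$, then regardless of whether $\pi(u)=\pi(v)$ the same ingredients (now also bounding $d_G(\pi(u),\pi(v))$ via locality and the triangle inequality, using $r(u)+r(v)\le d_G(u,v)$) give $O(\dil_G(A))\,d_G(u,v)$ deterministically. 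The only probabilistic case is $u,v\notin A$ with $d_G(u,v)<r(u)+r(v)$: here every $u$--$v$ path through $A$ has length at least $r(u)+r(v)>d_G(u,v)$, so the $G$-geodesic avoids $A$ and $d_{G'}(u,v)=d_G(u,v)$; if $\pi(u)=\pi(v)=a$ then $d_Y(f(u),f(v))=d_{G[(V\setminus A)\cup\{a\}]}(u,v)=d_G(u,v)$, and if $\pi(u)\ne\pi(v)$ then locality and $d_G(u,v)<r(u)+r(v)$ give $d_Y(f(u),f(v))=O(\dil_G(A)\cdot\max(r(u),r(v)))$. So it suffices to show $\Pr[\pi(u)\ne\pi(v)]=O(\beta\,d_G(u,v)/\max(r(u),r(v)))$. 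Assume $r(u)\ge r(v)$ and $d_G(u,v)<r(u)$ (else the bound is trivial). Since $u,v$ in the same cluster at the same level receive the same leader and hence the same anchor, $\pi(u)\ne\pi(v)$ forces $i(u)\ne i(v)$ or ($i(u)=i(v)=i$ and $u,v$ in different clusters of $\mathcal{P}_i$). The first has probability $O(|\log_2 r(u)-\log_2 r(v)|)=O(d_G(u,v)/r(u))$ by the random shift; conditioned on $i(u)=i(v)=i$ we have $2^i\ge r(u)/32$, so the second has probability at most $\beta\,d_{G'}(u,v)/2^i=O(\beta\,d_G(u,v)/r(u))$ by the Lipschitz property. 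Summing (and using $\beta\ge 1$) proves the claim, whence $\mathbb{E}[d_Y(f(u),f(v))]\le d_G(u,v)+\Pr[\pi(u)\ne\pi(v)]\cdot O(\dil_G(A)\,r(u))=O(\beta\dil_G(A))\,d_G(u,v)$ in this case too.

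The main obstacle is making $\pi$ satisfy two competing requirements simultaneously: $\pi(v)$ must be cheaply reachable from $v$ \emph{without passing through other vertices of $A$} (the locality property, needed because each pendant sees only one vertex of $A$), which forces the partition scale to lie strictly below $r(v)$; and $\pi$ must be locally constant at the \emph{varying} scale $\Theta(r(v))$, so that no single-scale partition works and the logarithmic random shift $\sigma$ is exactly what stitches the scales together while paying only $O(d_G(u,v)/r(u))$ at the seams. Verifying the locality property carefully --- in particular the identity $d_{G'}(v,A)=r(v)$ and the fact that sub-$r(v)$-scale clusters around $v$ miss $A$ --- is where the precise definition of $G'$ (deleting exactly the edges inside $A$) is used.
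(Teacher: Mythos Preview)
The paper does not prove this lemma: it is quoted verbatim from Lee--Sidiropoulos \cite{DBLP:conf/stoc/LeeS09} and used as a black box, so there is no in-paper proof to compare against.

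That said, your argument is correct and is essentially the original Lee--Sidiropoulos construction. The key ingredients---building $Y$ as a single star-shaped $1$-sum with $G[A]$ at the center and one pendant $G[(V\setminus A)\cup\{a\}]$ per $a\in A$; observing $d_{G'}(v,A)=d_G(v,A)$ because a shortest path to $A$ uses no $A$-internal edge; assigning each $v\notin A$ a random anchor $\pi(v)\in A$ via a Lipschitz partition at scale $\Theta(r(v))$ together with a global random shift $\sigma$ to stitch scales; and the case split on whether $d_G(u,v)\gtrless r(u)+r(v)$---are exactly the ones in the cited source. A couple of minor points worth tightening: the claim $\Pr[i(u)\neq i(v)]=O(d_G(u,v)/r(u))$ really needs $d_G(u,v)<r(u)/2$ (so that $|\log_2 r(u)-\log_2 r(v)|<1$), not just $d_G(u,v)<r(u)$, but the complementary range is again absorbed into the ``trivial'' case; and when $\pi(u)=\pi(v)=a$ the equality $d_Y(f(u),f(v))=d_G(u,v)$ should first be stated as an inequality (path inside the single pendant) and then upgraded via the already-proved non-contraction. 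Neither affects the conclusion.
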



\subsection{Organization}

The rest of the paper is organized as follows.
In Section \ref{sec:funnels} we show how to embed an arbitrary non-positively curved planar metric into an unweighted graph of special structure, called a \emph{funnel}.
In Section \ref{sec:pyramids} we show how to stochastically embed a funnel into a distribution over simpler graphs, called \emph{pyramids}.
In Section \ref{sec:monotone} we introduce some of the machinery that we will use when defining our embedding into $L_1$.
More specifically, we describe the basic operation of cuts that will allow to gradually modify a cut when computing our embedding.
Using this machinery, we describe our embedding in Section \ref{sec:embedding}.
Finally, in Section \ref{sec:distortion} we prove that the constructed embedding has constant distortion.

\section{A canonical representation of non-positively curved planar metrics}\label{sec:funnels}

In this section we show that non-positively curved planar metrics can be embedded with constant-distortion into a certain type of unweighted planar graphs that we call \emph{funnels}.
Intuitively, a funnel is obtained by taking the union of a tree having all its leaves at the same level, with a collection of cycles, where every cycle spans all the vertices in a single layer of the tree.

\begin{definition}[Funnel]
Let $G$ be an unweighted planar graph, and let $v\in V(G)$.
We say that $G$ is a \emph{funnel} with \emph{basepoint} $v$ if the following conditions are satisfied:
\begin{description}
\item{(1)}
There exists a collection of pairwise vertex-disjoint cycles $C_1,\ldots,C_{\Delta} \subset G$, such that $V(G) = \bigcup_{i=1}^{\Delta} V(C_i)$.
For notational convenience, we allow a cycle $C_i$ to consist of a single vertex, in which case it has no edges.
Moreover, we have $V(C_1) = \{v\}$.
We refer to each $C_i$ as a \emph{layer} of $G$.

\item{(2)}
For every $i\in \{2,\ldots,\Delta-1\}$, the graph $G\setminus V(C_i)$ has exactly two connected components, one with vertex set $\bigcup_{j=1}^{i-1} V(C_j)$, and another with vertex set $\bigcup_{j=i+1}^{\Delta} V(C_j)$.

\item{(3)}
For every $i\in \{2,\ldots,\Delta\}$, every $u\in V(C_i)$ has exactly one neighbor $u' \in V(C_{i-1})$.
We refer to $u'$ as the \emph{parent} of $u$.
In particular, $v$ is the parent of all vertices in $V(C_2)$.

\item{(4)}
For every $i\in \{1,\ldots,\Delta-1\}$, every $w\in V(C_i)$ has at least one neighbor $w' \in V(C_{i+1})$.
We refer to every such $w'$ as a \emph{child} of $w$.
\end{description}
Let $R$ be a path in $G$ between $v$, and a vertex $u\in V(C_{\Delta})$.
We say that $R$ is a \emph{ray}.
We denote by $\funnels$ the family of all funnel graphs.
Figure \ref{fig:funnel} depicts an example of a funnel.
\end{definition}

\begin{figure}
\begin{center}
\scalebox{0.8}{\includegraphics{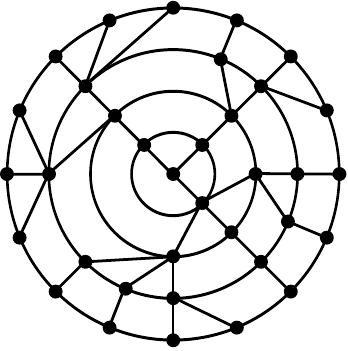}}
\caption{A funnel.\label{fig:funnel}}
\end{center}
\end{figure}

We will use the following two facts about metric spaces of non-positive curvature (see e.g.~\cite{papadopoulos2005metric}).

\begin{lemma}\label{lem:convexity}
Let $({\cal S},d)$ be a geodesic metric space of non-positive curvature.
Let $x^*,x,y\in {\cal S}$, and let $\gamma:[0,d(x,y)]\to {\cal S}$ be a geodesic between $x$, and $y$.
Then, the function $f:[0,1] \to \mathbb{R}$, with $f(t) = d(x^*,\gamma(t))$ is convex.
\end{lemma}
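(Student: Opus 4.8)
The statement to prove is Lemma~\ref{lem:convexity}: in a non-positively curved geodesic metric space (in the sense of Busemann), the distance from a fixed point $x^*$ to a point moving along a geodesic $\gamma$ from $x$ to $y$ is a convex function of the parameter.

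\paragraph{Plan of proof.}
The plan is to derive this as a direct consequence of the defining property of non-positive curvature stated in the preliminaries, applied to a cleverly chosen pair of geodesics. Recall that $(X,d)$ is non-positively curved if for \emph{every} pair of affinely parameterized geodesics $\gamma : [a,b] \to X$ and $\gamma' : [a',b'] \to X$, the two-variable function $D_{\gamma,\gamma'}(t,t') = d(\gamma(t),\gamma'(t'))$ is convex on $[a,b]\times[a',b']$. The key observation is that a single point is a degenerate (constant) geodesic: define $\gamma' : [0,1] \to X$ by $\gamma'(t') = x^*$ for all $t'$. This is trivially an affinely parameterized geodesic (its image is a point, and the arclength condition holds vacuously). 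Then $D_{\gamma,\gamma'}(t,t') = d(\gamma(t), x^*)$ does not depend on $t'$, so the convexity of $D_{\gamma,\gamma'}$ on the product $[0,d(x,y)]\times[0,1]$ (after rescaling the domain of $\gamma$ to $[0,1]$, which preserves affine parameterization) restricts along any line $t' = \text{const}$, in particular along the whole slab, to convexity of $t \mapsto d(x^*,\gamma(t))$. That is exactly the claim.

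\paragraph{Steps, in order.}
First I would note that $f(t) = d(x^*,\gamma(t))$ with $\gamma$ parameterized proportionally to arclength on $[0,1]$ is the function in question, and that reparameterizing $\gamma$ from $[0,d(x,y)]$ to $[0,1]$ is an affine change of variable, hence does not affect convexity. Second, I would introduce the constant map $\gamma' \equiv x^*$ on $[0,1]$ and check it qualifies as an affinely parameterized geodesic under whatever convention the paper adopts (degenerate/constant geodesics being allowed, as is standard; if one is squeamish, one can instead take $\gamma'$ to be the constant geodesic on a nondegenerate interval, or take a limit of short geodesics through $x^*$). Third, I would invoke the non-positive curvature hypothesis to conclude $D_{\gamma,\gamma'}$ is convex on $[0,1]^2$. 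Fourth, since $D_{\gamma,\gamma'}(t,t') = f(t)$ is independent of $t'$, convexity in the joint variables (in particular along the segment from $(0,0)$ to $(1,0)$) gives $f(\lambda t_1 + (1-\lambda) t_2) \le \lambda f(t_1) + (1-\lambda) f(t_2)$ for all $t_1,t_2 \in [0,1]$ and $\lambda \in [0,1]$, completing the argument.

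\paragraph{Anticipated obstacle.}
There is essentially no analytic difficulty here; the only genuine subtlety is definitional: whether the Busemann non-positive-curvature axiom as stated quantifies over geodesics in a way that permits the constant (degenerate) geodesic $\gamma' \equiv x^*$. If the paper's convention forbids degenerate geodesics, the clean workaround is to apply the axiom to $\gamma$ together with a short nondegenerate geodesic $\gamma'_\delta$ whose image contains $x^*$ and has length $\delta$, obtain convexity of $(t,t') \mapsto d(\gamma(t), \gamma'_\delta(t'))$, and then let $\delta \to 0$: evaluating at a fixed $t'$ (say the one mapping to $x^*$) and using that convexity is preserved under pointwise limits recovers convexity of $f$. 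Alternatively, one can simply cite the standard reference (e.g.~\cite{papadopoulos2005metric}) where this is recorded as a basic property of Busemann non-positively curved spaces. Either way, the substance of the lemma is immediate from the definition, so I expect the write-up to be short.
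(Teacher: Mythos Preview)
Your argument is correct: taking $\gamma'$ to be the constant geodesic at $x^*$ and invoking the Busemann convexity hypothesis for the pair $(\gamma,\gamma')$ is the standard way to extract convexity of $t\mapsto d(x^*,\gamma(t))$, and your remark about handling a convention that forbids degenerate geodesics is also fine.

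However, there is nothing to compare against: the paper does not prove this lemma at all. It is introduced with ``We will use the following two facts about metric spaces of non-positive curvature (see e.g.~\cite{papadopoulos2005metric})'' and is simply quoted from the literature without proof. Your write-up therefore supplies more than the paper does; the final option you mention---citing the reference---is exactly what the paper itself does.
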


\begin{lemma}\label{lem:divergence}
Let $({\cal S},d)$ be a geodesic metric space of non-positive curvature, and let $x^*,x,y\in {\cal S}$.
Let $\gamma_x:[0,d(x^*,x)]\to {\cal S}$ be a geodesic between $x^*$, and $x$, and let $\gamma_y:[0,d(x^*,y)]\to {\cal S}$ be a geodesic between $x^*$, and $y$.
Then, the function $f:[0,1] \to \mathbb{R}$, with $f(t) = d(\gamma_x(t),\gamma_y(t))$ is non-decreasing.
\end{lemma}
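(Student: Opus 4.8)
The plan is to derive the statement directly from the Busemann convexity axiom, by reducing the monotonicity of $f$ to the convexity of a one-variable restriction of the two-variable distance function associated to the pair of geodesics.

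First I would pass to affine (constant-speed) parametrizations of $\gamma_x$ and $\gamma_y$ over the common interval $[0,1]$, so that $\gamma_x(0)=\gamma_y(0)=x^*$, $\gamma_x(1)=x$, $\gamma_y(1)=y$; this is the reparametrization implicit in the statement, since $f$ is declared on $[0,1]$. Both reparametrized curves are still affinely parametrized geodesics, so the definition of non-positive curvature applies to the pair $(\gamma_x,\gamma_y)$: the function $D(t,t') = d(\gamma_x(t),\gamma_y(t'))$ is convex on $[0,1]\times[0,1]$. Now $f(t) = d(\gamma_x(t),\gamma_y(t)) = D(t,t)$ is the composition of the jointly convex function $D$ with the affine map $t\mapsto(t,t)$, hence $f$ is convex on $[0,1]$. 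Moreover $f(0) = d(x^*,x^*) = 0$ and $f\ge 0$ everywhere.

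It then remains to invoke the elementary fact that a convex function $f$ on $[0,1]$ with $f(0)=0$ and $f\ge 0$ is non-decreasing: for $0\le s\le t\le 1$ with $t>0$, writing $s = \tfrac{s}{t}\cdot t + (1-\tfrac{s}{t})\cdot 0$ and using convexity gives $f(s)\le \tfrac{s}{t}f(t)\le f(t)$, since $\tfrac{s}{t}\le 1$ and $f(t)\ge 0$; the case $t=0$ is immediate. This proves the lemma.

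This argument is essentially a direct unwinding of the definition, so I do not expect a genuine technical obstacle; the only points needing a moment of care are (i) that both geodesics must be reparametrized over the same interval $[0,1]$ for the two-variable convexity statement of the axiom to be applicable in the form needed here, and (ii) the degenerate cases $x=x^*$ or $y=x^*$, where the corresponding geodesic is a constant curve — but the convexity axiom and the chord estimate both remain valid in that case. I would remark in passing that the companion statement Lemma~\ref{lem:convexity} follows by the same reduction: restricting $D_{\gamma,\gamma'}$ to the line $\{t'=0\}$, with $\gamma'$ the constant geodesic at $x^*$, shows that $t\mapsto d(x^*,\gamma(t))$ is convex.
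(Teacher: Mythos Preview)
Your argument is correct. Note, however, that the paper does not actually supply its own proof of Lemma~\ref{lem:divergence}: it merely states the lemma (together with Lemma~\ref{lem:convexity}) as a known fact about non-positively curved spaces, with a reference to \cite{papadopoulos2005metric}. So there is nothing to compare your approach \emph{against}; what you have written is precisely the standard derivation one finds in that reference---restricting the jointly convex two-variable distance function $D_{\gamma_x,\gamma_y}$ to the diagonal, and then using that a nonnegative convex function vanishing at the left endpoint is non-decreasing. Your handling of the reparametrization over a common interval $[0,1]$ and of the degenerate cases is appropriate.
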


Recall that for a metric space $(X,d)$, and some $r>0$, an \emph{$r$-net} in $(X,d)$ is a maximal subset $X'\subseteq X$ such that for any $x,y\in X'$, we have $d(x,y)\geq r$.

\begin{lemma}[Funnel representation]\label{lem:funnels}
Let ${\cal S}$ be a simply-connected surface, and let $d$ be a non-positively curved metric on ${\cal S}$.
Let $X \subset {\cal S}$ be a finite set of points.
Then, $(X,d)$ admits an embedding into a  funnel with constant distortion.
\end{lemma}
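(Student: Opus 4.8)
The plan is to build the funnel as a ``discretized geodesic structure'' on a net of $X$ rooted at a suitable basepoint, and to use the two convexity facts (Lemma~\ref{lem:convexity} and Lemma~\ref{lem:divergence}) to control distances. First I would fix a basepoint: take $x^*\in X$ to be (say) a point maximizing $d(x^*, \cdot)$ to some other point, or more conveniently augment $X$ by a single point $x^*$ so that every point of $X$ lies on a geodesic from $x^*$; since ${\cal S}$ is a simply-connected non-positively curved surface, geodesics from $x^*$ are unique and depend continuously on the endpoint, so the union of geodesics from $x^*$ to the points of $X$ forms a ``geodesic spray.'' After rescaling, assume the minimum interpoint distance in $X$ is at least some large constant and the diameter is $\Delta$. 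For each integer $i\in\{1,\dots,\Delta\}$ I would let $C_i$ be a cycle through an $r$-net (with $r$ a small constant) of the ``sphere'' $\{z : d(x^*,z)=i\}$ intersected with the spray, ordered cyclically the way the surface's planar/cyclic structure around $x^*$ dictates; layer $C_1=\{x^*\}$. The parent of a net point $u$ at level $i$ is the net point at level $i-1$ lying on (or nearest to the geodesic from $x^*$ to) $u$; this gives the tree edges, and consecutive net points within a level give the cycle edges.

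The key steps, in order: (i) verify the combinatorial funnel axioms (1)--(4) — that the $C_i$ are vertex-disjoint and cover, that removing a layer disconnects the graph into the ``inside'' and ``outside'' pieces (this uses the fact that a geodesic from $x^*$ to any point of $X$ crosses each sphere exactly once, which is Lemma~\ref{lem:convexity}: $t\mapsto d(x^*,\gamma(t))$ is convex and starts at $0$, hence is non-decreasing, in fact strictly increasing along a geodesic from $x^*$, so each level-$i$ net point has a well-defined parent at level $i-1$ and the nesting structure holds), and that the planar cyclic ordering makes each $C_i$ a genuine cycle; (ii) bound distances upward (non-expansion): a path in the funnel following tree edges down to $x^*$ and back up has length $\le i+j$ for points at levels $i,j$, and more refined paths that move along a cycle at a common ancestor level $k$ have length roughly $(i-k)+(j-k)+(\text{length along }C_k)$ — I need the along-$C_k$ length between the two ancestors to be $O(d_{\cal S}(\text{those ancestors}))$ up to the net scale, which follows because a cycle through an $r$-net of a curve behaves like the curve's length plus an $O(1)$ error per step; (iii) bound distances downward (non-contraction up to constant): given $x,y\in X$ at levels $i\le j$, let $\gamma_x,\gamma_y$ be the geodesics from $x^*$; by Lemma~\ref{lem:divergence}, $d(\gamma_x(t),\gamma_y(t))$ is non-decreasing in $t$, so the geodesics stay close up to the level where they ``split'' and then diverge monotonically. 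The funnel distance should be comparable to $(i+j) - 2k_{xy}$ where $k_{xy}$ is the largest level at which the net-ancestors of $x$ and $y$ are within $O(1)$ of each other in the cycle; Lemma~\ref{lem:divergence} guarantees that once they diverge they never come back, so this ``split level'' is well-defined, and convexity (Lemma~\ref{lem:convexity}, applied to the comparison with the triangle $x^* x y$) pins $d_{\cal S}(x,y)$ to within a constant factor of the same quantity.

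The main obstacle I expect is step (iii) together with the interaction between the net discretization and the ``splitting level'': I must show that the coarse combinatorial quantity ``$(i+j)-2k_{xy}$'' computed in the funnel is within a constant factor of the true distance $d_{\cal S}(x,y)$, in \emph{both} directions, and that this is robust to the net having granularity $r$. The upper direction requires exhibiting an explicit funnel path of the right length (go down each geodesic to a near-common level, hop along that layer's cycle, come back up) and checking the hop is cheap — here non-positive curvature via Lemma~\ref{lem:divergence} is what prevents the geodesics from wandering apart and back, so the ancestors at the split level really are $O(1)$-close along the cycle. The lower direction requires that the funnel cannot ``cheat'': any funnel path from $x$ to $y$ must pass through every layer between $\min(i,j)$ and the split level on \emph{some} cycle, accumulating length, which is axiom~(2)'s separation property, combined with the fact that within-layer cycle length lower-bounds the surface distance between the endpoints' ancestors (again up to the net scale $r$, which we made a constant and which we absorb by the additive-to-multiplicative trick since all nontrivial distances are $\ge$ some large constant). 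Once these comparisons are in place with universal constants, composing with the rescaling gives the claimed constant-distortion embedding of $(X,d)$ into $\funnels$.
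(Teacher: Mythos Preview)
Your overall architecture matches the paper's --- pick a basepoint $x^*$, take $r$-nets on concentric spheres, connect consecutive net points into cycles and each net point to a parent on the previous sphere, then map $X$ to nearest net points. But the distance analysis you propose has a genuine gap.

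The formula you rely on, that both $d_G(x,y)$ and $d_{\cal S}(x,y)$ are comparable to $(i+j)-2k_{xy}$ where $k_{xy}$ is the ``split level'', is essentially a Gromov-product/tree-distance estimate. It is valid in $\delta$-hyperbolic situations but fails for flat regions, which are permitted under non-positive curvature. Take ${\cal S}=\mathbb{R}^2$ with $x^*$ the origin and $x,y$ on the sphere of radius $R$ with $d_{\cal S}(x,y)=L\ll R$. The rays from $x^*$ to $x$ and $y$ are separated by $\Theta(\ell L/R)$ at level $\ell$, so the net-ancestors are $O(1)$-close in the cycle only for $\ell=O(R/L)$; hence $k_{xy}=O(R/L)$ and $(i+j)-2k_{xy}\approx 2R$. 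But $d_{\cal S}(x,y)=L$, and the true funnel distance is also $\Theta(L)$ (just walk along $C_i$). Your ``go up to the split level and hop'' path has length $\approx 2R\gg L$, so it does not witness $d_G=O(d_{\cal S})$; and Lemma~\ref{lem:convexity} certainly cannot pin $d_{\cal S}$ to $\Theta(2R)$. Non-positive curvature gives you monotone divergence of rays, not linear divergence, and that difference is exactly what breaks the split-level bookkeeping.

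The paper never introduces a split level. For $d_G\ge\Omega(d_{\cal S})$ it just checks that every funnel edge has ${\cal S}$-length $O(1)$ (Lemma~\ref{lem:convexity} is used to show consecutive net points on a circle are $O(r)$ apart). For $d_G\le O(d_{\cal S})$ on a single layer it takes the actual ${\cal S}$-geodesic $\beta$ from $x$ to $y$, observes via Lemma~\ref{lem:convexity} that $\beta\subset D_i$, and argues that $\beta$ must cross the radial geodesic through each of the intermediate net points one layer down, with consecutive crossings $\Omega(1)$ apart; this bounds the number $k$ of such net points by $O(\len(\beta))$ and exhibits a funnel path of length $k+2$. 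General pairs are handled by chopping $\beta$ into segments each lying in a single annulus $D_i\setminus D_{i-1}$ and summing. The correct short funnel path therefore goes \emph{across} at roughly the current level, not up to a near-common ancestor.

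A smaller issue: if by ``net of the sphere intersected with the spray'' you literally mean a net of the finite set $\Gamma_i\cap\bigcup_{x\in X}\gamma(x)$, that also fails --- with $|X|=2$ each layer has two points joined by a single cycle edge of unit $G$-length but arbitrary ${\cal S}$-length, so edges do not have bounded ${\cal S}$-length. The paper takes $r$-nets of the \emph{full} circles $\Gamma_i$, arranging (by building the nets top-down along a fixed geodesic tree $T=\bigcup_{p\in N_\Delta}\gamma(p)$) that parents can be assigned consistently.
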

\begin{proof}
By scaling $d$, we may assume w.l.o.g.~that the minimum distance in $(X,d)$ is at least 1 (note that scaling $d$ results into a metric which is still of non-positive curvature).
Let $x^* \in {\cal S}$ be an arbitrary point.
For any $x\in {\cal S}$, let $\gamma(x)$ denote the unique geodesic between $x$, and $x^*$.
Let $r=1/8$.
For any integer $i\geq 0$, let
\[
D_i = \{x \in {\cal S} : d(x^*,x) \leq ir\}.
\]
Since $d$ is non-positively curved, we have that for every $i$, the set $D_i$ is a disk (see e.g.~\cite{papadopoulos2005metric}).
Let $\Gamma_i$ be the cycle in ${\cal S}$ bounding $D_i$.
Let $\Delta = \min\{i\in \mathbb{N} : X \subset D_i\}$.
Let $N_{\Delta}$ be an $r$-net in $\Gamma_{\Delta}$.
Note that since $({\cal S},d)$ is non-positively curved, there exists a unique geodesic between any pair of points.
This implies that the subspace
\[
T = \bigcup_{x\in N_{\delta}} \gamma(x)
\]
is a (simplicial) tree.
For every $i\in \{0,\ldots,\Delta-1\}$, we define an $r$-net $N_i$ of $\Gamma_i$ as follows.
Suppose that $N_{i+1}$ is already defined.
Let $Y_i'$ be the set of points $p\in N_{\Delta}$ such that $\gamma(p)$ intersects $N_{i+1}$.
Let $N_i' = \Gamma_i \cap (\bigcup_{p\in Y_i'} \gamma(p))$.
Note that for any $x\in \Gamma_i$, there exists $y\in N_i'$ such that $d(x,y) <r$.
Therefore, we can set to be a maximal subset $N_i\subseteq N_i'$, such that $N_i$ is an $r$-net.
This concludes the definition of the sequence of subsets $N_0,\ldots,N_{\Delta}$. Note that $N_0 = \{x^*\}$.

We define a graph $G$, with $V(G) = \bigcup_{i=0}^{\Delta} N_i$.
The set of edges $E(G)$ is defined as follows.
For every $i\in \{1,\ldots,\Delta\}$, we add a unit-length edge $\{x,y\}\in E(G)$ for any two points $x,y\in N_i$, such that $x$, and $y$ appear consecutively in a clockwise traversal of $\Gamma_i$.
Moreover, for every $z\in N_i$, let $z'\in N_{\Delta}$ be such that $z\in \gamma(z')$.
Let $z''$ be the point in the intersection of $\gamma(z')$ with $\Gamma_{i-1}$.
If $z''\in N_{i-1}$, then we add the unit-length edge $\{z,z''\}$.
Otherwise, let $w$ be the first point in $N_{i-1}$ that we visit in a clockwise traversal of $\Gamma_{i-1}$ starting from $z''$.
We add the unit-length edge edge $\{z,w\}$.
This concludes the definition of the graph $G$.
It is straightforward to check that $G$ is a funnel with basepoint $x^*$.

We can now define an embedding $f:X \to V(G)$, by mapping every points $x\in X$ to its nearest neighbor in $V(G)$.
It remains to verify that $f$ has constant distortion.
Observe that the set $V(G)$ contains a $2r$-net in $D_{\Delta}$, and therefore for any $x\in X$, we have $d(x,f(x))<2r$.
Since the minimum distance in $X$ is at least 1, this implies that $f$ is an injection, and for any $x,y\in X$, we have $d(x,y) = \Theta(d(f(x),f(y)))$.
It therefore suffices to show that for any $x,y\in V(G)$, we have $d_G(x,y) = \Theta(d(x,y))$.

We first show that for any $x,y\in V(G)$, we have $d_G(x,y) = \Omega(d(x,y))$.
To that end, it suffices to show that for any edge $\{x,y\} \in E(G)$, we have $d(x,y) = O(d_G(x,y)) = O(1)$.

We consider first case where there exists $i\in \{1,\ldots,\Delta\}$ such that $x,y\in N_i$, and $x$, $y$ are consecutive in $\Gamma_i$.
Let $\alpha$ be the arc of $\Gamma_i$ between $x$, and $y$, that does not contain any other points in $N_i$.
By the triangle inequality, there exists $z \in \alpha$, such that $d(x,z)\geq d(x,y)\geq 2$, and $d(y,z) \geq d(x,y)\geq 2$.
Since $N_i$ is an $r$-net in $\Gamma_i$, it follows that there exists $z'\in N_i$, such that $d(z,z') < r$.
Let $\beta$ be the geodesic between $z$, and $z'$.
The arc $\beta$ intersects either $\gamma(x)$, or $\gamma(y)$.
Assume w.l.o.g.~that it intersects $\gamma(x)$ at some points $z''$.
By lemma \ref{lem:convexity} we have that 
as we travel along $\beta$, the distance to $x^*$ is a convex function.
This implies that $d(x,z'') \leq d(z,z')$.
We conclude that $d(x,y) \leq 2 d(x,z) \leq 2(d(x,z'') + d(z'',z)) \leq 2(d(x,z'') + d(z',z)) \leq 4 d(z,z') \leq 4r = O(1)$.

Next, we consider the case where $x\in N_i$, and $y\in N_{i+1}$, for some $i\in \{0,\ldots,\Delta\}$.
Let $y'$ be the point where $\gamma(y)$ intersects $\Gamma_i$.
Arguing as above, we have that $d(y',x) = O(1)$.
Therefore, $d(x,y) \leq d(y,y') + d(y',x) \leq r + O(1) = O(1)$.
This concludes that proof that for any edge $\{x,y\}\in E(G)$, we have $d(x,y) = O(1)$, and therefore for any $x,y\in V(G)$, we have $d_G(x,y) = \Omega(d(x,y))$.

It remains to show that for any $x,y\in V(G)$, we have $d_G(x,y) = O(d(x,y))$.
We consider first the case where there exists $i\in \{1,\ldots,\Delta\}$, such that $x,y\in N_i$ (the case $i=0$ is trivial since $N_0$ contains only $x^*$).
Let $\beta$ be a geodesic between $x$, and $y$.
By lemma \ref{lem:convexity}, we have $\beta\subset D_i$.
Let $x'$ be the unique point in $\gamma(x) \cap \Gamma_{i-1}$, and let $y'$ be the unique point in $\gamma(y) \cap \Gamma_{i-1}$.
By lemma \ref{lem:divergence} we have $d(x',y') \leq d(x,y)$.
Let $x''$ be the parent of $x$, and let $y''$ be the parent of $y$ in $G$.
Let $x'=z_1,\ldots,z_k=y'$ be the points in $N_{i-1}$ that appear between $x'$, and $y'$ along $\Gamma_{i-1}$.
For any $i\in \{1,\ldots,k\}$, pick a child $w_i$ of $z_i$, with $w_1=x$, and $w_k=y$.
For any $i\in \{1,\ldots,k\}$, the curve $\beta$ intersects $\gamma(w_i)$.
By the above discussion we have that the distance between any two such consecutive intersection points is $\Omega(1)$.
Therefore, $d(x,y) = \len(\beta) = \Omega(k)$.
The $x$-$y$ path in $G$ that visits the vertices $x z_1 \ldots z_k y$ in this order has length $k+2$, and therefore $d_G(x,y) = O(d(x,y))$.

Next, we consider the case where there exists $i\in \{1,\ldots,\Delta\}$, such that $x\in N_i$, and $y\in N_{i-1}$.
This case is identical to the case above, by replacing $y$ with $y''$.
We therefore also obtain $d_G(x,y) = O(d(x,y))$ in this case.

Finally, we consider the case of arbitrary points $x,y\in V(G)$.
Let $\beta$ be the geodesic between $x$, and $y$.
The curve $\beta$ can be decomposed into consecutive segments $\beta_1,\ldots,\beta_k$, such that every such segment is contained in (the closure of) $D_i \setminus D_{i-1}$, for some $i\in \{1,\ldots,\Delta \}$.
Consider such a segment $\beta_i$.
There exists $j,\ell\in \{0,\ldots,\Delta\}$, with $|j-\ell|\leq 1$, and such that $x_i\in \Gamma_j$, and $y_i \in \Gamma_{\ell}$.
Let $x_i'$ be the nearest neighbor of $x_i$ in $N_{j}$, and let $y_i'$ be the nearest neighbor of $y_i$ in $N_{\ell}$.
Since $N_j$ is a $O(1)$-net for $\Gamma_j$, and $N_{\ell}$ is a $O(1)$-net for $\Gamma_{\ell}$, we have $d(x_i',y_i') \leq d(x_i,y_i) + O(1) = O(d(x_i,y_i))$.
By the above analysis we have $d_G(x_i', y_i') = O(d(x_i',y_i'))$.
Therefore, we obtain $d_G(x_i,y_i) = O(d(x_i,y_i))$.
We conclude that $d_G(x,y) \leq \sum_i d_G(x_i,y_i) = O(\sum_i d(x_i,y_i)) = O(d(x,y))$, as required.
\end{proof}

\section{Cutting along a ray}\label{sec:pyramids}

We now show that every funnel admits a constant-distortion stochastic embedding into a distribution over simpler graphs, that we call \emph{pyramids}.
Intuitively, a pyramid is obtained by ``cutting'' a funnel along a ray.
The structure of pyramids will simplify the exposition of the embedding into $L_1$ that we describe in the subsequent sections.

\begin{definition}[Pyramid]
Let $G$ be an unweighted planar graph, let $v\in V(G)$, and let $\Delta\geq 1$ be an integer.
We say that $G$ is a \emph{pyramid} with \emph{basepoint} $v$, and of \emph{depth} $\Delta$ if the following conditions are satisfied:
\begin{description}
\item{(1)}
There exists a collection of pairwise vertex-disjoint paths $P_1,\ldots,P_{\Delta} \subset G$,
with $P_i = u_{i,1} \ldots u_{n_i,i}$,
such that $V(G) = \bigcup_{i=1}^{\Delta} V(P_i)$.
For notational convenience, we allow a path $P_i$ to consist of a single vertex, in which case it has no edges.
Moreover, we have $V(P_1) = \{v\}$.
We refer to each $P_i$ as a \emph{layer}.

\item{(2)}
For every $i\in \{2,\ldots,\Delta-1\}$, the graph $G\setminus V(P_i)$ has exactly two connected components, one with vertex set $\bigcup_{j=1}^{i-1} V(P_j)$, and another with vertex set $\bigcup_{j=i+1}^{\Delta} V(P_j)$.

\item{(3)}
For every $i\in \{2,\ldots,\Delta\}$, every $u\in V(P_i)$ has exactly one neighbor $u$ in $V(P_{i-1})$.
We refer to this neighbor as the \emph{parent} of $u$.
In particular, $v$ is the parent of all vertices in $V(P_2)$.

\item{(4)}
For every $i\in \{1,\ldots,\Delta-1\}$, every $w\in V(P_i)$ at least one neighbor $w'$ in $V(P_{i-1})$.
We refer to every such $w'$ as a \emph{child} of $w$.

\item{(5)}
For any $i\in \{1,\ldots,\Delta-1\}$, and for any $\{u_{i,j},u_{i+1,j'}\}, \{u_{i,t},u_{i+1,t'}\}\in E(G)$, we have $j\leq t \iff j'\leq t'$.
In other words, the ordering of the vertices in $P_{i+1}$ agrees with the ordering of their parents in $P_i$.
\end{description}
We say that a path $R$ in $G$ between $v$, and a vertex $u\in V(P_\Delta)$, is a \emph{ray}.
We denote by $\pyramids$ the family of all pyramid graphs.
Figure \ref{fig:pyramid} depicts an example of a pyramid.
\end{definition}

\begin{definition}[Skeleton of a pyramid]
Let $G$ be a pyramid with basepoint $v\in V(G)$.
We define the \emph{skeleton} of $G$ to be a tree $T$, with $V(T) = V(G)$, with root $v$, and with 
\[
E(T) = \left\{\{x,y\} \in {V(G) \choose 2} : x \text{ is the parent of } y \right\}.
\]
For any $x,y\in V(G)$, we denote by $\nca$ the nearest common ancestor of $x$, and $y$ in $T$.
We also define for any $x\in V(G)$,
\[
\depth(x) = d_T(v,x)+1.
\]
Figure \ref{fig:pyramid} depicts an example of a skeleton.
\end{definition}

\begin{figure}
\begin{center}
\scalebox{0.8}{\includegraphics{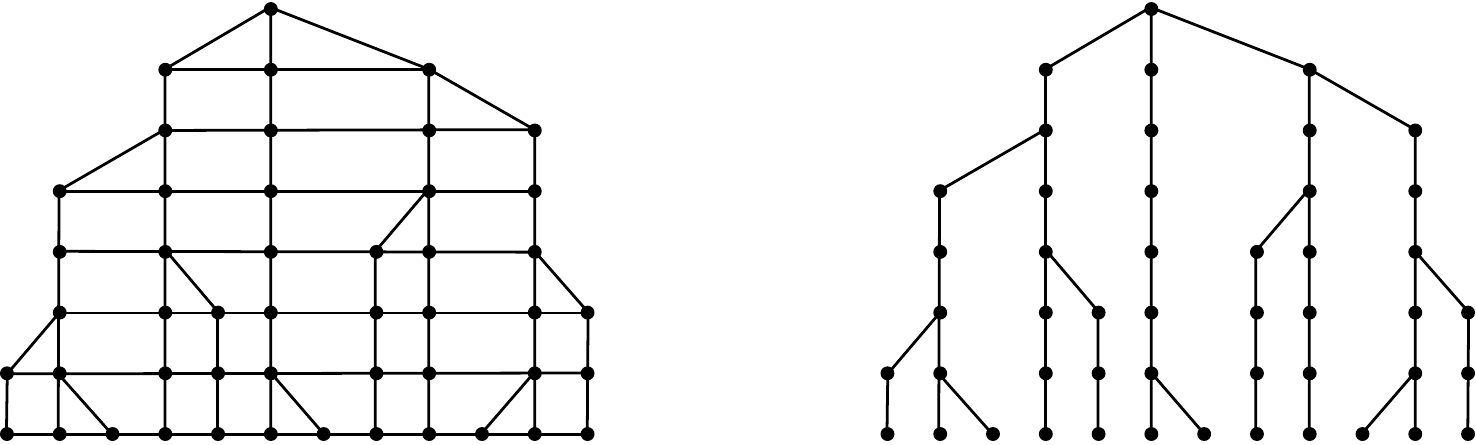}}
\caption{A pyramid (left), and its skeleton (right).\label{fig:pyramid}}
\end{center}
\end{figure}

\begin{definition}[$\prec$]
For any $i\in \{1,\ldots,\Delta\}$, for any $u_{i,j}, u_{i,j'} \in V(P_i)$, with $j<j'$, we write $u_{i,j} \prec u_{i,j'}$.
Moreover, for any $x,y\in V(G)$, such that $x$, and $y$ do not lie on the same ray, let $z=\nca(x,y)$, and let $x'$ (resp.~$y'$) be the child of $z$ in the $z$-$x$ (resp.~$z$-$y$) path in $T$.
Then, we write $x \prec y$ if and only if $x' \prec y'$.
Finally, for any $x'',y''\in V(G)$, we write $x'' \preceq y''$ if and only if either $x'' \preceq y''$, or $x''$, and $y''$ lie on the same ray.
\end{definition}

\begin{lemma}[Pyramid representation]\label{lem:pyramids}
For every funnel $G$, we have $G \leadsto \oplus_1\{\pyramids\}$.
\end{lemma}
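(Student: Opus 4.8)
The plan is to exhibit a distribution over pyramids, together with a non-contracting map from $G$ to each such pyramid, with expected expansion $O(1)$; by Lemma~\ref{lem:stochastic_L_1} (applied afterwards) this is exactly what $G\leadsto\oplus_1\{\pyramids\}$ requires. The natural way to produce a pyramid from a funnel $G$ with basepoint $v$ and layers $C_1,\ldots,C_\Delta$ is to pick a ray $R$ (a $v$-to-$C_\Delta$ path in $G$) and ``cut'' $G$ open along $R$: concretely, for each layer $C_i$, let $R$ meet $C_i$ in the vertex $r_i$, delete the (at most one) edge of $C_i$ incident to $r_i$ that keeps $C_i$ a cycle rather than a path, thereby turning each cycle $C_i$ into a path $P_i$. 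The parent/child structure of the tree (condition (3),(4)) is inherited unchanged from the funnel, conditions (1),(2) follow since deleting an edge from each cycle keeps the ``two components'' separation property, and condition (5)---the consistency of the left-to-right orderings between consecutive layers---holds because we always cut at the ray, so each path $P_i$ is obtained by opening the cycle $C_i$ at the same ``seam,'' and the planar embedding of $G$ ensures the cyclic orders of parents and children are compatible. So one deterministic cut already yields a pyramid; the only issue is distortion.

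First I would check that the identity map $V(G)\to V(G)$, viewed as a map from $G$ into the cut-open pyramid $G_R$, is non-contracting: removing edges can only increase shortest-path distances, so $d_{G_R}(x,y)\ge d_G(x,y)$ for all $x,y$. The substance is bounding $\mathbb E[d_{G_R}(x,y)]$ by $O(1)\cdot d_G(x,y)$ over a random choice of $R$. The danger is a pair $x,y$ that are close in $G$ via a path crossing the seam, but far in $G_R$ because the seam now forces a long detour. This is precisely the ``peeling along a geodesic/ray'' situation, and the idea is to choose $R$ randomly so that a fixed short $x$-$y$ path is unlikely to be cut, with the failure probability proportional to $d_G(x,y)$ divided by the local ``circumference'' at the relevant layer, and the penalty when it \emph{is} cut being at most that circumference. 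In a funnel coming from a non-positively curved surface, the layer $C_i$ is (a net on) the metric circle $\Gamma_i$ of radius $ir$ about $x^*$, whose length is within a constant factor of $\min_j (\text{length of }\Gamma_j)$ for $j$ near $i$ down to... no: actually lengths can grow, but the key convexity/divergence facts (Lemma~\ref{lem:convexity}, Lemma~\ref{lem:divergence}) guarantee the circumference is monotone and grows at a controlled rate, so that a geodesic between two points at layer $i$ has length $\Theta$ of the arc it subtends, which in turn controls the detour. Formally I would let $R$ be determined by choosing its endpoint in $C_\Delta$ from a distribution (e.g.\ roughly uniform along $C_\Delta$, or more carefully a distribution that at each layer is near-uniform on that layer's cycle), pushing it down to $v$ via parents; then for a pair $x,y$ with, say, $z=\nca(x,y)$ at depth $i$, the event ``$R$ separates $x$ from $y$ within their common subtree'' has probability $O(d_G(x,y)/\ell_i)$ where $\ell_i$ is the number of vertices of $C_i$ between the two relevant subtrees, and the cost of a detour is $O(\ell_i)+O(d_G(x,y))$, giving expected expansion $O(1)$; pairs not in such a configuration are essentially unaffected by the cut.

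I expect the main obstacle to be making the ``the detour costs at most the local circumference and is incurred with probability proportional to $d_G(x,y)$ over the circumference'' estimate fully rigorous and uniform over \emph{all} pairs $x,y$, not just pairs sitting on a single layer. The subtlety is that $d_G(x,y)$ may be realized by a path that wanders across many layers and crosses the seam several times, or that goes up near $v$ where the circumference is tiny; near the apex the cut is cheap (short cycles) but the probability of hitting it is large, so one must verify the product stays bounded, and for deep pairs the circumference is large but then $d_G(x,y)$ must itself be large (by the divergence lemma, points far apart at layer $i$ cannot become close without traveling a distance comparable to their separation), keeping the ratio bounded. I would handle this by first reducing, via the funnel's layered structure, to the case where $x,y$ lie on a single layer or on adjacent layers (decomposing a general shortest path into layer-segments exactly as in the proof of Lemma~\ref{lem:funnels}), and then proving the single-layer estimate using convexity of distance-to-$x^*$ along geodesics. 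A secondary, more bookkeeping obstacle is checking that the cut-open object genuinely satisfies all five pyramid axioms---in particular axiom (5)---which requires that the chosen ray $R$ is ``monotone'' in the planar embedding so that opening every cycle at $R$ respects the nested planar order; this follows because $R$, being a union of parent-edges from a single leaf, is a simple path that the planar structure of the funnel already orders consistently across layers. Finally, the $\oplus_1$ in the statement is there because technically a pyramid of depth $\Delta$ with a degenerate apex, or the need to reattach trivial pieces, may force us to present $G_R$ as a $1$-sum of a pyramid with single-vertex graphs; this is a cosmetic point and I would dispatch it at the end.
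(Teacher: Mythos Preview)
Your approach is genuinely different from the paper's, and it has a real gap.

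The paper does \emph{not} analyze a random ray directly. Instead it fixes one ray $R$, thickens $R\setminus x^*$ into a $\Delta\times 4$ grid $H$ (this costs only $O(1)$ distortion and keeps the graph a funnel $G'$), takes $A$ to be the two central columns of $H$ together with $x^*$, observes $\dil_{G'}(A)=1$, and then invokes the Peeling Lemma (Lemma~\ref{lem:peeling}) with the KPR/Rao bound $\beta=O(1)$ for planar graphs (Theorem~\ref{thm:KPR}). The Peeling Lemma outputs a stochastic embedding into $1$-sums of $G'[A]$ (a $\Delta\times 2$ grid with the apex attached --- itself a pyramid) and of the graphs $G'[(V\setminus A)\cup\{a\}]$, each of which is exactly a funnel cut open along a ray, hence a pyramid. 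All of the probabilistic work you are proposing to do by hand is hidden inside the Peeling Lemma and the KPR decomposition; the paper never touches non-positive curvature in this step.

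Your argument, by contrast, leans on Lemmas~\ref{lem:convexity} and~\ref{lem:divergence} to control layer circumferences. That is illegitimate here: Lemma~\ref{lem:pyramids} is stated for \emph{arbitrary} funnels, a purely combinatorial class, and a general funnel need not arise from any non-positively curved surface. Even if you restrict to funnels produced by Lemma~\ref{lem:funnels}, the ``probability $\times$ penalty $=O(1)$'' heuristic fails for the natural ray distributions. For instance, if you sample a ray by choosing a uniformly random leaf in $C_\Delta$ and following parents, the induced distribution on $C_i$ is weighted by descendant counts and can be arbitrarily far from uniform: take a funnel that is a thin cylinder ($|C_i|=K$ constant) for $i\le\Delta/2$ and then branches heavily below one vertex $w\in C_{\Delta/2}$. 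Almost every random ray passes through $w$, so the edge of $C_{\Delta/2}$ adjacent to $w$ is cut with probability $1-o(1)$, forcing a detour of order $K$ for a pair at distance $1$. You acknowledge the need for ``a distribution that at each layer is near-uniform,'' but you do not construct one, and constructing a single distribution over rays whose marginal on \emph{every} layer is near-uniform is precisely the nontrivial content that the KPR machinery supplies. Without that construction the expected-expansion bound is unproved.

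A secondary point: in the paper's route the $\oplus_1$ is not cosmetic --- the Peeling Lemma genuinely outputs $1$-sums of the two kinds of pieces above --- whereas in your route a single cut would indeed land in $\pyramids\subset\oplus_1\{\pyramids\}$. So the shape of the target family also reflects the difference in method.
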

\begin{proof}
Let $G$ be a funnel with basepoint $x^* \in V(G)$, and depth $\Delta$.
Let $R$ be a ray in $G$.
Replace $R\setminus x^*$ by a $\Delta \times 4$ grid $H$.
Clearly, this results into an embedding of $G$ into a funnel $G'$ with distortion $O(1)$.
Let $R'$ be the union of the two central columns of $H$, and let $A = R' \cup \{x^*\}$.
Observe that $\dil_{G'}(A) = 1$.
Applying lemma \ref{lem:peeling} on $G'$ and the set $A$, we obtain a stochastic embedding of $G'$ into a distribution of graphs ${\cal D}$.
Since $G'$ is planar, and $\dil_{G'}(A) = 1$, it follows by Theorem \ref{thm:KPR} that the distortion of the resulting stochastic embedding is $O(1)$.
Every graph in the support of ${\cal D}$ is obtained via 1-sums of $G'[A]$, with $G'[V\setminus A \cup \{a\}]$, for some $a\in A$.
The graph $G'[A]$ is a $\Delta\times 2$ grid, with the basepoint $x^*$ connected to the two vertices in the top row, and is therefore a pyramid.
For any $a\in A$, the graph $G'[V\setminus A \cup \{a\}]$ is obtained from $G'$ by cutting along a ray, and is therefore also a pyramid.
This concludes the proof.
\end{proof}

\section{Monotone cuts}\label{sec:monotone}

In this section we describe the family of cuts, that we will use when defining our embedding into $L_1$.
These are cuts that we call \emph{monotone}, and intuitively correspond to sets that only cross every ray at most once.
We also describe a specific ``shifting'' operation that will allow us to modify a cut in order to adapt to the finer geometry of a given space.

\begin{definition}[Monotone cut]
Let $G$ be a pyramid with basepoint $v\in V(G)$, and
let $S\subseteq V(G)$.
We say that $S$ is \emph{$v$-monotone} (or monotone when $v$ is clear from the context) if $v\in S$, and 
for any ray $R$ in $G$, $R\cap S$ is a prefix of $R$.
In particular, this implies that $G[S]$ is a connected subgraph
(see Figure \ref{fig:monotone}).
\end{definition}

\begin{definition}[Boundary of a monotone cut]
Let $S\subseteq V(G)$ be a monotone cut.
We define the \emph{vertex boundary} of $S$, denoted by $\partial_V S$, to be the set of all $u\in S$, such that all children of $u$ are not in $S$.
We also define the \emph{edge boundary} of $S$, denoted by $\partial_E S$, to be
\[
\partial_E S = \{\{x,y\}\in E(G) : x,y\in \partial_V S \text{ and } \depth(x) = \depth(y)\}.
\]
Finally, we define the graph $\partial S = (\partial_V S, \partial_E S)$ (see Figure \ref{fig:monotone}).
\end{definition}

\begin{figure}
\begin{center}
\scalebox{0.8}{\includegraphics{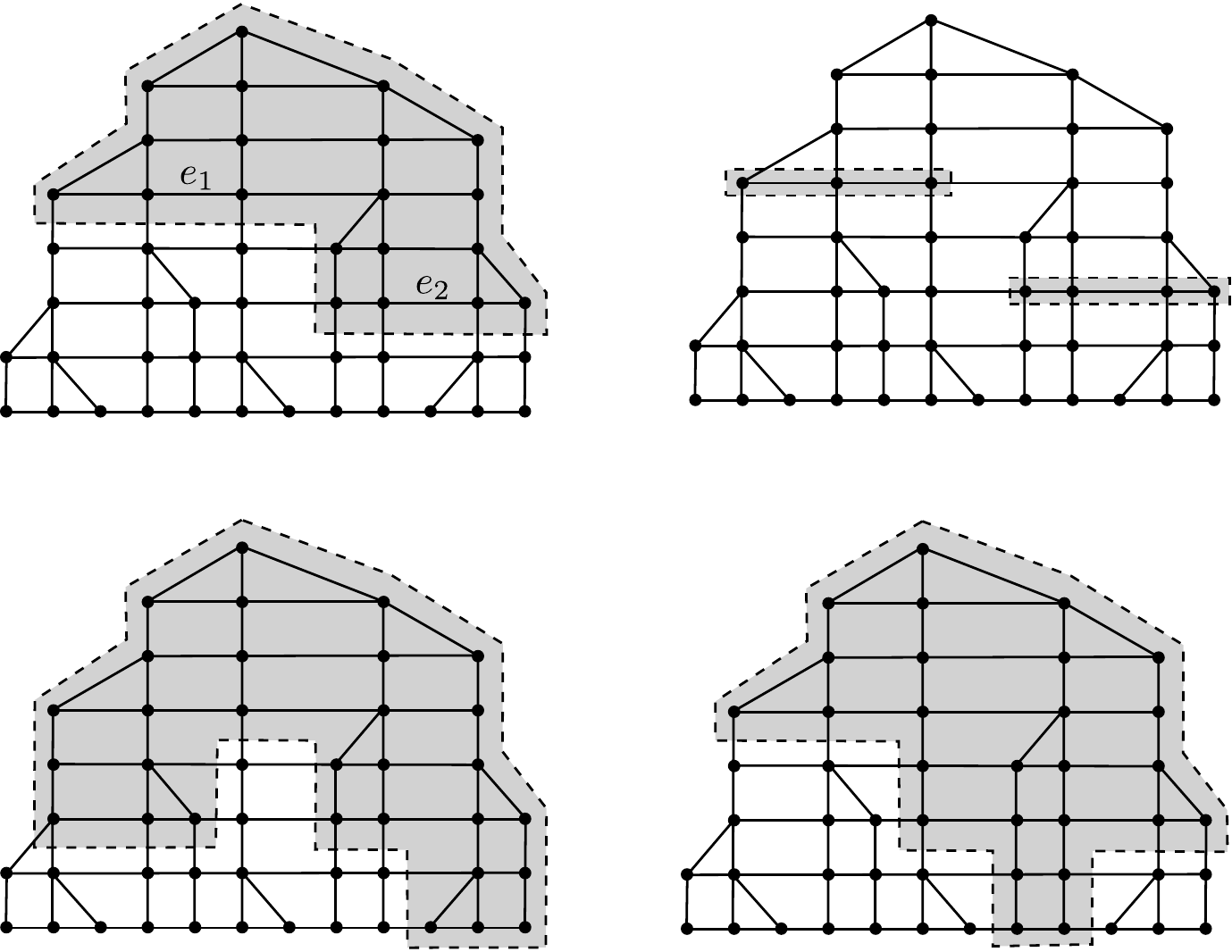}}
\caption{A monotone cut (top-left), its boundary (top-right), its odd $(2,\{e_1,e_2\})$-shift (bottom-left), and its even $(2,\{e_1,e_2\})$-shift (bottom-right).\label{fig:monotone}}
\end{center}
\end{figure}

\begin{definition}
Let $G$ be a pyramid, let $T$ be the skeleton of $G$.
Let $u\in V(G)$, and $r\geq 0$.
Then, we denote by $\widetilde{N}(u, r)$ the set of all vertices $w\in V(G)$, such that $u$ is an ancestor of $w$ in $T$, and $d_T(u,w) \leq r$.
\end{definition}

\begin{definition}[Odd/even shift of a monotone cut]
Let $S\subseteq V(G)$ be a monotone cut, let $r>0$, and $Z\subseteq \partial_E S$.
Let $Z = \{\{x_i,y_i\}\}_{i=1}^{k}$, with
\[
x_1 \prec y_1 \preceq x_2 \prec y_2 \preceq \ldots \preceq x_k \prec y_k.
\]
Let $\{V_i\}_{i=1}^{k+1}$ be a decomposition of $\partial_V S$, with
$V_1=\{u\in \partial_V S : u \preceq x_1\}$,
$V_{k+1}=\{u\in \partial_V S : y_k \preceq u\}$,
and
for any $i\in \{2,\ldots,k\}$, 
$V_{i}=\{u\in \partial_V S : y_i \preceq u \preceq x_{i+1}\}$.
We define a partition $\partial_V S = V_{\text{odd}} \cup V_{\text{even}}$, by setting
$V_{\text{odd}} = \bigcup_{i=1}^{\lceil t/2 \rceil} V(Q_{2i-1})$,
and
$V_{\text{even}} = \bigcup_{i=1}^{\lfloor t/2 \rfloor} V(Q_{2i})$.
We define the \emph{odd $(r,Z)$-shift} of $S$ to be the cut $S_{\text{odd}}$ given by 
\[
S_{\text{odd}} = S \cup \bigcup_{u \in V_{\text{odd}}} \widetilde{N}(u,r).
\]
Similarly, we define the \emph{even $(r,Z)$-shift} of $S$ to be the cut $S_{\text{even}}$ given by 
\[
S_{\text{even}} = S \cup \bigcup_{u \in V_{\text{even}}} \widetilde{N}(u,r).
\]
We  say that a cut $S'$ is a $(r,Z)$-shift of $S$, if it is either the odd, or the even $(r,Z)$-shift of $S$ (see Figure \ref{fig:monotone} for an example).
\end{definition}

\section{The embedding}\label{sec:embedding}
In this section we present a constant-distortion embedding of pyramids into $L_1$.
Combining with lemmas \ref{lem:stochastic_L_1}, \ref{lem:funnels}, \& \ref{lem:pyramids}, this implies that every planar metric of non-positive curvature embeds into $L_1$ with constant distortion.

Let $G$ be a pyramid, with basepoint $v\in V(G)$.
Let $\Delta \geq 1$ be the depth of $G$, and let $\delta = \lceil \log \Delta \rceil$.
It will be convenient for our exposition to isometrically embed $G$ into a larger pyramid $G'$, with depth $\Delta' = O(\Delta)$, as follows.
The pyramid $G'$ contains a copy of $G$, and a new basepoint $v'$, that is connected to the basepoint $v$ of $G$ via a path of length $2 \Delta$, resulting into a pyramid of depth $\Delta' = 3\Delta$.
\begin{center}
\scalebox{0.5}{\includegraphics{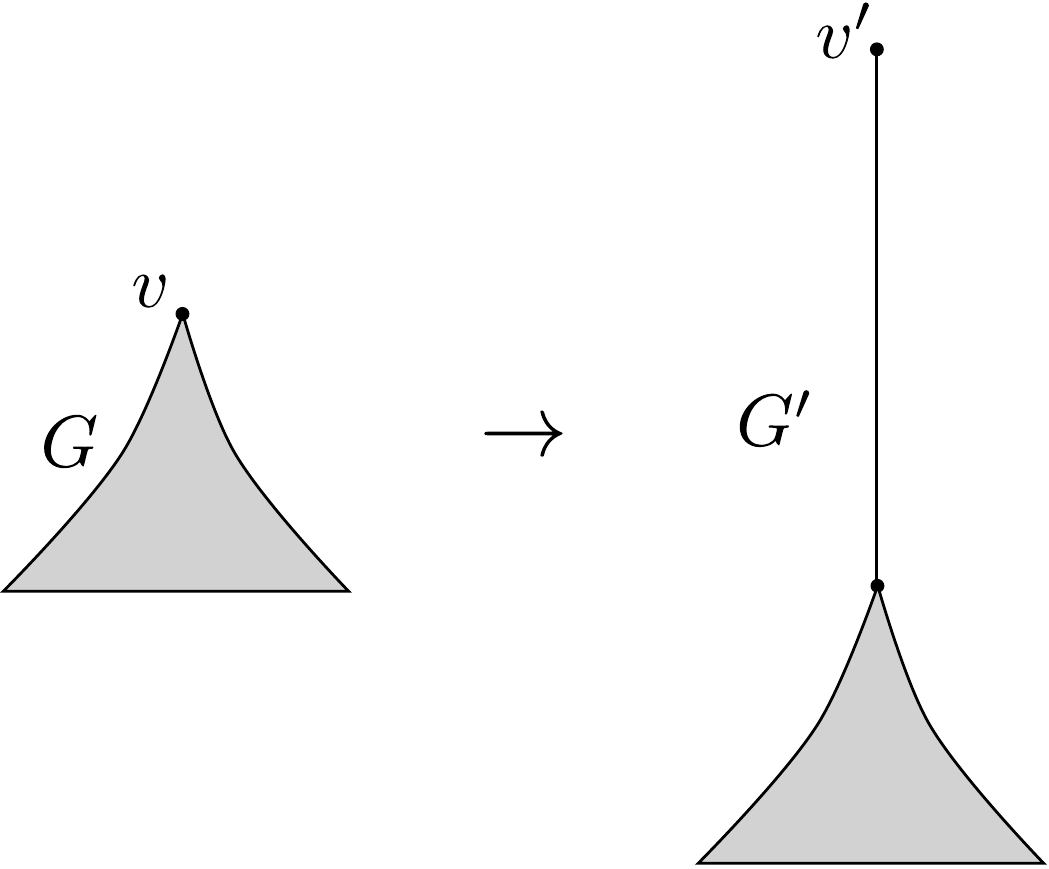}}
\end{center}
We will then compute an embedding for $G'$, and prove that its restriction on $G$ has small distortion.
We remark that our embedding will have unbounded distortion for points in $G'$ close to $v'$ (more precisely, pairs of vertices at distance $\eps$ from $v'$, will be distorted by a factor of $O(1/\eps)$).
However, this does not affect our result, since we only case about distances in $G$, which lies far from $v'$.

\begin{definition}[Evolution of a monotone cut]
Let $r>0$, and let $S\subseteq V(G')$ be a monotone cut.
The \emph{$r$-evolution} of $S$ is a probability distribution ${\cal D}$ over monotone cuts, defined by the following random process.
Let
\[
Y = \{\{x,y\} \in \partial_E S : \depth(x) - \depth(\nca(x,y)) \in [r, 6r) \}.
\]
Pick a random subset $Y' \subseteq Y$, by choosing every $e\in Y$ independently, with probability $1/r$.
We probability $1/2$, let $S'$ be the odd $(r,Y')$-shift of $S$, and otherwise let $S'$ be the even $(r,Y')$-shift of $S$.
The resulting random cut $S'$ defines the distribution ${\cal D}$.
\end{definition}

Let ${\cal M}$ be the set of all $v$-monotone cuts in $G'$.
We inductively define a sequence $\{\mu_i\}_{i=0}^{\delta+1}$, where each ${\cal \mu}_i$ is a probability distribution over ${\cal M}$.
We define $\mu_{0}$ as follows.
Let $P_1,\ldots,P_{\Delta'}$ be the layers of $G'$.
For any $j\in \{1,\ldots,\Delta'\}$, let $X_i = \bigcup_{t=1}^j V(P_t) = \ball(v',i)$.
Let $\mu_{0}$ be the uniform distribution over the collection of cuts $X_1,\ldots,X_{\Delta'}$.


For any $i\geq 0$, given $\mu_i$, we inductively define $\mu_{i+1}$ via the following random process:
We first pick a random cut $S_i$ according to ${\cal \mu}_i$.
Let ${\cal D} = {\cal D}(S_i)$ be the $\Delta / 3^i$-evolution of $S_i$.
We pick a random cut $S_{i+1}$ according to ${\cal D}$.
The resulting random variable $S_{i+1}$ defines the probability distribution $\mu_{i+1}$.

We define the embedding $f$ induced by the probability distribution $\mu_\delta$, and the embedding $f_0$ induced by the probability distribution $\mu_0$.
Finally, we set the resulting embedding to be
\[
g = f \oplus f_0,
\]
i.e.~the concatenation of the embeddings $f$, and $f_0$.
In the next section we show that the distortion of $g$ restricted on $G$ is bounded by some universal constant.

\section{Distortion analysis}\label{sec:distortion}

We now analyze the distortion of the embedding $g$ constructed in the previous section.

\subsection{Distortion of vertical pairs of points}

\begin{lemma}\label{lem:uniform_vertical}
Let $u\in V(G)$, with $\depth(u)<\Delta'$, and let $i\in \{1,\ldots,\delta\}$.
Then, $\Pr[u \in \partial_V S_i] = 1/\Delta'$.
\end{lemma}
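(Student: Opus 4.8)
The plan is to prove Lemma~\ref{lem:uniform_vertical} by induction on $i$, showing the stronger invariant that for every vertex $u\in V(G')$ with $\depth(u)<\Delta'$ and every $i\in\{0,1,\ldots,\delta\}$, the probability that $u$ lies on the vertex boundary $\partial_V S_i$ equals $1/\Delta'$. The base case $i=0$ is immediate from the definition of $\mu_0$: the cut $S_0$ is uniformly one of $X_1,\ldots,X_{\Delta'}$, where $X_j=\ball(v',j)$, and a vertex $u$ lies on $\partial_V X_j$ precisely when $\depth(u)=j$ (its children are exactly the vertices at depth $j+1$ below it, all outside $X_j$). Since $u$ has a unique depth, exactly one of the $\Delta'$ choices puts $u$ on the boundary, giving probability $1/\Delta'$.

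For the inductive step, I would fix $u$ and condition on $S_i$, using the $r$-evolution with $r=\Delta/3^i$ that produces $S_{i+1}$. The key structural observation is that whether $u\in\partial_V S_{i+1}$ depends only on the local picture of $S_i$ near $u$: a vertex becomes (or remains) a boundary vertex after a shift iff either it was on $\partial_V S_i$ and its ancestor region was not pushed down past it, or it lies exactly $r$ levels below some boundary vertex that \emph{was} pushed down (via the $\widetilde N(\cdot,r)$ operation). I would argue that conditioned on any realization of $S_i$, the evolution preserves the event ``$u$ is a boundary vertex'' in the following sense: the set of vertices on $\partial_V S_{i+1}$ is obtained from $\partial_V S_i$ by, for each maximal ``segment'' of the boundary that gets shifted, deleting that segment and inserting the segment $r$ levels deeper. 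Crucially, the odd/even alternation together with the independent coin flips is designed so that each boundary vertex of $S_i$ is ``replaced'' by exactly one boundary vertex of $S_{i+1}$ in a measure-preserving way, and conversely every boundary vertex of $S_{i+1}$ arises this way. Hence $|\partial_V S_{i+1}|$ has the same distribution as $|\partial_V S_i|$ in the relevant sense, and more to the point, $\Pr[u\in\partial_V S_{i+1}\mid S_i]$ averaged over the evolution, then over $\mu_i$, telescopes back to $\Pr[u'\in\partial_V S_i]$ summed over the (unique) vertex $u'$ that would map to $u$. Carrying this out carefully, the marginal stays $1/\Delta'$.

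The main obstacle I anticipate is making precise the claim that the evolution acts as a measure-preserving bijection on boundary vertices. One has to check the boundary cases of the odd/even shift definition---what happens at the endpoints $V_1$ and $V_{k+1}$ of the decomposition, and what happens when a vertex $u$ is within distance $r$ below a boundary vertex but the relevant edge of $\partial_E S_i$ does or does not land in $Y'$. In particular, one must verify that the constraint $\depth(x)-\depth(\nca(x,y))\in[r,6r)$ in the definition of $Y$, together with the probability $1/r$ of selecting each edge and the factor $1/2$ for choosing odd versus even, exactly balances the ``stretching'' of the boundary so that no vertex is double-counted and none is lost. I expect the cleanest way to handle this is to track, for a fixed $u$ with $\depth(u)=j$, the ancestor $a$ of $u$ at the appropriate depth and show that the conditional probability that the shift brings the boundary from $a$'s level down to $u$'s level is exactly $1/r$ times the indicator that $a$ was a boundary vertex, summed over the $r$ possible source depths $j-r$ for a shift plus the event that $u$ itself was already a boundary vertex and stayed put; these contributions recombine to preserve the marginal $1/\Delta'$. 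Once this local bookkeeping is pinned down, the induction closes and the lemma follows for all $i\in\{1,\ldots,\delta\}$.
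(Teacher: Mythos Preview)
Your inductive framework and the base case are right, and they match the paper. The inductive step, however, rests on a misreading of the evolution, and the calculation you sketch would not close.

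The $(r,Y)$-shift moves the boundary by \emph{exactly} $r$ along each ray in the selected parity class; it does not move it by a variable amount in $\{1,\ldots,r\}$. So for a fixed vertex $u$ there is only \emph{one} candidate source, the ancestor $u'$ of $u$ with $d_T(u,u')=r$, not ``$r$ possible source depths.'' Consequently the $1/r$ edge-selection probability and the constraint $\depth(x)-\depth(\nca(x,y))\in[r,6r)$ play no role in this lemma. What actually drives the marginal is the fair odd/even coin: for \emph{any} realization of $Y$, the two shifts partition $\partial_V S_i$ into the part that stays and the part that moves down by $r$. Since $u$ and $u'$ lie on the same ray, at most one of them can be in $\partial_V S_i$; and conditioned on either event, exactly one of the two parity choices yields $u\in\partial_V S_{i+1}$. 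Hence
\[
\Pr[u\in\partial_V S_{i+1}]
=\tfrac12\Pr[u\in\partial_V S_i]+\tfrac12\Pr[u'\in\partial_V S_i]
=\tfrac12\cdot\tfrac{1}{\Delta'}+\tfrac12\cdot\tfrac{1}{\Delta'}
=\tfrac{1}{\Delta'}.
\]
Your ``$1/r$ times $r$ sources plus a stay-put term'' bookkeeping would overcount and does not reflect the mechanism; replace it with the two-term computation above and the induction closes immediately.
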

\begin{proof}
The proof is by induction on $i$.
For $i=0$, the assertion holds since $\mu_0$ is the uniform distribution over the cuts $X_1,\ldots,X_{\Delta'}$.
Suppose next that $i>0$.
Let $r = \Delta / 3^{i-1}$, and let $u'$ be the ancestor of $u$ in $T$, with $d_T(u,u) = r$.
Fix some $S_{i-1}$ in the support of $\mu_{i-1}$, and suppose that $S_i$ is sampled from the $r$-evolution of $S_{i-1}$.
This means that we first sample a set of edges $Y$, and for any such $Y$ we set $S_i$ to be the odd $(r,Y)$-shift of $S_{i-1}$ with probability $1/2$, or otherwise we set $S_i$ to be the even $(r,Y)$-shift of $S_{i-1}$.
Therefore, we have have $u\in \partial_V S_i$, only if either $u\in \partial_V S_{i-1}$, or $u'\in \partial_V S_{i-1}$.
Conditioned on either of these two events, and for any $Y$, exactly one of the odd/even shifts of $S_{i-1}$ has $u$ in its boundary.
This implies that 
\begin{align*}
\Pr[u \in \partial_V S_{i}] &= \Pr[u \in \partial_V S_{i} | u\in \partial_V S_{i-1}] \cdot \Pr[u\in \partial_V S_{i-1}]\\
 &\phantom{=} + \Pr[u\in \partial_V S_{i} | u' \in \partial_V S_{i-1}] \cdot \Pr[u'\in \partial_V S_{i-1}]\\
 &= \frac{1}{\Delta'} \cdot \frac{1}{2} + \frac{1}{\Delta'} \cdot \frac{1}{2}\\
 &= 1/\Delta',
\end{align*}
as required.
\end{proof}

\begin{lemma}\label{lem:vertical_pairs}
Let $x,y\in V(G)$, such that $x,y$ lie on the same ray.
Then, $\|f(x)-f(y)\|_1 = d_G(x,y)/\Delta$.
\end{lemma}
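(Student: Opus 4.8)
The plan is to show that for $x,y$ on a common ray, the cut measure induced by $\mu_\delta$ separates $x$ from $y$ with total weight exactly $d_G(x,y)/\Delta$. Since $x$ and $y$ lie on the same ray $R$, and every cut $S$ in the support of $\mu_\delta$ is monotone, $R\cap S$ is a prefix of $R$; hence $S$ separates $x$ from $y$ if and only if the ``cut point'' of $S$ along $R$ falls strictly between $x$ and $y$. More precisely, assuming $\depth(x)<\depth(y)$, the pair $\{x,y\}$ is separated by $S$ exactly when there is some vertex $u\in \partial_V S$ lying on $R$ with $\depth(x)\le \depth(u)<\depth(y)$ — equivalently, when $\partial_V S$ meets the set of ancestors-on-$R$ of $y$ that are at depth in $[\depth(x),\depth(y))$. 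So I would first establish this combinatorial characterization: $\Pr_{S\sim\mu_\delta}[x,y\text{ separated}] = \Pr[\,\partial_V S \cap R_{[x,y)}\neq\emptyset\,]$, where $R_{[x,y)}$ is the set of vertices of $R$ at depth $\ge\depth(x)$ and $<\depth(y)$.

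The key step is then to observe that along a single ray, $\partial_V S$ contains \emph{exactly one} vertex of $R$: by monotonicity $R\cap S$ is a prefix $u_1,\dots,u_m$ of $R$, and the unique vertex of this prefix all of whose children are outside $S$ is its last vertex $u_m$ (here using that $G'$ has depth $\Delta'>\depth(y)$, so $u_m$ has a child, and that child lies on no ray's prefix beyond $u_m$ — I should check the edge case where $S$ contains all of $R$, but this cannot happen for cuts reaching only depth $<\Delta'$, and in any event $\mu_\delta$ is supported on cuts $X_j$-descended evolutions that stay bounded; alternatively handle it by noting such $S$ does not separate $x$ from $y$ when both have depth $<\Delta'$). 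Therefore the events $\{u\in\partial_V S\}$ for $u\in R$ are disjoint, and
\[
\Pr[x,y\text{ separated}] \;=\; \sum_{u\in R_{[x,y)}} \Pr[u\in\partial_V S_\delta].
\]
By Lemma \ref{lem:uniform_vertical}, each term equals $1/\Delta'$. The number of vertices in $R_{[x,y)}$ is $\depth(y)-\depth(x) = d_T(x,y) = d_G(x,y)$ (the last equality because, on a ray, the skeleton distance equals the graph distance in the pyramid). Hence the separation probability is $d_G(x,y)/\Delta'$.

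It remains to reconcile $\Delta'$ with the claimed $\Delta$: the embedding $f$ is the cut embedding of $\mu_\delta$ scaled by the appropriate normalization so that $\|f(u)-f(w)\|_1$ equals the total separating measure times a factor absorbing the ratio $\Delta'/\Delta = 3$; concretely one defines $f$ with a scaling constant chosen precisely so that vertical pairs land at $d_G(x,y)/\Delta$, and this lemma is the verification that with that normalization the identity is exact. I expect the main obstacle to be the careful bookkeeping in the first two steps — pinning down that $\partial_V S$ hits each ray in exactly one vertex and that this vertex's depth is uniform on $\{1,\dots,\Delta'\}$ by Lemma \ref{lem:uniform_vertical} — rather than any hard estimate; the rest is summing a constant $\Delta'$ times and matching normalizations. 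A minor subtlety worth a sentence is that $f_0$ contributes nothing beyond $f$ to vertical pairs up to the same normalization, but since the statement is only about $f$, I would simply note that $f$ alone already realizes the stated distance.
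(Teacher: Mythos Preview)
Your argument is correct and essentially identical to the paper's: use monotonicity to reduce separation to the event that $\partial_V S_\delta$ hits the half-open segment of the ray between $x$ and $y$, note these events are disjoint across vertices of that segment, and apply Lemma~\ref{lem:uniform_vertical} to sum $d_G(x,y)$ terms each equal to $1/\Delta'$. Your final paragraph about an extra normalization is unnecessary and mistaken: the paper defines $f$ directly as the cut embedding of the probability measure $\mu_\delta$ with no scaling, and indeed the paper's own proof concludes $\|f(x)-f(y)\|_1 = d_G(x,y)/\Delta'$, so the $\Delta$ in the lemma statement is simply a typo for $\Delta'$.
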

\begin{proof}
Let $R$ be the ray containing both $x$, and $y$.
Let $R'$ be the subpath of $R$ between $x$, and $y$, including $x$, and excluding $y$.
By the monotonicity of ${\cal S}_\delta$, it follows that $\mathbf{1}_{S_{\delta}}(x) \neq \mathbf{1}_{S_{\delta}}(y)$, if and only if there exists $z\in V(R')$, such that $z\in \partial_V S_{\delta}$.
Since these events are disjoint for different $z$, we obtain by lemma \ref{lem:uniform_vertical} that $\|f(x)-f(y)\|_1 = \Pr[\mathbf{1}_{S_{\delta}}(x) \neq \mathbf{1}_{S_{\delta}}(y)] = |V(R')| / \Delta' = d_G(x,y) / \Delta'$, as required.
\end{proof}

\subsection{Distortion of horizontal pairs of points}
We now bound the distortion on pairs of vertices $x,y\in V(G)$ that lie on the same layer of $G'$, i.e.~such that $\depth(x)=\depth(y)=h$.
Let $d_G(x,y)=L$.
Let also $h'=\depth(\nca(x,y))$.
We assume w.l.o.g.~that $x \preceq y$.
Let $P$ be the subpath of $P_h$ between $x$, and $y$.

Let
\[
E_{\text{top}} = \left\{\{z,w\} \in E(P) : \depth(\nca(z,w)) \leq h - L/2 \right\},
\]
and
\[
E_{\text{bottom}} = E(P) \setminus E_{\text{top}}.
\]

\begin{lemma}\label{lem:E_top}
$|E_{\text{top}}| \leq L$.
\end{lemma}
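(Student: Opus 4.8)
The plan is to bound the number of edges $\{z,w\}$ of $P$ whose nearest common ancestor lies at depth at most $h-L/2$, by observing that such edges correspond to places where the two ``vertical'' geodesics hanging below them have already spread far apart, and then invoking the convexity/divergence structure inherited from the non-positively curved surface (Lemmas~\ref{lem:convexity} and \ref{lem:divergence}), transported through the funnel and pyramid representations. First I would set up the combinatorial picture: walk along $P$ from $x$ to $y$, and for each edge $e=\{z,w\}\in E(P_h)$ let $a(e)=\nca(z,w)$ and $h(e)=\depth(a(e))$. As we traverse $P$, the sequence of nearest common ancestors $a(e)$ moves up and down the skeleton $T$; an edge lies in $E_{\text{top}}$ exactly when $a(e)$ sits at depth $\le h-L/2$, i.e.\ the ``branching point'' is at least $L/2$ levels above the current layer.

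Next I would argue that the number of such high branch points is controlled by a packing argument at the layer $\Gamma_{h-L/2}$ (equivalently at depth $h-L/2$ in the pyramid). Each edge $e\in E_{\text{top}}$ determines a distinct ancestor vertex, or at least a distinct ``region'' of that intermediate layer, and the key geometric input is that two children of a common ancestor at depth $h'$ which remain distinct down to depth $h$ must have their descendants at depth $h$ separated by $\Omega(h-h')=\Omega(L)$ in $d_G$ (this is the divergence Lemma~\ref{lem:divergence}, in the discretized funnel: horizontal distance grows at least linearly as one descends along diverging rays). Since $x$ and $y$ are at $d_G$-distance exactly $L$ along $P_h$, the total ``horizontal budget'' spanned by $P$ at layer $h$ is $O(L)$, and each edge of $E_{\text{top}}$ consumes $\Omega(1)$ of the horizontal extent at the intermediate layer $h-L/2$ (where, by the property~(5) order-preservation of pyramids together with divergence, consecutive branch points are $\Omega(1)$ apart after rescaling); hence $|E_{\text{top}}| = O(L)$. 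To get the clean bound $|E_{\text{top}}|\le L$ rather than $O(L)$, I would exploit that $G'$ is unweighted: the subpath $P$ of $P_h$ between $x$ and $y$ has exactly $d_G(x,y)=L$ edges in total (by definition of $d_G$ as shortest-path distance in an unweighted graph and the fact that $P_h$ is an induced path so $d_G(x,y)=|E(P)|$), and therefore trivially $|E_{\text{top}}|\le |E(P)| = L$.

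In fact this last observation makes the lemma immediate: since $E_{\text{top}}\subseteq E(P)$ and $|E(P)| = d_G(x,y) = L$, we get $|E_{\text{top}}|\le L$ with no geometry at all. So the proof I would actually write is the one-line argument: $P$ is a subpath of the layer path $P_h$ with endpoints $x,y$, the layers of a pyramid are paths, hence $|E(P)|$ equals the number of vertices of $P_h$ strictly between (and including one endpoint) which is $d_{G}(x,y)=L$; and $E_{\text{top}}\subseteq E(P)$ gives the claim.

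\textbf{Main obstacle.} The only subtlety worth double-checking is that $d_G(x,y)$ for $x,y$ on the same layer really equals the length $|E(P)|$ of the layer subpath rather than something shorter via a detour through other layers — i.e.\ that $P_h$ is an isometrically embedded (or at least geodesic-between-these-endpoints) path. If that fails, one falls back to the genuine packing argument via Lemma~\ref{lem:divergence} sketched above, which still yields $|E_{\text{top}}|=O(L)$ and then, after an appropriate choice of constants in the funnel/pyramid construction, the stated bound; I expect the intended proof to rely on exactly this isometry of layers (or on a direct counting identity $|E_{\text{top}}| + |E_{\text{bottom}}| = |E(P)| \le L$), so I would state that identity explicitly and conclude.
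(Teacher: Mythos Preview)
Your one-line argument has a genuine gap: the claim $|E(P)|=d_G(x,y)$ is false. Layers of a pyramid are \emph{not} geodesically embedded. For a concrete failure, take a pyramid in which some vertex $a$ at depth $h-1$ has $m$ children $b_1,\dots,b_m$ at depth $h$; then $d_G(b_1,b_m)=2$ (via $a$), while the subpath $P$ of $P_h$ from $b_1$ to $b_m$ has $m-1$ edges. Thus $|E(P)|$ can be arbitrarily larger than $L$, and the inclusion $E_{\text{top}}\subseteq E(P)$ gives nothing. You correctly flag this as the ``main obstacle,'' but your expectation that the intended proof relies on isometry of layers is exactly wrong, and your fallback sketch invokes Lemmas~\ref{lem:convexity} and~\ref{lem:divergence} about continuous non-positively curved surfaces, which have no direct bearing on the purely combinatorial object $G'$ at this point in the argument.

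The paper's proof is substantively different and does real work. It argues by contradiction: assuming $|E_{\text{top}}|>L$, for each edge $\{z,w\}\in E_{\text{top}}$ the rays through $z$ and through $w$ remain disjoint between levels $h-L/2$ and $h$ (because $\depth(\nca(z,w))\le h-L/2$). Taking these vertical ray-segments $W_e$ together with the horizontal layer subpaths $Z_i$ for $i\in\{h-L/2,\dots,h\}$ yields a grid minor of height $L/2+1$ and width exceeding $L$, with $x$ and $y$ at the two bottom corners. Since every $x$--$y$ shortest path in $G'$ stays inside $\ball(v',h)$, it must thread this grid and therefore has length $>L$, contradicting $d_G(x,y)=L$. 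The input here is precisely the combinatorial structure of a pyramid---the skeleton tree and the order-preservation property (5)---not any analytic convexity.
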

\begin{proof}
Suppose, to the contrary, that $|E_{\text{top}}| > L$.
For any $i\in \{h-L/2,\ldots,h\}$, let $Z_i$ be the subpath of $P_i$ between the ancestor of $x$, and the ancestor of $y$ in $P_i$.
For any $e=\{z,w\}\in E_{\text{top}}$, with $z\prec w$, let $R_z$ be a ray containing $z$, and let $W_e$ be the subpath of $R_e$ contained between $P_{h-L/2}$, and $P_h$.
\begin{center}
\scalebox{0.5}{\includegraphics{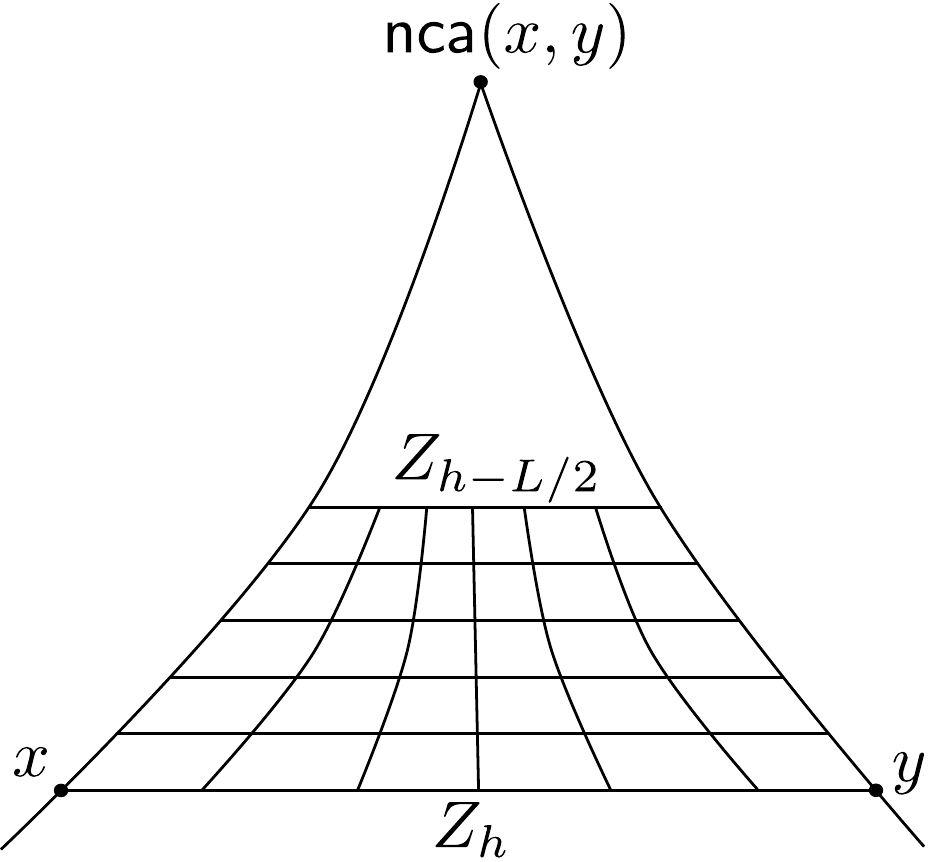}}
\end{center}
The union of all these paths $\left(\bigcup_{i}Z_i\right) \cup \left( \bigcup_e W_e \right)$ forms a $(L/2+1) \times L$ grid minor in $G'$, with $x$, and $y$ being the bottom-left, and bottom-right vertices respectively.
Since the $x$-$y$ shortest path in $G'$ is contained in $\ball(v',h)$, this implies that $d_G(x,y) > L$, which is a contradiction.
\end{proof}

Let $H$ be the subgraph of $G$ induced on the set of vertices
\[
V(H) = \{u\in V(G) : h' \leq \depth(u) \leq h \text{ and } x \preceq u \preceq y\}.
\]

\begin{definition}[Straight cut]
Let $i\in \{1,\ldots,\delta\}$, and $j\in \{1,\ldots,\Delta'\}$.
We say that $S_i$ is \emph{$j$-straight} if
$\partial S_i \cap H \subseteq P_j$.
\end{definition}

Let $e=\{z,w\}\in E(P)$.
We say that an edge $e'=\{z',w'\}\in E(G)$ is an \emph{ancestor} of $e$, if $z'$ is an ancestor of $z$ in $T$, $w'$ is an ancestor of $w$ in $T$, and $\depth(z')=\depth(w')$.

\begin{definition}[Bend]
Let $e\in E(P)$.
We say that $e$ \emph{bends} $S_i$, if the following events happen.
\begin{description}
\item{(1)}
There exists $j\in \{1,\ldots,\Delta'\}$, such that
$S_i$ is $j$-straight.
\item{(2)}
Let $Y\subseteq \partial_E S_i$, such that $S_{i+1}$ is the $(r,Y)$-shift of $S_i$, for some $r>0$.
Then, there exists an ancestor of $e$ in $Y$.
\end{description}
\end{definition}

\begin{lemma}\label{lem:straight_upper}
Let $j \in \{h',\ldots,h\}$, and
let $i\in \{1,\ldots,\delta\}$.
Then, $\Pr[S_i \text{ is } j\text{-straight}] \leq 1/\Delta'$. 
\end{lemma}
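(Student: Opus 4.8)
The plan is to prove the statement by induction on $i$, tracking the event that $\partial S_i$ restricted to the ``wedge'' graph $H$ lies entirely on a single layer $P_j$. For the base case $i=0$ (or the smallest index at which the claim is stated), the cut $S_0$ drawn from $\mu_0$ is one of the balls $X_1,\dots,X_{\Delta'}$, and $S_0$ is $j$-straight precisely when $\partial S_0$ coincides with layer $P_j$ on $H$, i.e.\ when the chosen ball is $X_j$ (or $X_{j-1}$, depending on the exact boundary convention); since $\mu_0$ is uniform over $\Delta'$ cuts, this probability is exactly $1/\Delta'$, matching the bound.

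For the inductive step, I would condition on $S_{i-1}$ and use the structure of the $r$-evolution with $r=\Delta/3^{i-1}$. The key observation is that in order for $S_i$ to be $j$-straight, the cut $S_{i-1}$ must already have its boundary inside $H$ confined to a \emph{single} layer $P_{j'}$ — otherwise the shift operation, which only moves boundary vertices downward by the fixed amount $r$ along tree-paths, cannot collapse a multi-layer boundary onto one layer. More precisely, an $(r,Y')$-shift takes each boundary vertex $u$ and either leaves it or replaces it by $\widetilde N(u,r)$, so the set of depths occurring in $\partial S_i \cap H$ is contained in $(\{\text{depths in }\partial S_{i-1}\cap H\}) \cup (\{\text{depths in }\partial S_{i-1}\cap H\} + r)$; for this to be a single value $j$ we need $\partial S_{i-1}\cap H$ to be supported on one layer $j'$ with $j'=j$ or $j'=j-r$. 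Then, conditioned on $S_{i-1}$ being $j'$-straight, the event that $S_i$ becomes $j$-straight requires a ``coherent'' choice in the random shift — either none of the relevant ancestor edges in $Y'$ are selected (so the boundary stays at layer $j'$), or the shift pattern is such that the entire boundary within $H$ moves together to layer $j'+r$. I would argue that each of these contributes to the event for at most one value of $j'$, and that the conditional probability of $S_i$ being $j$-straight given $S_{i-1}$ is $j'$-straight is at most $1$ (trivially), so summing over the at most two eligible predecessor layers $j'$ and applying the inductive hypothesis $\Pr[S_{i-1}\text{ is }j'\text{-straight}]\le 1/\Delta'$ would give $\Pr[S_i\text{ is }j\text{-straight}] \le 2/\Delta'$ — which is off by a factor of $2$. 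To recover the exact $1/\Delta'$ bound, I would instead sum the \emph{joint} events: $\Pr[S_i\text{ is }j\text{-straight}] = \sum_{j'}\Pr[S_{i-1}\text{ is }j'\text{-straight}]\cdot\Pr[S_i\text{ is }j\text{-straight}\mid S_{i-1}\text{ is }j'\text{-straight}]$, and show that for the two eligible values of $j'$ the conditional probabilities sum to at most $1$ because, conditioned on $\partial S_{i-1}\cap H$ being on layer $j'$, the odd/even coin and the edge selection $Y'$ partition the probability space so that ``$S_i$ ends $j$-straight'' and ``$S_i$ ends $(j+r)$-straight'' (corresponding to the two predecessors) are essentially complementary patterns — exactly as in the telescoping argument used in Lemma~\ref{lem:uniform_vertical}, where a vertex of depth $u$ lands in $\partial_V S_i$ with probability $1/\Delta'$ because the contributions from $u\in\partial_V S_{i-1}$ and from its ancestor $u'\in\partial_V S_{i-1}$ each get halved by the odd/even coin.

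The main obstacle I anticipate is the combinatorial bookkeeping in that last step: carefully defining $j$-straightness so that, conditioned on $S_{i-1}$ being $j'$-straight for one of the (at most two) relevant $j'$, the random shift lands $S_i$ in the $j$-straight state with probability adding up to $\le 1/\Delta' \cdot (\text{number of predecessors})$ — and in fact exactly $1/\Delta'$ — which forces one to verify that the ``odd vs.\ even'' split behaves like the fair coin in Lemma~\ref{lem:uniform_vertical} and that the independent edge-selection in $Y'$ does not leak extra probability. In particular one must check that when $\partial S_{i-1}\cap H$ lies on a single layer, all of $H$'s boundary edges either lie in $Y$ (so they are all candidates to be shifted together) or lie outside the depth window $[r,6r)$ relative to their $\nca$ (so none of them move), ensuring the ``all or nothing'' behavior needed for $S_i$ to remain straight; the geometry of the wedge $H$ and the bound $|E_{\text{top}}|\le L$ from Lemma~\ref{lem:E_top} should be what guarantees this dichotomy. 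Once that structural dichotomy is in place, the probability computation is a direct analogue of Lemma~\ref{lem:uniform_vertical}.
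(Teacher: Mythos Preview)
Your approach takes a long inductive detour where the paper uses a two-line argument. The paper simply fixes any vertex $z \in V(P_j)\cap V(H)$ and observes that if $S_i$ is $j$-straight then necessarily $z \in \partial_V S_i$; since Lemma~\ref{lem:uniform_vertical} already gives $\Pr[z \in \partial_V S_i] = 1/\Delta'$, the inclusion of events yields $\Pr[S_i\text{ is }j\text{-straight}] \le 1/\Delta'$ immediately. In other words, all the inductive bookkeeping you are proposing has already been carried out once and for all in Lemma~\ref{lem:uniform_vertical}, and the present lemma is nothing more than a containment of events on top of it.

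Your induction, by contrast, tries to track the \emph{entire} boundary $\partial S_i\cap H$ through the evolution, and the step where you need the conditional probabilities from the two predecessor levels $j'=j$ and $j'=j-r$ to sum to at most~$1$ is not justified. The ``all or nothing'' dichotomy you rely on --- that when $\partial S_{i-1}\cap H$ lies on a single layer, the relevant edges either all fall inside the depth window $[r,6r)$ or all fall outside it --- is simply false: different edges $\{z,w\}$ inside the wedge $H$ can have $\nca(z,w)$ at very different depths, so some are eligible for $Y$ while others are not, and the boundary in $H$ does not move as a block. Your appeal to Lemma~\ref{lem:E_top} does not rescue this; that lemma only bounds the \emph{number} of edges with deep $\nca$, and says nothing about the all-or-nothing structure you need. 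If you insist on running an induction, the clean way to close it is to follow a \emph{single} vertex $z$ on layer~$j$ rather than the whole boundary --- but that is exactly the content of Lemma~\ref{lem:uniform_vertical}, which is why the paper just cites it and stops.
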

\begin{proof}
Let $z$ be an arbitrary vertex in $V(P_j) \cap V(H)$.
Clearly, $S_i$ can only be $j$-straight if $z \in \partial_V S_i$.
Therefore, by lemma \ref{lem:uniform_vertical} we obtain
$\Pr[S_i \text{ is } j\text{-straight}] \leq \Pr[z \in \partial_V S_i] = 1/\Delta'$, as required.
\end{proof}

For any edge $e=\{z,w\}\in E(P)$, and for any $i\in \{1,\ldots,\delta\}$, let ${\cal E}(e,i)$ be the conjunction of the following two events:
\begin{itemize}
\item
${\cal E}_1(e,i)$:
There exists $j$, such that the following event, denoted by ${\cal E}_1(e,i,j)$, holds: 
Intuitively, the event ${\cal E}_1(e,i,j)$ describes a necessary condition such that a bend of $S_i$ can potentially lead to a cut $S_{\delta}$ that separates $x$, and $y$.
Formally, we have that $S_i$ is $j$-straight, with
\begin{align}
\Delta / 3^i \leq  j - \depth(\nca(z,w)) < 6 \Delta / 3^i, \label{eq:event_1}
\end{align}
and
\begin{align}
h - j &\leq 2 \Delta / 3^i. \label{eq:event_2}
\end{align}
\item
${\cal E}_2(e,i)$:
$e$ bends $S_i$.
\end{itemize}

\begin{lemma}\label{lem:if_cut_then_bent}
Suppose that $\mathbf{1}_{S_{\delta}}(x) \neq \mathbf{1}_{S_{\delta}}(y)$.
Then, there exists $e\in E(P)$, and $i\in \{1,\ldots,\delta\}$, such that the event ${\cal E}(e,i)$ occurs.
\end{lemma}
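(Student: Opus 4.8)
The plan is to argue contrapositively, by tracking the evolution of the cuts $S_0, S_1, \ldots, S_\delta$ restricted to the subgraph $H$, and showing that unless some bend event ${\cal E}(e,i)$ fires, the pair $x,y$ stays on the same side throughout. The key observation is that initially, in $\mu_0$, every cut is one of the balls $X_1,\ldots,X_{\Delta'}$; in particular $\partial S_0 \cap H$ lies entirely on a single layer $P_j$, so $S_0$ is $j$-straight for some $j$, and consequently $x$ and $y$ are on the same side of $S_0$ (both in or both out, depending on whether $j \geq h$). I would then prove by induction on $i$ that, as long as no bend event has occurred up through step $i$, the cut $S_i$ remains $j_i$-straight for some $j_i$, and moreover $x,y$ lie on the same side of $S_i$. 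The inductive step uses the definition of the $r$-evolution: passing from $S_i$ to $S_{i+1}$ only modifies the cut by adding sets $\widetilde N(u,r)$ for $u$ ranging over alternating pieces of $\partial_V S_i$ induced by the chosen edge set $Y' \subseteq Y$. If $S_i$ is $j_i$-straight and no ancestor of any edge of $P$ lies in $Y'$, then within $H$ the boundary $\partial S_{i+1}$ is still confined to a single layer (the shift moves a contiguous block of $\partial_V S_i \cap H$ down by $r$ layers, but since the whole block moves or none of it does, it stays straight within $H$), and $x,y$ remain together.

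Next I would show that the \emph{only} way $x$ and $y$ can end up separated at step $\delta$ is for a bend of the specific form captured by ${\cal E}(e,i)$ — that is, with the quantitative constraints \eqref{eq:event_1} and \eqref{eq:event_2} on $j$. The point is that a bend that separates $x$ from $y$ must involve an ancestor edge $e'$ of some $e \in E(P)$ whose nearest-common-ancestor depth is close enough to $j$ that the edge actually belongs to the set $Y$ of candidate edges at scale $r = \Delta/3^i$, which forces $j - \depth(\nca(z,w)) \in [\Delta/3^i, 6\Delta/3^i)$, i.e.~\eqref{eq:event_1}; and the shift at scale $r$ can only propagate a separation between $x$ and $y$ down to layer $h$ in the remaining steps if $h - j$ is at most the total remaining shift distance, which is $O(\Delta/3^i)$, i.e.~\eqref{eq:event_2}. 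I would make this precise by observing that once a separation is created at layer $j$ and it is going to survive to layer $h$, the subsequent evolutions can move the boundary downward by a geometric series $\sum_{k>i} \Delta/3^k = O(\Delta/3^i)$, so $j$ cannot be too far above $h$; conversely, if $e$ bends $S_i$ and creates a separation at all, then since $S_i$ was $j$-straight within $H$ and the separated vertices $x,y$ are at layer $h \geq j$, we need $h \geq j$ and the edge whose ancestor we shifted must have its $\nca$ at the right depth below layer $j$.

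The main obstacle I anticipate is the bookkeeping around \emph{which} step $i$ first creates the separation and verifying that the quantitative window $[\Delta/3^i, 6\Delta/3^i)$ in \eqref{eq:event_1} is exactly the right one — this requires carefully matching the scale $r = \Delta/3^i$ used in the $i$-th evolution against the definition of the set $Y$ in the $r$-evolution (which uses the interval $[r,6r)$), and then confirming that the monotonicity/straightness is preserved by all \emph{other} (non-bending) evolutions both before and after step $i$. A secondary subtlety is that $S_i$ being $j$-straight is a statement about $\partial S_i \cap H$, not about the global cut, so I must be careful that the alternating odd/even shift structure does not break straightness within $H$ even when the global boundary $\partial S_i$ is complicated — this is where I would use condition (5) in the definition of a pyramid, which guarantees that the left-to-right order of vertices is consistent across layers, so that a contiguous block of $\partial_V S_i$ meeting $H$ stays contiguous after shifting. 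Once these pieces are in place, the contrapositive is immediate: if no ${\cal E}(e,i)$ occurs, then by induction $S_\delta$ is straight within $H$ and $x,y$ are on the same side, so $\mathbf{1}_{S_\delta}(x) = \mathbf{1}_{S_\delta}(y)$.
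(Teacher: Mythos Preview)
Your proposal is correct and follows essentially the same route as the paper: identify the first index $i^*$ at which straightness of $S_i$ within $H$ is lost, observe that this forces some $e\in E(P)$ to bend $S_{i^*}$, and then verify the two quantitative constraints \eqref{eq:event_1} (immediate from the definition of the set $Y$ in the $r$-evolution, since $r=\Delta/3^{i^*}$) and \eqref{eq:event_2} (via the geometric series $\sum_{k\ge i^*}\Delta/3^k < 2\Delta/3^{i^*}$ bounding how far the boundary can descend after step $i^*$). The paper states this directly rather than as an explicit induction/contrapositive, and your aside about property~(5) of pyramids ensuring that a contiguous block of $\partial_V S_i\cap H$ stays contiguous under a shift is exactly the unstated justification behind the paper's one-line claim that ``there exists $e\in E(P)$ such that $e$ bends $S_{i^*}$''.
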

\begin{proof}
Recall that by the definition of $\mu_0$, the cut $S_0$ is $j_0$-straight, for some $j_0\in \{1,\ldots,\Delta'\}$.
Since $\mathbf{1}_{S_{\delta}}(x) \neq \mathbf{1}_{S_{\delta}}(y)$, it follows that for all $j_{\delta}\in \{1,\ldots,\Delta'\}$, the cut $S_{\delta}$ is not $j_{\delta}$-straight.
Let $i^*\in \{0,\ldots,\delta-1\}$ be the smallest integer such that
for all $j\in \{1,\ldots,\Delta'\}$, the cut $S_{i^*+1}$ is not $j$-straight.
This means that $S_{i^*}$ is $j^*$-straight, for some $j^*\in \{1,\ldots,\Delta'\}$.
Therefore, there exists $e=\{z,w\}\in E(P)$, such that $e$ bends $S_{i^*}$, which means that the event ${\cal E}_2(e,i^*)$ occurs.
It suffices to show that the event ${\cal E}_1(e,i^*,j^*)$ also occurs.
We have established that $S_{i^*}$ is $j^*$-straight, so its remains to show that \eqref{eq:event_1} \& \eqref{eq:event_2} hold.
Condition \eqref{eq:event_1} follows immediately form the fact that $e$ bends $S_{i^*}$, and $S_{i^*+1}$ is the $(Y,\Delta/3^{i^*})$-shift of $S_{i^*}$, with $e\in Y$.
Since $S_{i^*}$ is $j^*$-straight, we have $S_{i^*} \subseteq \ball(v',j^*)$.
The cut $S_{\delta}$ is obtained from $S_{i^*}$ via a sequence of $(Y,r)$-shifts, with exponentially decreasing values of $r$.
This implies 
$S_{\delta} \subseteq \ball(v',t)$, for some $t\leq j^* + \sum_{i=i^*}^\delta \Delta / 3^i < j^* + 2\Delta / 3^*$.
Since $\mathbf{1}_{S_{\delta}}(x) \neq \mathbf{1}_{S_{\delta}}(y)$, we have $t>h$, and therefore $h-j^* < 2\Delta / 3^{i^*}$, which implies \eqref{eq:event_2}, and concludes the proof.
\end{proof}

\begin{lemma}[Expansion of horizontal pairs]\label{lem:horizontal_expansion}
Let $x,y\in V(G)$, such that $\depth(x)=\depth(y)$.
Then,
$\|f(x)-f(y)\|_1 = O( d(x,y)/\Delta')$.
\end{lemma}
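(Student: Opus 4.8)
The plan is to bound the probability that the final cut $S_\delta$ separates $x$ and $y$ by using Lemma~\ref{lem:if_cut_then_bent}: if $\mathbf{1}_{S_\delta}(x)\neq\mathbf{1}_{S_\delta}(y)$ then some event ${\cal E}(e,i)$ occurs for an edge $e\in E(P)$ and a scale $i\in\{1,\dots,\delta\}$. Hence
\[
\|f(x)-f(y)\|_1 \;=\; \Pr[\mathbf{1}_{S_\delta}(x)\neq\mathbf{1}_{S_\delta}(y)] \;\leq\; \sum_{i=1}^{\delta}\ \sum_{e\in E(P)} \Pr[{\cal E}(e,i)].
\]
So it suffices to show that this double sum is $O(L/\Delta')$ where $L=d_G(x,y)$. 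First I would fix a scale $i$ and set $r=\Delta/3^i$. For a fixed edge $e=\{z,w\}\in E(P)$, I want to bound $\Pr[{\cal E}(e,i)] = \Pr[{\cal E}_1(e,i)\wedge {\cal E}_2(e,i)]$. I would condition on $S_i$: the event ${\cal E}_1(e,i)$ requires $S_i$ to be $j$-straight for some $j$ satisfying \eqref{eq:event_1}, and by Lemma~\ref{lem:straight_upper} each such $j$ contributes at most $1/\Delta'$; crucially \eqref{eq:event_1} restricts $j$ to a window of length $<6r$, so there are $O(r)$ admissible values of $j$ and thus $\Pr[{\cal E}_1(e,i)]\leq O(r)/\Delta'$. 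Actually a cleaner route: only the value $j=\depth(\nca(z,w))+\Theta(r)$ range matters, and since $\depth(\nca(z,w))$ is determined by $e$, there are $O(r)$ choices of $j$, each of straightness probability $\le1/\Delta'$.

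Next, conditioned on ${\cal E}_1(e,i,j)$ (so $S_i$ is $j$-straight), I would bound $\Pr[{\cal E}_2(e,i)\mid {\cal E}_1(e,i,j)]$, i.e.\ the probability that $e$ bends $S_i$. By the definition of bend, this requires an ancestor of $e$ to lie in the randomly chosen edge set $Y\subseteq Y(S_i)$ used in the $r$-evolution; since $e$ has at most one ancestor at each depth and the relevant depth range in the definition of $Y$ (namely $\depth(\cdot)-\depth(\nca)\in[r,6r)$) is compatible with exactly the constraint \eqref{eq:event_1}, the edge $e$ has a unique candidate ancestor $e'$ in $Y(S_i)$, which is included with probability $1/r$. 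Hence $\Pr[e \text{ bends } S_i \mid {\cal E}_1(e,i,j)] \le 1/r$. Combining, $\Pr[{\cal E}(e,i)] \le \sum_j \Pr[{\cal E}_1(e,i,j)]\cdot\frac1r \le O(r)\cdot\frac{1}{\Delta'}\cdot\frac1r = O(1/\Delta')$ — but this is too weak when summed over all $e\in E(P)$ and all $\delta$ scales. The fix is that, for fixed $i$, the edges $e\in E(P)$ that can possibly satisfy ${\cal E}_1(e,i)$ are limited: \eqref{eq:event_2} forces $h-j\le 2r$ and \eqref{eq:event_1} forces $j-\depth(\nca(z,w))\ge r$, so $\depth(\nca(z,w)) \le h - r + 2r$ is automatic, but more importantly once $S_i$ is $j$-straight the bend can occur at only $O(1)$ essentially distinct edges $e$ — or, summing differently, $\sum_{e}\Pr[{\cal E}(e,i)]$ telescopes against the total straightness budget.

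The cleanest accounting, which I would actually carry out: fix $i$ and $r=\Delta/3^i$. For the pair to be separated at scale $i$, some $S_i$ must be $j$-straight with $h-j\le 2r$ (by \eqref{eq:event_2}), and then a bend occurs. For a fixed $j$-straight cut $S_i$ (probability $\le 1/\Delta'$ by Lemma~\ref{lem:straight_upper}), the number of edges of $P$ whose ancestor lies at the relevant depth $j' := \depth(\nca)+\Theta(r)$ inside $\partial S_i\cap P_j$, i.e.\ contained in the combinatorial span of $x,y$, is at most $|E(P)|=L$, but each is hit by $Y$ with probability $1/r$, and we additionally need $h-j\le 2r$ which gives $O(r)$ choices of $j$; so
\[
\sum_{e\in E(P)}\Pr[{\cal E}(e,i)] \;\le\; \underbrace{O(r)}_{\#\ j}\cdot\underbrace{\tfrac{1}{\Delta'}}_{\text{straight}}\cdot\underbrace{\min\{L,\,O(r)\}\cdot\tfrac1r}_{\mathbb{E}[\#\text{ bending edges}]}.
\]
Here I would use $\min\{L,O(r)\}$: the number of edges of $P$ that can bend a $j$-straight cut at scale $r$ is $O(r)$ because bending edges are $\Theta(r)$-separated along $P_j$ (consecutive shift-edges in an $r$-evolution are spaced $\Omega(r)$ apart by the definition of the $(r,Y)$-shift — this is where Lemma~\ref{lem:E_top} and the grid-minor argument enter, ruling out too many nearby ancestors without $d_G(x,y)$ being large), and also $O(L)$ trivially. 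For scales with $r\le L$ this bound is $O(L)\cdot\frac{1}{\Delta'}\cdot\frac1r\cdot r$... — more carefully, $\sum_{e}\Pr[{\cal E}(e,i)]\le O(r/\Delta')\cdot\min\{L/r,O(1)\} = O(\min\{L,r\}/\Delta')$, and summing the geometric series $\sum_{i=1}^{\delta}\min\{L,\Delta/3^i\}/\Delta' = O(L/\Delta')$, since the terms with $\Delta/3^i\ge L$ contribute only $O(\log)$ many copies of... no — the terms with $\Delta/3^i \le L$ sum geometrically to $O(L/\Delta')$, and the terms with $\Delta/3^i > L$ each contribute $O(L/\Delta')$ but there are only $O(\log(\Delta/L))$ of them, which would give an extra log. \emph{The main obstacle} is precisely to shave this logarithm: one must argue that at coarse scales $r\gg L$ the pair $x,y$ is separated with probability $O(L/(\Delta' r))\cdot$(something summable), i.e.\ that a bend of a $j$-straight cut at a scale $r$ much larger than $L$ separates $x$ from $y$ only with probability $O(L/r)$ — because the shift moves an entire $\Theta(r)$-length arc of $\partial S_i$ down, and whether the resulting boundary separates $x$ from $y$ depends on $x,y$ landing on opposite sides of a single shift-edge, an event of probability $O(L/r)$ over the random offset $Y'$ (each of the $O(1)$ relevant shift-edges is placed independently with density $1/r$, so the probability one falls in the length-$L$ window between $x$ and $y$ is $O(L/r)$). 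With this refinement $\sum_{e}\Pr[{\cal E}(e,i)] \le O(L/(\Delta' \cdot \max\{1,?\}))$ and the full sum over $i$ is a convergent geometric series equal to $O(L/\Delta')$, completing the proof. I would also note $\Delta'=3\Delta$, so $O(L/\Delta') = O(d(x,y)/\Delta)$ as claimed in the statement (the statement writes $\Delta'$; either is fine).
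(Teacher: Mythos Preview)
Your setup is right: start from Lemma~\ref{lem:if_cut_then_bent} and bound $\sum_{e,i}\Pr[{\cal E}(e,i)]$. You also correctly compute that for a \emph{fixed} edge $e$ with $h'=\depth(\nca(e))$, conditions \eqref{eq:event_1} and \eqref{eq:event_2} force $r=\Theta(h-h')$, so only $O(1)$ scales are relevant, and that the per-edge contribution summed over $i$ and $j$ is $O(1/\Delta')$. But two things go wrong after that.

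First, the claim that ``consecutive shift-edges in an $r$-evolution are spaced $\Omega(r)$ apart'' is simply false: in the $r$-evolution the eligible edges of $\partial_E S$ are sampled \emph{independently} with probability $1/r$, with no spacing constraint, and there can be many eligible edges in a short stretch of $P_j$. The grid-minor argument (Lemma~\ref{lem:E_top}) does not give any such separation either; it only bounds $|E_{\text{top}}|\le L$.

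Second --- and this is the real gap --- your accounting collapses to $\sum_{e\in E(P)} O(1/\Delta')=O(|E(P)|/\Delta')$, and $|E(P)|$ (the length of the horizontal path in $P_h$ between $x$ and $y$) is \emph{not} bounded by $O(L)$ in a general pyramid: many edges of $P$ can have $\nca$ very close to depth $h$, making $|E_{\text{bottom}}|$ arbitrarily large relative to $L$. Your attempted fix at coarse scales (``probability $O(L/r)$ that a shift-edge lands in the window'') still yields $O(L/\Delta')$ \emph{per scale} after multiplying by the $O(r)$ admissible values of $j$, so the logarithm survives; and at fine scales the per-edge summation over $E_{\text{bottom}}$ is hopeless for the same reason.

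The paper resolves this by splitting $E(P)=E_{\text{top}}\cup E_{\text{bottom}}$ and treating the two pieces with \emph{different} arguments. For $E_{\text{top}}$ it uses exactly your per-edge bound $O(1/\Delta')$ together with Lemma~\ref{lem:E_top} ($|E_{\text{top}}|\le L$) to get $O(L/\Delta')$. For $E_{\text{bottom}}$ it abandons the edge-by-edge union bound entirely: if any bottom edge bends at some scale $i$, then necessarily $\Delta/3^i\le L/2$, and one tracks where $\partial S_\delta$ meets the ray $R_x$ --- this intersection point is forced to lie in a fixed subpath of $R_x$ of length $O(L)$, which by Lemma~\ref{lem:uniform_vertical} has probability $O(L/\Delta')$. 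This global argument is the missing idea in your plan.
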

\begin{proof}
Let ${\cal E}_{\text{top}}$ denote the event that there exists $e\in E_{\text{top}}$, and $i\in \{1,\ldots,\delta\}$, such that ${\cal E}(e,i)$ occurs.
Similarly, let ${\cal E}_{\text{bottom}}$ denote the event that there exists $e\in E_{\text{bottom}}$, and $i\in \{1,\ldots,\delta\}$, such that ${\cal E}(e,i)$ occurs.
By lemma \ref{lem:if_cut_then_bent} we have
\begin{align*}
\|f(x) - f(y)\|_1 &= \Pr[\mathbf{1}_{S_{\delta}}(x) \neq \mathbf{1}_{S_{\delta}}(x)]\\
 &\leq \Pr[{\cal E}_{\text{top}}] + \Pr[{\cal E}_{\text{bottom}}].
\end{align*}
Let us bound the two latter quantities separately.

We first bound $\Pr[{\cal E}_{\text{top}}]$.
Let $e=\{z,w\}\in E_{\text{top}}$, and $i\in \{1,\ldots,\delta\}$.
Let $h' = \depth(\nca(\{z,w\}))$.
Recall that by the definition of ${\cal E}_1(e,i,j)$, in order for ${\cal E}_1(e,i,j)$ to occur for some $j$, we must have by \eqref{eq:event_1} that $j-h' \leq 6\Delta / 3^i$, and by \eqref{eq:event_2} that $h-j \leq 2\Delta / 3^i$.
We therefore obtain that
\[
h-h' = h-j + j-h' = O(\Delta / 3^i).
\]
Note that $S_{i+1}$ is the $(Y,r)$-shift of $S_i$, for some $r = \Delta / 3^i$, and for random some $Y \subseteq \partial_E S_{i}$.
In order for the edge $e$ to bend $S_i$, its must be the case that its unique ancestor (if it exists) in $\partial_E S_i$ is chosen in $Y$.
Every edge in chosen in $Y$ with probability at most $1/r$.
Therefore, for any $j$, and for any $i$, we have
\[
\Pr[{\cal E}_2(e,i) | {\cal E}_1(e,i,j)] \leq 3^i / \Delta.
\]
Moreover, ${\cal E}_1(e,i,j)$ can occur only if $j\in \{h',\ldots,h\}$.
For each such value $j\in \{h',\ldots,h\}$, and for any $i$, we have by lemma \ref{lem:straight_upper} that 
\[
\Pr[{\cal E}_1(e,i,j)] = O(1/\Delta).
\]
To summarize, we have
\begin{align}
\Pr[{\cal E}_{\text{top}}] &\leq \sum_{e\in E_{\text{top}}} \sum_{i\in \{1,\ldots,\delta\}} \Pr[{\cal E}(e,i)] \notag \\
 &\leq \sum_{e\in E_{\text{top}}}  \sum_{j\in \{h',\ldots,h\}}  \sum_{i\in \{1,\ldots,\delta\}}  \Pr[{\cal E}_2(e,i) | {\cal E}_1(e,i,j)] \cdot \Pr[{\cal E}_1(e,i,j)] \notag \\
 &\leq \sum_{e\in E_{\text{top}}}  \sum_{j\in \{h',\ldots,h\}}  \sum_{i\in \{1,\ldots,\delta\}}  \frac{3^i}{\Delta} O(1/\Delta) \notag \\
 &\leq \sum_{e\in E_{\text{top}}}  \sum_{j\in \{h',\ldots,h\}}    O(1/(h-h')) \cdot  O(1/\Delta) \notag \\
 &\leq \sum_{e\in E_{\text{top}}}  O(1/\Delta) \notag \\
 &= O(|E_{\text{top}}|/\Delta) \notag \\
 &= O(L/\Delta'). \label{eq:distortion_upper1}
\end{align}

We next bound $\Pr[{\cal E}_{\text{bottom}}]$.
Let $\{1,\ldots,\delta\}$, $j\in \{1,\ldots,\Delta'\}$, $e\in E_{\text{bottom}}$, such that both ${\cal E}_1(e,i,j)$, and ${\cal E}_2(e,i)$ occur.
As above, let $e=\{z,w\}$, and $h'=\depth(\nca(z,w))$.
Then, we must have $h' \leq j \leq h$, which implies $j-h' \leq h-h' \leq L/2$.
Since $S_{i+1}$ is the $(r,Y)$-shift of $S_i$ for some $Y\subseteq E(P)$, with $r=\Delta/3^i$, we obtain that
$j-h' \in [r, 6 r)$, which implies
$3^i \geq 2 \Delta/L$.
Let $R_x$ be the ray containing $x$, and let $\chi$ be the unique vertex in the intersection of $R_x$ with $\partial S_i$.
Let also $\chi'$ be the unique vertex in the intersection of $R_x$ with $\partial S_{\delta}$.
For every $i'\geq i$, the intersection of $\partial S_{i'}$ with $R_x$ moves by at most $\Delta/3^{i'}$ along $R_x$, and therefore
 $d_T(\chi, \chi') < 2\Delta / 3^{i'} = O(L)$.
Since $\chi \in P_j$, and $j\in [h',h]$, it follows that $\depth(\chi)$ can take at most $h'-h+1$ different values.
Therefore, $\chi'$ can only lie inside a subpath $R_x' \subseteq R_x$ of length $O(h'-h)$.
Applying lemma \ref{lem:uniform_vertical}, we obtain
\begin{align}
\Pr[{\cal E}_{\text{bottom}}] &\leq \Pr[\chi' \in R_x'] \notag \\
 &\leq |V(R_x')| / \Delta' \notag \\
 &= O(h'-h) / \Delta' \notag \\
 &= O(L/\Delta'). \label{eq:distortion_upper2}
\end{align}
Combining \eqref{eq:distortion_upper1} \& \eqref{eq:distortion_upper2} we conclude that $\|f(x)-f(y)\|_1 = O(L/\Delta') = O(d(x,y)/\Delta')$, as required.
\end{proof}

We now bound the contraction of $f$.
For any $i\in \{1,\ldots,\delta\}$, let
\[
J_i = \left\{h-\frac{\Delta}{3^i}, \ldots, h - \frac{2}{3}\cdot \frac{\Delta}{3^i}\right\}.
\]

\begin{center}
\scalebox{0.55}{\includegraphics{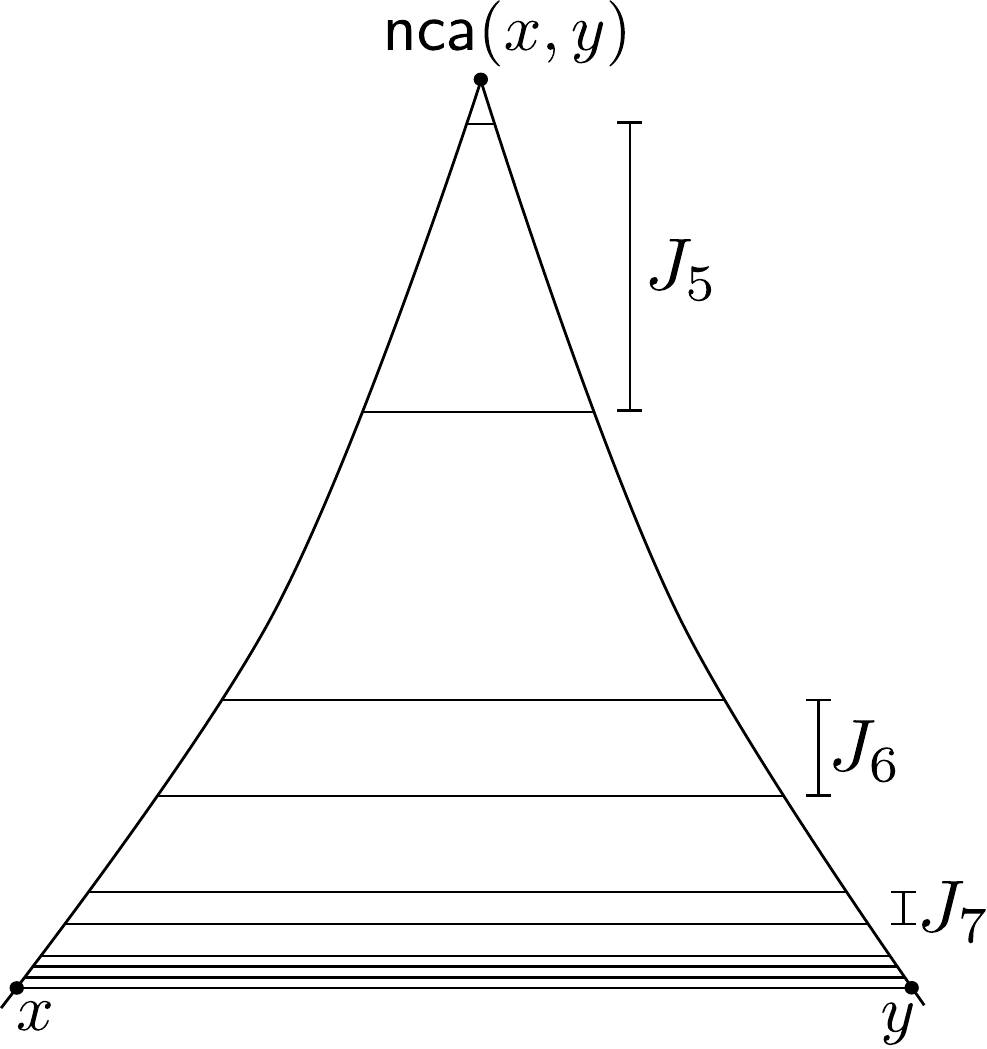}}
\end{center}

\begin{lemma}\label{lem:if_bend_then_cut}
Let $i\in \{1,\ldots,\delta\}$, $j\in J_i$, such that $S_i$ is $j$-straight.
Let $Y \subseteq E(G')$, such that $S_{i+1}$ is the $(\Delta / 3^i,Y)$-shift of $S_i$, where $|Y\cap E(P)|$ is odd.
Then, $\mathbf{1}_{S_{\delta}}(x) \neq \mathbf{1}_{S_{\delta}}(y)$.
\end{lemma}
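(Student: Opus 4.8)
The plan is to track the position where the evolving cut $S_i, S_{i+1}, \ldots, S_{\delta}$ crosses the rays $R_x$ and $R_y$ through $x$ and $y$, and to argue that the hypotheses force these two crossing depths to end up on opposite sides of the layer $P_h$, so that exactly one of $x,y$ lands in $S_{\delta}$. First I would fix notation: let $R_x$ (resp.\ $R_y$) be a ray through $x$ (resp.\ $y$), and for each $i' \in \{i, \ldots, \delta\}$ let $a_{i'}$ (resp.\ $b_{i'}$) be $\depth$ of the unique vertex in $R_x \cap \partial_V S_{i'}$ (resp.\ $R_y \cap \partial_V S_{i'}$); this is well defined since $S_{i'}$ is monotone, so each ray meets its boundary in exactly one vertex. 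Because $S_i$ is $j$-straight, we have $a_i = b_i = j$ and the whole of $\partial S_i \cap H$ lies on $P_j$. The conclusion $\mathbf 1_{S_\delta}(x) \neq \mathbf 1_{S_\delta}(y)$ is equivalent to: exactly one of $a_\delta \ge h$, $b_\delta \ge h$ holds (using the convention that $v$-side of the ray up to depth $a_\delta$ is inside $S_\delta$).

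The key step is the effect of the single $(\Delta/3^i, Y)$-shift from $S_i$ to $S_{i+1}$. Since $|Y \cap E(P)|$ is odd, and $\partial S_i \cap H$ sits entirely on $P_j$, the edges of $Y$ lying on the path $P$ between $x$ and $y$ partition the boundary segment $\partial_V S_i \cap V(H)$ into an odd number of sub-segments $V_1, \ldots, V_t$, alternately assigned to $V_{\text{odd}}$ and $V_{\text{even}}$; the two endpoints $x$-side and $y$-side of this segment therefore fall into \emph{different} parity classes (one in $V_{\text{odd}}$, the other in $V_{\text{even}}$). Hence whichever of the odd/even shift is taken, exactly one of $R_x$, $R_y$ has its boundary vertex pushed down by $\Delta/3^i$ (the shift adds $\widetilde N(u, \Delta/3^i)$ for $u$ in the chosen class), and the other stays fixed. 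So after the $(i\!+\!1)$-st step, $\{a_{i+1}, b_{i+1}\} = \{j, j + \Delta/3^i\}$. Since $j \in J_i$ means $h - \Delta/3^i \le j \le h - \frac{2}{3}\cdot\frac{\Delta}{3^i}$, we get $j < h \le j + \frac{2}{3}\cdot\frac{\Delta}{3^i} \le j + \Delta/3^i$, i.e.\ one of $a_{i+1}, b_{i+1}$ is $< h$ and the other is $\ge h$; moreover the $\ge h$ one exceeds $h$ by at least $\frac{1}{3}\cdot\frac{\Delta}{3^i}$ and the $<h$ one is below $h$ by at most $\Delta/3^i$ but — crucially — it equals $j \le h - \frac{2}{3}\cdot\frac{\Delta}{3^i}$.

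It remains to show the subsequent evolutions $S_{i+1} \rightsquigarrow \cdots \rightsquigarrow S_\delta$ cannot undo this separation. Each later step, being a $(\Delta/3^{i'}, \cdot)$-shift with $i' > i$, can only \emph{increase} a ray's boundary depth, and by at most $\Delta/3^{i'}$; summing the geometric tail, the total additional downward movement of either crossing point over steps $i+1, \ldots, \delta$ is at most $\sum_{i' > i} \Delta/3^{i'} < \frac{1}{2}\cdot\frac{\Delta}{3^i} < \frac{2}{3}\cdot\frac{\Delta}{3^i}$. Therefore the crossing point that was at depth $j \le h - \frac{2}{3}\cdot\frac{\Delta}{3^i}$ stays at depth $< h$ throughout, while the one that was already at depth $\ge h$ only moves further down. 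Consequently $a_\delta$ and $b_\delta$ remain on opposite sides of $h$, giving $\mathbf 1_{S_\delta}(x) \ne \mathbf 1_{S_\delta}(y)$. The main obstacle I anticipate is the bookkeeping in the parity argument — verifying that the odd/even labelling genuinely separates the two ends of the boundary segment through $H$, which requires checking that the relevant ancestor edge of some $e \in Y \cap E(P)$ sits on $\partial S_i$ in exactly the $j$-straight configuration (so that $j$-straightness is used essentially here), and handling the degenerate cases where $V_1$ or $V_{t}$ is a single vertex; once that is pinned down, the depth estimates are a routine geometric-series bound.
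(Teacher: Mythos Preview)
Your proposal is correct and matches the paper's proof: both use the odd parity of $|Y\cap E(P)|$ to conclude that exactly one of the rays through $x,y$ has its boundary shifted from depth $j$ to $j+\Delta/3^i\ge h$, then bound the total remaining drift by the geometric tail $\sum_{i'>i}\Delta/3^{i'}<\tfrac12\,\Delta/3^i<\tfrac23\,\Delta/3^i$ to keep the unshifted ray short of $h$. (Two small slips to clean up: $k$ odd cut edges produce $k{+}1$, an \emph{even}, number of sub-segments---what matters is that the first and last have opposite parity; and from $j\ge h-\Delta/3^i$ you only get $j+\Delta/3^i\ge h$, not the stronger margin $\ge h+\tfrac13\Delta/3^i$ you state, but that extra margin is never needed since shifts only increase depth.)
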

\begin{proof}
Since $j\in J_i$, it follows that $j \leq h-\Delta/3^i$.
Let $x'$ be the ancestor of $x$ with $d_G(x,x') = \Delta / 3^i$,
and let $y'$ be the ancestor of $y$ with $d_G(y,y') = \Delta / 3^i$.
Since $S_i$ is $j$-straight, it follows that $x',y'\in S_i$.
Suppose that $S_{i+1}$ is an odd shift of $S_i$; the case where $S_{i+1}$ is an even shift of $S_i$ is completely symmetric by exchanging $x$, and $y$.
Since $S_{i+1}$ is the odd $(\Delta / 3^i, Y)$-shift of $S_i$, and $|Y\cap E(P)|$ is odd, it follows that $x\in S_{i+1}$, and $y \notin S_{i+1}$.
Since $S_{\delta} \supset \ldots \supset S_{i+1}$, we obtain $x\in S_{\delta}$.
Let $W$ be the subpath of the ray containing $y$, and $y'$.
We have that for any $k>1$, the cut $S_{i+k}$ contains a prefix of $W$ of length at most $\Delta / 3^{i+1} + \ldots + \Delta / 3^{\delta} < 2 \Delta / 3^{i+1} \leq h-j = d_G(y,y')$, and therefore $y$ is not in $S_{\delta}$, concluding the proof.
\end{proof}

For any $t\in \{h',\ldots,h\}$, let
\[
E_t = \{\{z,w\} \in E(P) : \depth(\nca(z,w)) \leq t\}.
\]

\begin{lemma}\label{lem:t_star}
There exists $t^* \in \{h-L/2,\ldots,h-L/4\}$, such that $L/2\leq |E_{t^*}| \leq L$.
\end{lemma}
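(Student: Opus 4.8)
The plan is to study the function $t \mapsto |E_t|$ on the integer interval $\{h', \ldots, h\}$ and exploit the fact that it grows by bounded increments while spanning a large total range. First I would record the two endpoint values. At $t = h'$ we have $|E_{h'}| \le 1$: the only edge of $P$ whose nearest common ancestor lies at depth at most $h' = \depth(\nca(x,y))$ is the unique edge straddling the two subtrees hanging off $\nca(x,y)$ (and there may be none if $h' = h$, but then $L=0$ and the statement is vacuous, so assume $L \ge 1$). At the other end, $t = h$ gives $|E_h| = |E(P)| = L$, since every edge of $P$ has its nearest common ancestor at depth at most $h = \depth(x) = \depth(y)$.

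Next I would observe the key monotonicity-and-continuity property: $E_{t} \subseteq E_{t+1}$ for every $t$, so $|E_t|$ is non-decreasing, and moreover $|E_{t+1}| - |E_t| \le |E(P_{t+1} \cap H)| \le$ (number of edges of $P$ whose $\nca$ sits at depth exactly $t+1$). The crucial point is that this increment is at most $1$: if $\{z,w\}, \{z',w'\} \in E(P)$ both have $\nca$ at depth exactly $t+1$, then walking along $P$ from $x$ to $y$ the quantity $\depth(\nca(\cdot\,, \text{next vertex}))$ first decreases to its minimum $h'$ near the ``middle'' and then increases back, and by condition~(5) of the pyramid definition (the sibling-order consistency) it is unimodal along $P$; hence for each value $t+1$ there is at most one edge of $P$ realizing it. Therefore $|E_t|$ increases by steps of size at most $1$ as $t$ runs from $h'$ to $h$.

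Given a non-decreasing integer sequence that starts at a value $\le 1$, ends at $L$, and has all consecutive differences in $\{0,1\}$, it takes every integer value in $\{1, \ldots, L\}$. In particular it takes every value in $\{\lceil L/2 \rceil, \ldots, L\}$. So I would pick $t^*$ to be the \emph{largest} index in $\{h', \ldots, h\}$ with $|E_{t^*}| \le L$ — which is just $t^* = h$ — no: rather, pick $t^*$ to be an index with $|E_{t^*}| \in [L/2, L]$, which exists by the intermediate-value argument; then I must check the \emph{location} constraint $t^* \in \{h - L/2, \ldots, h - L/4\}$. For this I would use Lemma~\ref{lem:E_top}: the edges with $\nca$ at depth $\le h - L/2$ number at most $L$ — wait, that is $E_{\text{top}}$, and Lemma~\ref{lem:E_top} says $|E_{\text{top}}| \le L$, which is not by itself restrictive enough. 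Instead the right bound comes from the grid-minor argument inside Lemma~\ref{lem:E_top}: the edges with $\nca$ at depth $\le h - k$ would, together with the vertical rays above them, form a $(k+1) \times m$ grid minor where $m = |E_{h-k}|$, forcing $d_G(x,y) \ge \min(k, m)$-type bounds; since $d_G(x,y) = L$, having $|E_{h-k}| \ge L/2$ forces $k \le L/2$, i.e.\ $|E_{h - L/2}| < L/2$, and symmetrically a counting argument on the ``bottom'' portion forces $|E_{h-L/4}|$ to already exceed $L/2$ only if — hmm — more carefully: $|E_{\text{top}}| = |E_{h-L/2}|$ must be $< L$ strictly by the minor bound applied with $k = L/2$, but I actually want $< L/2$; applying the minor bound with the sharper constant from a $(L/2+1)\times(L/2)$ grid gives $d_G(x,y) > L/2$ unless $|E_{h-L/2}| \le L/2$. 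Combining: $|E_{h-L/2}| \le L/2$ and $|E_h| = L$, and $|E_t|$ climbs by unit steps over the $\{h-L/2, \ldots, h\}$ range of length $L/2$ to reach $L$, so it must already reach $L/2$ somewhere, and cannot reach it before $t = h - L/2$; by the unit-step property it reaches $L/2$ at the latest by $t = h - L/2 + L/2 = h$, but more usefully the first time it is $\ge L/2$ is some $t^*$ with $|E_{t^*}| \in \{L/2, L/2+1\} \subseteq [L/2, L]$ and $t^* \ge h - L/2$; the upper bound $t^* \le h - L/4$ follows because from $t = h - L/4$ to $t = h$ the function can rise by at most $L/4 < L/2$, so it is already $\ge L - L/4 = 3L/4 \ge L/2$ at $t = h - L/4$, hence $t^* \le h - L/4$.

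The main obstacle I anticipate is making the ``increments are at most $1$'' claim fully rigorous — i.e.\ proving that at most one edge of $P$ has its $\nca$ at any fixed depth — since this is exactly where pyramid condition~(5) (the planar, order-consistent structure) must be invoked, and one has to argue the unimodality of $\depth(\nca)$ along the path $P$ cleanly; and secondarily, extracting the precise constants $L/2$ and $L/4$ from the grid-minor comparison in Lemma~\ref{lem:E_top}, which requires being careful about whether the grid minor forces a lower bound of $k$, of $m$, or of $\min(k,m)$ on $d_G(x,y)$. Once those two points are pinned down, the conclusion is just the discrete intermediate value theorem applied on $\{h - L/2, \ldots, h\}$.
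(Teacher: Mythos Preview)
Your proof hinges on the claim that $\depth(\nca(z,w))$ is unimodal as $\{z,w\}$ ranges over the consecutive edges of $P$, so that at most one edge of $P$ has its nca at any given depth and hence $|E_{t+1}| - |E_t| \le 1$. This claim is false, and condition~(5) of the pyramid definition does not salvage it: condition~(5) only says the parent map $P_{i+1} \to P_i$ is order-preserving, which is far weaker than unimodality of nca-depths along a horizontal path. For a concrete counterexample take the pyramid whose skeleton is the complete binary tree of depth $4$. Along the bottom layer $P_4 = z_0 \cdots z_7$ the nca-depths of the seven consecutive pairs are $3,2,3,1,3,2,3$ --- not remotely unimodal --- and correspondingly $|E_1| = 1$, $|E_2| = 3$, $|E_3| = 7$, so the increments are $2$ and $4$, not $1$. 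In deeper complete binary trees the increments grow geometrically, and in general pyramids they are unbounded. Without the unit-step property your discrete intermediate-value argument collapses, and there is no way to repair it along these lines.

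A secondary gap: you assert $|E_h| = L$, but $E_h = E(P)$ and $|E(P)| = d_{P_h}(x,y)$ is the length of the path along layer $h$, which is $\ge d_G(x,y) = L$ and can be strictly larger when the $G$-geodesic dips into higher layers.

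The paper's argument is much shorter and needs no control on increments. It simply evaluates bounds at the two endpoints of the range. First, $E_{h-L/2} = E_{\text{top}}$ by definition, so Lemma~\ref{lem:E_top} gives $|E_{h-L/2}| \le L$ immediately --- you do not need a sharper grid-minor constant. Second, letting $x',y'$ be the ancestors of $x,y$ at depth $h-L/4$, one has
\[
|E_{h-L/4}| \;\ge\; d_G(x',y') \;\ge\; d_G(x,y) - d_G(x,x') - d_G(y,y') \;=\; L - L/4 - L/4 \;=\; L/2,
\]
where the first inequality holds because the edges of $E_{h-L/4}$ are exactly the steps at which the (order-preserving) projection of $P$ onto $P_{h-L/4}$ moves, and that projection is a walk from $x'$ to $y'$. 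With $|E_{h-L/2}| \le L$ and $|E_{h-L/4}| \ge L/2$ and monotonicity of $t\mapsto|E_t|$, the desired $t^*$ is found in $\{h-L/2,\ldots,h-L/4\}$.
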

\begin{proof}
By lemma \ref{lem:E_top} we have $|E_{h-L/2}| \leq L$.
Let $x'$ be the ancestor of $x$ with $d_G(x,x')=L/4$, and let $y'$ be the ancestor of $y$ with $d_G(y,y')=L/4$.
We have $|E_{h-L/4}| \geq d_G(x',y') \geq d_G(x,y)-d_G(x,x')-d_G(y,y') \geq L/2$, and the assertion follows.
\end{proof}

For any $i\in \{1,\ldots,\delta\}$, and for any $j\in \{1,\ldots,\Delta'\}$, 
let
\[
B_{i,j} = \bigcup_{t = j-6 \Delta / 3^i}^{j- \Delta/3^i} E_t
\]
Intuitively, the set $B_{i,j}$ contains all edges in $H$ that could possibly bend $S_i$, when $S_i$ is $j$-straight.

\begin{lemma}\label{lem:i_star}
There exists $i^*\in \{1,\ldots,\delta\}$, with $\Delta / 3^{i^*} = \Omega(L)$, and such that 
\begin{align}
\bigcup_{i = 1}^{i^*-1} \bigcup_{j\in J_i} B_{i,j} = O(L), \label{eq:i_star1}
\end{align}
and
\begin{align}
\bigcup_{i = 1}^{i^*} \bigcup_{j\in J_i} B_{i,j} = \Omega(L). \label{eq:i_star2}
\end{align}
\end{lemma}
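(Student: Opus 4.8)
The plan is to collapse the double union defining the quantity in the lemma into a single monotone quantity, and then merely locate the scale at which it crosses $\Theta(L)$. The key point is that the sets $E_t$ are nested: $E_s \subseteq E_t$ whenever $s \le t$ (and $E_t = \emptyset$ for $t < h'$, since no edge of $P$ has an ancestor of depth below $h'$). Hence $B_{i,j} = \bigcup_{t=j-6\Delta/3^i}^{j-\Delta/3^i} E_t$ is just its top term $E_{j-\Delta/3^i}$; the largest index hit as $j$ runs over $J_i$ is $h - \tfrac{5}{3}\Delta/3^i$; and since $i \mapsto h - \tfrac{5}{3}\Delta/3^i$ is increasing, the union over $i \le k$ telescopes to its top term as well:
\[
\bigcup_{i=1}^{k}\bigcup_{j\in J_i} B_{i,j} \;=\; E_{\,h-\frac{5}{3}\Delta/3^{k}} .
\]
Writing $U_k$ for the cardinality of the left-hand side (with $U_0 := 0$, the empty union), we get $U_0 \le U_1 \le \cdots \le U_\delta$, and $U_k = |E_{t_k}|$ with $t_k := h - \tfrac{5}{3}\Delta/3^k$ increasing in $k$. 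The lemma is then asking for the first index at which this non-decreasing integer sequence reaches $\Theta(L)$.

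Next I would bracket the range of $U$. Since $3^\delta \ge \Delta$ we have $\Delta/3^\delta \le 1$, so $t_\delta \ge h - \tfrac{5}{3} \ge h-2$ (up to rounding) and thus $U_\delta \ge |E_{h-2}|$; and as soon as $L \ge 8$ we have $h-2 \ge h - L/4 \ge t^*$, where $t^* \in \{h-L/2,\dots,h-L/4\}$ is the level produced by Lemma~\ref{lem:t_star}, so by monotonicity $U_\delta \ge |E_{t^*}| \ge L/2$. (The remaining regime $L = O(1)$ --- equivalently $\depth(x)-\depth(\nca(x,y)) = O(1)$ --- is handled the same way with the threshold $1$ instead of $L/2$, or is vacuous.) We will also use the easy bound $L = d_G(x,y) \le 2\big(\depth(x)-\depth(\nca(x,y))\big) < 2\Delta$, i.e.\ $\Delta > L/2$.

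Now set $i^* := \min\{\, i \in \{1,\dots,\delta\} : U_i \ge L/2 \,\}$, which is well defined by the previous step. Then $U_{i^*} \ge L/2 = \Omega(L)$, which is \eqref{eq:i_star2}, and $U_{i^*-1} < L/2 = O(L)$ by minimality (for $i^* = 1$ this is $U_0 = 0$), which is \eqref{eq:i_star1}. It remains to show $\Delta/3^{i^*} = \Omega(L)$, which is the only step with real content. Suppose for contradiction $\Delta/3^{i^*} < L/20$; then $\Delta/3^{i^*-1} < 3L/20$. If $i^* = 1$ this forces $\Delta < 3L/20$, contradicting $\Delta > L/2$. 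If $i^* \ge 2$, then $t_{i^*-1} = h - \tfrac{5}{3}\Delta/3^{i^*-1} > h - L/4 \ge t^*$, so by monotonicity of $E_\bullet$ and Lemma~\ref{lem:t_star}, $U_{i^*-1} = |E_{t_{i^*-1}}| \ge |E_{t^*}| \ge L/2$, contradicting the minimality of $i^*$. Hence $\Delta/3^{i^*} \ge L/20 = \Omega(L)$, completing the proof.

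The main obstacle is precisely this last implication: one has to convert ``the accumulated family of edges that could bend the cut at scales below $i^*$ has size only $O(L)$'' into ``scale $i^*$ is still coarse compared with $L$''. The mechanism is that a fine scale $i^*-1$ would push the level $t_{i^*-1}$ past $t^*$, a level at which Lemma~\ref{lem:t_star} already guarantees $\Omega(L)$ edges --- so the earlier union could not have stayed below $L/2$. The only additional care needed is bookkeeping with the rounding of the non-integer quantities $\Delta/3^i$ and the degenerate small-$L$ case, neither of which affects the argument.
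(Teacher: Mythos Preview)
Your proof is correct and follows essentially the same route as the paper: both collapse the double union via the nestedness of the sets $E_t$ and invoke Lemma~\ref{lem:t_star} to pin down the correct scale. The only cosmetic difference is that the paper defines $i^*$ by a scale condition relative to $t^*$ (so that $\Delta/3^{i^*} \geq h - t^* \geq L/4$ is immediate), whereas you define $i^*$ as the first index at which $|E_{t_k}|$ reaches $L/2$ and then recover the scale bound via your short contradiction argument using $t_{i^*-1} > t^*$.
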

\begin{proof}
It is straightforward to verify that 
\[
\bigcup_{i = 1}^{\delta} \bigcup_{j\in J_i} B_{i,j} = E(P).
\]
Let $t^*$ be as in lemma \ref{lem:t_star}.
It follows that by setting 
\[
i^* = \min\{i\in \{1,\ldots,\delta\} : h-\Delta/3^i \leq t^*\},
\]
we have $E_{t^*} \subseteq \bigcup_{i = 1}^{i^*} \bigcup_{j\in J_i} B_{i,j}$, and therefore 
conditions \eqref{eq:i_star1} \& \eqref{eq:i_star2} are satisfied.
Moreover, we have $\Delta / 3^{i^*} \geq h-t^* \geq L/4 = \Omega(L)$, as required.
\end{proof}

\begin{lemma}\label{lem:straight_lower}
Let $i\in \{1,\ldots,i^*\}$, and $j\in J_i$.
Then, $\Pr[S_i \text{ is } j\text{-straight}] = \Omega(1/\Delta')$. 
\end{lemma}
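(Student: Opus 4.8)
The plan is to prove a matching lower bound to Lemma~\ref{lem:straight_upper} by tracking, through the inductive construction of the measures $\mu_0,\ldots,\mu_{i}$, the event that the boundary $\partial S_i$ restricted to the ``wedge'' $H$ stays confined to a single layer $P_j$. Concretely, I would argue by induction on $i$ that for every $i\in\{1,\ldots,i^*\}$ and every $j$ in the relevant range there is a specific straight cut $S_i$ with $\partial S_i\cap H\subseteq P_j$ that occurs with probability $\Omega(1/\Delta')$. The base case $i=0$ is immediate: $\mu_0$ is uniform over the $\Delta'$ ball-cuts $X_1,\ldots,X_{\Delta'}$, each of which is automatically $j$-straight (its boundary lies entirely on $P_j$), so $\Pr[S_0\text{ is }j\text{-straight}]=1/\Delta'$ for every $j$.

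For the inductive step, suppose $S_{i-1}$ is $j'$-straight with probability $\Omega(1/\Delta')$ for a suitable ancestor layer $j'$ of $j$ (namely $j' = j - \Delta/3^{i-1}$ or thereabouts, the layer from which the $(\Delta/3^{i-1})$-evolution can push the boundary down to $P_j$). I would then lower-bound the probability that the $r$-evolution of $S_{i-1}$, with $r=\Delta/3^{i-1}$, produces a cut whose boundary inside $H$ again lies on a single layer. The key observation is that in the $r$-evolution we first select $Y'\subseteq Y$ by including each edge of $Y$ independently with probability $1/r$, and then apply an odd or even $(r,Y')$-shift. If \emph{no} edge of $Y$ that has an ancestor meeting $H$ is selected — equivalently, if the portion of $\partial S_{i-1}$ inside the wedge $H$ is not bent — then $\partial S_i\cap H$ is simply the image of $\partial S_{i-1}\cap H$ pushed uniformly down by $r$ along the skeleton, hence still confined to one layer $P_j$. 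The number of candidate edges of $Y$ whose ancestors sit in $H$ is $O(L)$ (this is exactly the content of Lemma~\ref{lem:E_top} together with the definition of $B_{i,j}$ and the choice of $i^*$ in Lemma~\ref{lem:i_star}, which guarantees $\Delta/3^{i}=\Omega(L)$ so that $r = \Delta/3^{i-1} = \Omega(L)$ dominates the number of relevant edges). Hence the probability that none of these $O(L)$ edges is selected is at least $(1-1/r)^{O(L)} = \Omega(1)$ since $r=\Omega(L)$. Conditioned on this, and on $S_{i-1}$ being $j'$-straight, $S_i$ is $j$-straight; multiplying, $\Pr[S_i\text{ is }j\text{-straight}]\geq \Omega(1)\cdot\Omega(1/\Delta') = \Omega(1/\Delta')$.

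I would also need to check two bookkeeping points. First, that for $j\in J_i$ the target ancestor layer $j'$ is itself in the admissible range for the induction hypothesis at step $i-1$ (this follows since $J_i$ and $J_{i-1}$ are geometrically separated by a factor of $3$, so shifting $J_i$ up by $r = \Delta/3^{i-1}$ lands inside the range handled at the previous step); and second, that the ``not bent in $H$'' event and the event ``$S_{i-1}$ is $j'$-straight'' interact correctly — the latter is a statement about $S_{i-1}$, the former about the fresh randomness of the evolution, so they are independent given $S_{i-1}$, and the bound $(1-1/r)^{O(L)}$ holds pointwise for every $j'$-straight $S_{i-1}$. The main obstacle I anticipate is making the phrase ``the boundary inside $H$ is pushed down by a single layer and stays on one layer'' fully rigorous: one must verify that a $(r,Y')$-shift with $Y'$ avoiding the $H$-relevant edges does not create new boundary vertices inside $H$ at mismatched depths — this uses monotonicity of the cut and the fact that within $H$ the skeleton restricted to descendants of $\partial S_{i-1}\cap H$ is a union of vertex-disjoint paths, so a uniform downward shift by $r$ is well-defined and layer-preserving there. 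Once that structural claim is in hand, the probability estimate is the short computation above.
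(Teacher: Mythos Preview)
Your inductive scheme has a real gap. When $S_{i-1}$ is $j'$-straight and no edge inside $H$ is selected for $Y'$, the whole boundary segment in $H$ lies in a single block $V_m$ of the odd/even partition, and the coin flip then either pushes that block down by $r$ \emph{or leaves it in place}, each with probability $1/2$. So the sentence ``$\partial S_i\cap H$ is simply the image \ldots pushed uniformly down by $r$'' is only true half the time; from the single predecessor $j'=j-r$ you reach layer $j$ with conditional probability $\tfrac12\cdot(1-1/r)^{O(L)}$, not $(1-1/r)^{O(L)}$. Tracking only that one predecessor, the induction loses a factor $1/2$ at every step, and after $i\le i^*=\Theta(\log_3(\Delta/L))$ steps the accumulated $(1/2)^{i}$ is not $\Omega(1)$. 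Your bookkeeping check also fails arithmetically: for $j\in J_i$ one gets $j-r\le h-\tfrac{11}{3}\,\Delta/3^{i}<h-3\Delta/3^{i}=\min J_{i-1}$, so $j'$ lies strictly \emph{below} $J_{i-1}$ and the inductive hypothesis, stated only for $J_{i-1}$, does not cover it.

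The paper avoids both issues with a global coupling rather than a step-by-step induction. It runs a modified process $\sigma'$ identical to the real one except that edges in $E^*=\bigcup_{\iota<i^*}\bigcup_{j'\in J_\iota}B_{\iota,j'}$ are forbidden from entering $Y'$. In $\sigma'$ the boundary inside $H$ is never bent and hence moves as a rigid piece, staying or dropping by $\Delta/3^{\iota}$ with probability $1/2$ each; this is exactly the recursion of Lemma~\ref{lem:uniform_vertical}, yielding $\Pr[S_i'\text{ is }j\text{-straight}]=1/\Delta'$ \emph{exactly}, with no constant erosion. One then couples $\sigma$ with $\sigma'$ by sharing all randomness except the $E^*$-decisions; the coupling fails only if some $E^*$-edge is ever selected. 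Since $|E^*|=O(L)$ (Lemma~\ref{lem:i_star}), each such edge is eligible $O(1)$ times, and each selection has probability at most $3^{i^*}/\Delta=O(1/L)$, the coupling succeeds with a fixed probability $q>0$ independent of $i$, giving $\Pr[S_i\text{ is }j\text{-straight}]\ge q/\Delta'$. The essential point you are missing is that the ``no bend'' cost must be paid once globally, not once per step; your argument could be repaired by summing over \emph{both} predecessors $j$ and $j-r$ at every step (which kills the $1/2$) and then bounding the telescoping product $\prod_{\iota<i}(1-3^{\iota}/\Delta)^{O(L)}$ directly --- but that computation is precisely the coupling bound.
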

\begin{proof}
We use a coupling argument.
Let $E^* = \bigcup_{i = 1}^{i^*-1} \bigcup_{j\in J_i} B_{i,j}$.
Our embedding uses a random process $\sigma$ of sampling $S_1,\ldots,S_i$.
We define a modified random process $\sigma'$ for sampling monotone cuts $S_1',\ldots,S_i'$ as follows.
The process $\sigma'$ uses the same algorithm as $\sigma$, with the only difference that when taking $S_{\iota+1}'$ to be the $(r,Y)$-shift of $S_{\iota}'$, where $Y$ is chosen from a set $E'=E'(S_{\iota}')$ independently with probability $1/r$, we instead chose $Y'$ from the set $E'\setminus E^*$.
In other words, we execute the same algorithm, but we always ignore the edges in $E^*$ when computing shifts.

Since we ignore the edges in $E^*$, 
the final cut $S_i'$ is always either $j'$-straight, for some $j'\leq j$, or it contains $P_j$.
Arguing as in the proof of lemma \ref{lem:uniform_vertical}, it is straightforward to verify that every vertex $u\in V(G)$ appears in $\partial_V S_i$ with probability $1/\Delta'$.
Therefore, $\Pr[S_i' \text{ is } j\text{-straight}] = 1/\Delta'$.

We can now re-define the original random process $\sigma$ as follows.
At every step, when computing a $(r,Y)$-shift of some cut $S$, we first pick $Y'$ according to the choices made in $\sigma'$, and then we augment $Y'$ to a new set $Y$ by adding independently, and with probability $1/r$ every edge from $E^*$ that was ignored in $\sigma'$ at this step.
It is straightforward to check that this definition results in the same process $\sigma$.

Let us say that the coupling of these two processes \emph{fails} if at some step the processes $\sigma$, and $\sigma'$ deviate.
Recall that we obtain a set $S_{\iota+1}$ by taking a $(r,Y)$-shift of a set $S_{\iota}$, for some $r=\Delta/3^{\iota}$.
Every edge in $E^*$ is eligible for appearing in such a set $Y$ at most $O(1)$ times during the process.
Moreover, since $\iota \leq i^*$, we have that every eligible edge is chosen in a set $Y$ with probability $1/r = O(3^{i^*}/\Delta) = O(1/L)$.
It follows that for any execution of $\sigma'$, the coupling does not fail with at least some constant probability $q>0$.
Thus, $\Pr[S_i \text{ is } j\text{-straight}] \geq q\cdot \Pr[S_i'\text{ is } j\text{-straight}] = \Omega(1/\Delta')$, as required.
\end{proof}

We will use the following simple fact about the parity of the sum of independent Bernoulli random variables.

\begin{proposition}\label{prop:odd}
There exists $c>0$, such that the following holds.
Let $p,k>0$, and let $X_1,\ldots,X_{k}$ be a collection of independent Bernoulli random variables, such that for any $i\in \{1,\ldots,k\}$, we have $\Pr[X_i=1] = p$.
Then, $\Pr\left[\sum_{i=1}^{k} X_i \text{ is odd} \right] > \min\{1/4, cpk\}$.
\end{proposition}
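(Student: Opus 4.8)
The plan is to prove Proposition \ref{prop:odd} by analyzing the probability generating function / characteristic function of the parity, and splitting into two regimes according to the size of $pk$. First I would recall the standard identity
\[
\Pr\left[\sum_{i=1}^k X_i \text{ is odd}\right] = \frac{1 - \mathbb{E}\left[(-1)^{\sum_i X_i}\right]}{2} = \frac{1 - (1-2p)^k}{2},
\]
which follows because $\mathbb{E}[(-1)^{X_i}] = (1-p) - p = 1-2p$ and the $X_i$ are independent. So it suffices to show $1 - (1-2p)^k > \min\{1/2, 2cpk\}$ for a suitable absolute constant $c>0$; note we may assume $p \le 1/2$ since otherwise $(1-2p)$ is negative but $|1-2p| \le 1$ still, and in fact $p>1/2$ only helps (I'll handle the sign of $1-2p$ carefully, writing $q = 1-2p \in [-1,1]$, so the quantity is $(1-q^k)/2$ and $q^k \le |q|^k \le (1-2\min\{p,1/2\})^{\,k}$ when... actually cleanest is to note $1-q^k \ge 1-|q|^k \ge 1-(1-2p')^k$ where $p' = \min\{p,1/2\}$; if $p \ge 1/2$ then $p' = 1/2$ and $1-q^k \ge 1$, done trivially).

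So assume $p \le 1/2$, hence $q = 1-2p \in [0,1)$. The two regimes: if $pk \ge 1$, I would show $q^k = (1-2p)^k \le e^{-2pk} \le e^{-2} < 1/2$, so $(1-q^k)/2 > 1/4 \ge \min\{1/4, cpk\}$ — wait, we need $> 1/4$ strictly when $cpk \ge 1/4$; since $e^{-2} < 1/2$ gives $(1-q^k)/2 > (1-e^{-2})/2 > 1/4$, this regime is fine. If $pk \le 1$, I would use the elementary inequality $(1-t)^k \le 1 - \frac{1}{2} tk$ valid for $t \in [0,1]$ and $tk \le 1$ (this is a standard convexity/Bernoulli-type bound: actually $(1-t)^k \le \frac{1}{1+tk} \le 1 - \frac{tk}{1+tk} \le 1 - \frac{tk}{2}$ when $tk \le 1$), applied with $t = 2p$, giving $q^k \le 1 - pk$, hence $(1-q^k)/2 \ge pk/2$. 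Then choosing $c = 1/4$ (so $2cpk = pk/2$... let me recompute: we want $(1-q^k)/2 > cpk$, and we have $(1-q^k)/2 \ge pk/2$; but we need strict inequality, and also $pk$ could be $0$ — but then $k=0$ or $p=0$ and the statement $\Pr[\cdot \text{ odd}] > 0$ would fail, so implicitly $k \ge 1$ and $p > 0$; with $k\ge 1, p>0$ the inequality $(1-t)^k \le 1-tk/2$ is strict for $t>0$, so we get strict inequality). Taking $c = 1/4$ works: $\min\{1/4, pk/4\} \le pk/4 < (1-q^k)/2$ in the low regime, and $\min\{1/4,pk/4\} = 1/4 < (1-q^k)/2$ in the high regime.

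The main obstacle is genuinely minor here — it is just getting the constant and the strictness of the inequalities right, and correctly handling the degenerate sign issue when $p > 1/2$ (where $1-2p < 0$ and $(1-2p)^k$ oscillates in sign but stays bounded by $1$ in absolute value) as well as the trivial edge cases $k = 0$ or $p \in \{0,1\}$. I would state at the outset that we may assume $k \ge 1$ and $0 < p \le 1/2$ after these reductions, then the two-regime argument above finishes it. Everything reduces to the clean formula $\Pr[\text{odd}] = (1-(1-2p)^k)/2$ plus two one-line estimates, so no heavy computation is needed.
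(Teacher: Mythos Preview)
Your approach is correct; the paper states this proposition as a ``simple fact'' without proof, so there is nothing to compare against. One small bookkeeping slip: when you apply the inequality $(1-t)^k \le 1 - tk/2$ (valid for $tk \le 1$) with $t = 2p$, the hypothesis becomes $2pk \le 1$, not $pk \le 1$, so your case split should be at $pk = 1/2$ rather than $pk = 1$ (or just absorb the factor into the constant $c$); either way the argument goes through unchanged.
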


\begin{lemma}[Contraction of horizontal pairs]\label{lem:horizontal_contraction}
Let $x,y\in V(G)$, such that $\depth(x)=\depth(y)$.
Then, $\|f(x)-f(y)\|_1 = \Omega( d(x,y) / \Delta')$.
\end{lemma}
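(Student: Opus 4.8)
The plan is to show that with constant probability the embedding $f$ separates $x$ and $y$, by identifying a single scale $i^*$ at which a $j$-straight cut, for a good choice of $j$, gets ``bent'' in a way that — via Lemma~\ref{lem:if_bend_then_cut} — guarantees $\mathbf{1}_{S_\delta}(x)\neq\mathbf{1}_{S_\delta}(y)$. Concretely, I would take $i^*$ as in Lemma~\ref{lem:i_star}, so that $\Delta/3^{i^*}=\Omega(L)$ and the set of edges of $P$ that can bend cuts at scales $1,\dots,i^*-1$ has size $O(L)$, while the edges relevant at scale $i^*$ bring the total up to $\Omega(L)$. Fix $j\in J_{i^*}$ appropriately (using Lemma~\ref{lem:t_star}'s $t^*$, so that $B_{i^*,j}\cap E(P)$ has $\Theta(L)$ edges with nearest-common-ancestor depth in the right window). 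By Lemma~\ref{lem:straight_lower}, $\Pr[S_{i^*}\text{ is }j\text{-straight}]=\Omega(1/\Delta')$.

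The core step is to condition on $S_{i^*}$ being $j$-straight and argue that, with constant probability, the random $(\Delta/3^{i^*},Y)$-shift producing $S_{i^*+1}$ has $|Y\cap E(P)|$ odd. Here $Y$ is obtained by including each eligible ancestor edge independently with probability $1/r$ where $r=\Delta/3^{i^*}=\Theta(L)$; the number of eligible edges in $E(P)$ is $k=\Theta(L)$ by the choice of $j$ and $i^*$, so the expected number included is $k/r=\Theta(1)$. Applying Proposition~\ref{prop:odd} with $p=1/r$ and this $k$ gives $\Pr[|Y\cap E(P)|\text{ odd}]>\min\{1/4,cpk\}=\Omega(1)$. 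Then Lemma~\ref{lem:if_bend_then_cut} applies (note $j\in J_{i^*}$ is exactly its hypothesis), yielding $\mathbf{1}_{S_\delta}(x)\neq\mathbf{1}_{S_\delta}(y)$ deterministically once this event occurs. Multiplying the two independent-in-the-right-sense probabilities, $\|f(x)-f(y)\|_1 = \Pr[\mathbf{1}_{S_\delta}(x)\neq\mathbf{1}_{S_\delta}(y)] \geq \Omega(1/\Delta')\cdot\Omega(1) = \Omega(L/\Delta')$ after accounting for the fact that $j$ ranges over $\Theta(L)$ values of $J_{i^*}$ giving disjoint contributions — i.e. summing $\Omega(1/\Delta')$ over the $\Theta(L)$ good choices of $j$, since the events ``$S_{i^*}$ is $j$-straight'' are mutually exclusive for distinct $j$.

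The main obstacle I anticipate is bookkeeping the independence and disjointness carefully: the event that $S_{i^*}$ is $j$-straight, the randomness of the shift $Y$ at step $i^*$, and the later shifts all need to be combined so that the $\Omega(1/\Delta')$ per-$j$ bound really adds up over $j\in J_{i^*}$ without double counting, and so that ``$|Y\cap E(P)|$ odd'' is genuinely independent of the $j$-straightness event (it is, since the shift randomness at step $i^*$ is fresh, conditioned on $S_{i^*}$). One must also check that all $\Theta(L)$ candidate edges counted by $B_{i^*,j}$ actually have an ancestor edge present in $\partial_E S_{i^*}$ when $S_{i^*}$ is $j$-straight — this is where $j$-straightness is used, since then $\partial S_{i^*}\cap H\subseteq P_j$, so each edge of $E_{t}\subseteq E(P)$ with $t$ in the window $[j-6\Delta/3^{i^*},\,j-\Delta/3^{i^*}]$ has its $P_j$-level ancestor lying in $\partial_E S_{i^*}$, hence is eligible for $Y$ in the $\Delta/3^{i^*}$-evolution. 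Once these dependency issues are handled, the estimate follows by linearity as above, and combined with Lemma~\ref{lem:horizontal_expansion} establishes that horizontal pairs are embedded with distortion $O(1)$.
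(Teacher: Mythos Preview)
Your strategy matches the paper's exactly in its ingredients: use Lemma~\ref{lem:straight_lower} for the probability that $S_i$ is $j$-straight, Proposition~\ref{prop:odd} for the conditional probability that the bend set has odd intersection with $E(P)$, and Lemma~\ref{lem:if_bend_then_cut} to conclude separation. The difference is that the paper sums over \emph{all} scales $i\in\{1,\ldots,i^*\}$ and all $j\in J_i$, whereas you restrict to the single scale $i^*$ and sum only over $j\in J_{i^*}$.

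This restriction is where the gap lies. For your conditional probability $\Pr[|Y\cap E(P)|\text{ odd}\mid S_{i^*}\text{ is }j\text{-straight}]=\Omega(1)$ you need $k\cdot 3^{i^*}/\Delta=\Omega(1)$, i.e.\ $k=\Omega(\Delta/3^{i^*})$ eligible edges. But an edge $e\in E(P)$ with $\depth(\nca(e))=h''$ is eligible at scale $i^*$, level $j\in J_{i^*}$, only when $j-h''\in[\Delta/3^{i^*},6\Delta/3^{i^*})$; combined with $h-j\in[\tfrac{2}{3}\Delta/3^{i^*},\Delta/3^{i^*}]$ this forces $h-h''=\Theta(\Delta/3^{i^*})$. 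Neither Lemma~\ref{lem:t_star} nor Lemma~\ref{lem:i_star} gives you $\Omega(L)$ edges with nca depth in that specific window: Lemma~\ref{lem:t_star} produces $\Theta(L)$ edges with nca depth \emph{at most} $t^*$, which may all sit far above the window (near $h'$), and Lemma~\ref{lem:i_star} only bounds the union $\bigcup_{i\le i^*}\bigcup_{j\in J_i}B_{i,j}$, not the slice at $i^*$ alone. So the claim ``$B_{i^*,j}\cap E(P)$ has $\Theta(L)$ edges'' is unjustified and can fail.

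The paper closes this gap by the multi-scale double counting: each edge $e$ with $\depth(\nca(e))=h''$ appears in $B_{i',j'}$ for $\Omega(h-h'')$ pairs $(i',j')$ with $i'\le i^*$, and for each such pair $h-j'=\Theta(h-h'')$, so $e$ contributes $\Omega(1)$ to $\sum_{i\le i^*}\sum_{j\in J_i}|B_{i,j}|/(h-j)$. Since the union contains $\Omega(L)$ edges by Lemma~\ref{lem:i_star}, the whole sum is $\Omega(L)$, which is exactly what replaces your per-$j$ constant. In short, your outline is correct once you drop the restriction to $i=i^*$ and sum over all scales up to $i^*$, swapping the per-$j$ ``$pk=\Theta(1)$'' claim for the paper's per-edge accounting.
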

\begin{proof}
For any $i\in \{1,\ldots,i^*\}$, and for any $j\in J_i$, let ${\cal W}_1(i,j)$ denote the event that $S_i$ is $j$-straight.
By lemma \ref{lem:straight_lower} we have
\[
\Pr[{\cal W}_1(i,j)] = \Omega(1/\Delta').
\]
Let ${\cal W}_2(i,j)$ denote the event that there exists $Y \subseteq E(G')$, such that $|Y \cap E(P)|$ is odd, and $S_{i+1}$ is the $(\Delta / 3^i,Y)$-shift of $S_i$.
Conditioned on the event that $S_i$ is $j$-straight, we have that $S_{i+1}$ is the $(\Delta / 3^i, Y)$-shift of $S_i$, for some random $Y\subset E(G')$, with $Y\cap E(P) \subseteq B_{i,j}$, where every element of $B_{i,j}$ is chosen independently with probability $p=3^i/\Delta$.
Applying Proposition \ref{prop:odd} we deduce that
\[
\Pr[{\cal W}_2(i,j) | {\cal W}_1(i,j)] = \Omega(\min\{1/4,|B_{i,j}| 3^i / \Delta\}) = \Omega(\min\{1/4,|E_j| / (h-j)\}).
\]

Consider some $e=\{z,w\}\in B_{i,j}$, with $\depth(\nca(z,w))=h''$.
The edge $e$ appears in $B_{i',j'}$, for at least $\Omega(h-h'')$ different values of $j'\in \bigcup_{i'=1}^{i^*} J_{i'}$.
Arguing as in the proof of lemma \ref{lem:horizontal_expansion}, we can show that for every such value $j'$, we have $h-j' = \Theta(h-h'')$.
Therefore,
\[
\sum_{i\in\{1,\ldots,i^*\}} \sum_{j \in J_i} |B_{i,j}| / (h-j) = \Omega\left( \left| \bigcup_{i\in\{1,\ldots,i^*\}} \bigcup_{j \in J_i} B_{i,j} \right|   \right) = \Omega(L)
\]

Combining the above with lemma \ref{lem:if_bend_then_cut}, we obtain
\begin{align*}
\|f(x) - f(y)\|_1 &= \Pr[\mathbf{1}_{S_{\delta}}(x) \neq \mathbf{1}_{S_{\delta}}(x)]\\
 &\geq \sum_{i\in\{1,\ldots,i^*\}} \sum_{j \in J_i} \Pr[{\cal W}_2(i,j) | {\cal W}_1(i,j)] \cdot \Pr[{\cal W}_1(i,j)]\\
 &= \Omega(1/\Delta') \min\left\{ 1/4, \sum_{i\in\{1,\ldots,\delta\}} \sum_{j \in J_i} |B_{i,j}| / (h-j) \right\}\\
 &= \Omega(L/\Delta')\\
 &= \Omega(d_G(x,y)/\Delta'),
\end{align*}
as required.
\end{proof}

\subsection{Distortion of general pairs of points}

\begin{lemma}[Embedding pyramids into $L_1$]\label{lem:pyramids_L_1}
There exists a universal constant $c > 1$, such that 
every pyramid graph admits an embedding into $L_1$ with distortion at most $c$.
\end{lemma}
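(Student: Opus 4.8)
The plan is to assemble Lemma~\ref{lem:pyramids_L_1} from the pieces developed in Sections~\ref{sec:embedding}--\ref{sec:distortion}. Given a pyramid $G$ with basepoint $v$ and depth $\Delta$, we have already constructed the embedding $g = f \oplus f_0$ on the enlarged pyramid $G'$, where $f$ is induced by $\mu_\delta$ and $f_0$ by $\mu_0$. It suffices to prove that $g$, restricted to the vertex set of the original copy of $G$ inside $G'$, has distortion bounded by a universal constant, since an embedding of $V(G)$ with constant distortion is exactly what the lemma asserts. So the proof is a bookkeeping argument: (i) handle vertical pairs (both endpoints on a common ray), (ii) handle horizontal pairs (equal depth), and (iii) reduce an arbitrary pair to these two cases, in each case checking both the non-contraction/Lipschitz direction (expansion) and the non-expansion/co-Lipschitz direction (contraction).

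First I would dispense with the vertical pairs. For $x,y$ on a common ray, Lemma~\ref{lem:vertical_pairs} gives $\|f(x)-f(y)\|_1 = d_G(x,y)/\Delta'$ exactly, so $f$ alone already embeds vertical pairs isometrically up to the global scaling factor $1/\Delta'$; the $f_0$ coordinate only adds a non-negative amount, and since $f_0$ is also built from monotone cuts it contributes $O(d_G(x,y)/\Delta')$ on vertical pairs as well (each ray crosses each cut $X_j$ at most once, so $\|f_0(x)-f_0(y)\|_1 = d_G(x,y)/\Delta'$ too). Hence vertical pairs are distorted by a bounded factor. For horizontal pairs $x,y$ with $\depth(x)=\depth(y)$, Lemma~\ref{lem:horizontal_expansion} gives $\|f(x)-f(y)\|_1 = O(d_G(x,y)/\Delta')$ and Lemma~\ref{lem:horizontal_contraction} gives $\|f(x)-f(y)\|_1 = \Omega(d_G(x,y)/\Delta')$, so after multiplying $g$ by $\Delta'$ these pairs too are distorted by a universal constant.

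Next I would reduce the general case. Given arbitrary $x,y \in V(G)$, let $z = \nca(x,y)$ and let $x',y'$ be the ancestors of $x,y$ respectively at depth $\depth(z) + \min(\depth(x),\depth(y)) - \depth(z)$, i.e.\ bring the deeper of the two up to the depth of the shallower one along its ray; concretely, assume $\depth(x) \le \depth(y)$ and let $y^\sharp$ be the ancestor of $y$ with $\depth(y^\sharp) = \depth(x)$. Then $(x, y^\sharp)$ is a horizontal pair and $(y^\sharp, y)$ is a vertical pair, and in a pyramid $d_G(x,y) = \Theta(d_G(x,y^\sharp) + d_G(y^\sharp,y))$ — the upper bound is the triangle inequality and the lower bound holds because any $x$--$y$ geodesic must descend at least $\depth(y)-\depth(x)$ levels, paying for the vertical part, while the horizontal separation between the two rays at depth $\depth(x)$ is a lower bound for the rest (using condition (5) in the pyramid definition, which forces ancestors to preserve the $\prec$-order, so the horizontal distance between rays is non-increasing as we go up). Applying the triangle inequality for $\|g(x)-g(y)\|_1$ together with the horizontal and vertical bounds then yields $\|g(x)-g(y)\|_1 = \Theta(d_G(x,y)/\Delta')$ for all pairs. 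Rescaling $g$ by $\Delta'$ gives the claimed embedding with distortion at most some universal constant $c$, completing the proof.

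I expect the main obstacle to be the reduction step, specifically proving the two-sided estimate $d_G(x,y) = \Theta(d_G(x,y^\sharp) + d_G(y^\sharp,y))$ for general pairs in a pyramid; the upper bound is immediate, but the lower bound requires carefully using pyramid conditions (2) and (5) to argue that a geodesic cannot ``cheat'' by combining horizontal and vertical movement more efficiently than doing each separately up to constants — essentially a statement that pyramids are ``coarsely product-like'' along the ray direction and the layer direction. The distortion estimates for horizontal pairs (Lemmas~\ref{lem:horizontal_expansion} and \ref{lem:horizontal_contraction}) are already established, so the remaining work is genuinely just combining them, but the geometric lemma about general distances in pyramids is where the care is needed; everything else is routine bookkeeping with the triangle inequality and the observation that adding the non-negative $f_0$-coordinates cannot decrease distances and, on the pairs where $f$ might under-separate, $f_0$ provides the needed lower bound.
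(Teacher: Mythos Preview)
Your proposal is correct and follows essentially the same route as the paper: decompose an arbitrary pair via the ancestor of the deeper point at the depth of the shallower one, use Lemma~\ref{lem:vertical_pairs} on the vertical leg and Lemmas~\ref{lem:horizontal_expansion}--\ref{lem:horizontal_contraction} on the horizontal leg, and let $f_0$ supply the contraction bound when the vertical leg dominates. One point to tighten: the sentence ``applying the triangle inequality \ldots\ yields $\Theta(d_G(x,y)/\Delta')$'' is fine for the upper bound but not for the lower bound---there the paper (and implicitly your last paragraph) does a threshold case split: if $d_G(y^\sharp,y) \ge \gamma\, d_G(x,y)$ use $f_0$ directly, and otherwise use the \emph{reverse} triangle inequality $\|f(x)-f(y)\|_1 \ge \|f(x)-f(y^\sharp)\|_1 - \|f(y^\sharp)-f(y)\|_1$ with $\gamma$ chosen small enough (depending on the horizontal constants $\alpha,\beta$) that the difference stays positive. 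Your geometric lemma $d_G(x,y)=\Theta\bigl(d_G(x,y^\sharp)+d_G(y^\sharp,y)\bigr)$ is true---project any $x$--$y$ path to depth $\le \depth(x)$ using condition~(5) to get $d_G(x,y^\sharp)\le d_G(x,y)$---but the paper sidesteps stating it by folding the needed inequalities directly into the case analysis.
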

\begin{proof}
We will show that the embedding $g = f \oplus f_0$ has constant distortion on $G$.
Let $x,y\in V(G)$.
Assume w.l.o.g.~that $\depth(x) \geq \depth(y)$.
Let $R_x$ be the ray containing $x$, and let $x'$ be the unique vertex in $R_x$, with $\depth(x')=\depth(y)$.
By lemmas \ref{lem:horizontal_expansion} \& \ref{lem:horizontal_contraction} we have that there exist universal constants $\alpha > \beta > 0$, such that for any
\begin{align}
\beta d_G(x',y)/\Delta &\leq \|f(x')-f(y)\|_1 \leq \alpha d_G(x',y)/\Delta \label{eq:gen1}
\end{align}
Note that 
\begin{align}
d_G(x,x')=\depth(x)-\depth(x') = \depth(x)-\depth(y) \geq d_G(x,y). \label{eq:gen2}
\end{align}
Thus, we have
\begin{align}
\|f(x) - f(y)\|_1 &\leq \|f(x)-f(x')\|_1 + \|f(x')-f(y)\|_1 \label{eq:gen_exp1} \\
 &\leq d_G(x,y)/\Delta + \alpha d_G(x',y)/\Delta \label{eq:gen_exp2} \\
 &\leq d_G(x,y)/\Delta + \alpha d_G(x',x)/\Delta + \alpha d_G(x,y)/\Delta \notag \\
 &= (\alpha+1) d_G(x,x')/\Delta + \alpha d_G(x,y)/\Delta \notag \\
 &\leq (\alpha+1) d_G(x,y)/\Delta + \alpha d_G(x,y)/\Delta \label{eq:gen_exp3} \\
 &= (2\alpha+1) d_G(x,y)/\Delta, \label{eq:f_general_expansion}
\end{align}
where \eqref{eq:gen_exp1} follows by the triangle inequality, \eqref{eq:gen_exp2} by lemma \ref{lem:vertical_pairs} \& \eqref{eq:gen1}, and \eqref{eq:gen_exp3} by \eqref{eq:gen2}.
By \eqref{eq:f_general_expansion} we have
\begin{align}
\|g(x) - g(y)\|_1 &= \|f(x)-f(y)\|_1 + \|f_0(x)-f_0(y)\|_1 \notag \\
 &\leq (2\alpha + 1) d_G(x,y)/\Delta + d_G(x,y)/\Delta \notag \\
 &= (2\alpha + 2) d_G(x,y)/\Delta. \label{eq:g_expansion}
\end{align}
This bounds the expansion of $g$.
It remains to bound the contraction of $g$.

Let $\gamma = \frac{\beta}{4(2\alpha+1)}$.
Assume first that $d_G(x,y) \geq \gamma d_G(x,y)$.
We have 
\begin{align}
\|g(x)-g(y)\|_1 &\geq \|f_0(x) - f_0(y)\|_1 \notag \\
 &= d_G(x',y)/\Delta \notag \\
 &\geq \gamma d_G(x,y)/\Delta \label{eq:exp_g_1}
\end{align}
Next, assume that $d_G(x,y) < \gamma d_G(x,y)$.
We have
\begin{align}
\|g(x)-g(y)\|_1 &\geq \|f(x) - f(y)\|_1 \notag \\
 &\geq \|f(x') - f(y)\|_1 - \|f(x) - f(x')\|_1 \notag \\
 &\geq \beta d_G(x',y)/\Delta - (2\alpha+1) d_G(x,x') / \Delta \label{eq:ggg1}\\
 &> (1-\gamma)\beta d_G(x,y)/\Delta - \gamma (2\alpha+1) d_G(x,y) / \Delta \notag\\
 &> \frac{1}{2} \beta d_G(x,y) / \Delta \label{eq:ggg2}
\end{align}
where \eqref{eq:ggg1} follows by \eqref{eq:gen1} \& \eqref{eq:f_general_expansion}.
Combining \eqref{eq:exp_g_1} \& \eqref{eq:ggg2}, we obtain that for all $x,y\in V(G)$
\begin{align}
\|g(x) - g(y)\|_1 &\geq \frac{\beta}{4(2\alpha+1)} d_G(x,y) / \Delta. \label{eq:ggg3}
\end{align}
From \eqref{eq:g_expansion} \& \eqref{eq:ggg3} we conclude that the distortion of $g$ is at most $4(2\alpha+1)(2\alpha+2)/\beta = O(1)$, concluding the proof.
\end{proof}

\subsection{Proof of the main result}

Combining the above results, we can now prove our main theorem.

\begin{proof}{Proof of theorem \ref{thm:main}}
Let $(X,d)$ be a planar metric of non-positive curvature.
Using lemma \ref{lem:funnels}, the metric $(X,d)$ admits an  embedding into some funnel $G$ with distortion $c_1=O(1)$.
Using lemma \ref{lem:pyramids} we can find a stochastic embedding of $G$ into a distribution ${\cal F}$ over pyramids with distortion $c_2=O(1)$.
By lemma \ref{lem:pyramids_L_1} every pyramid in the support of ${\cal F}$ admits an embedding into $L_1$ with distortion $c_3=O(1)$.
Combining with lemma \ref{lem:stochastic_L_1} we obtain that $G$ admits an embedding into $L_1$ with distortion $c_2c_3$.
Therefore $(X,d)$ admits an embedding into $L_1$ with distortion $\gamma=c_1c_2c_3 = O(1)$, concluding the proof.
\end{proof}

\paragraph{Acknowledgements}
The author thanks James R.~Lee for sharing numerous insights on the planar embedding conjecture, and for explaining some of the previous work on embeddings of hyperbolic spaces.

\bibliography{bibfile,pathwidth,trees,treewidth}
\bibliographystyle{alpha}

\end{document}